 \definecolor{orange}{RGB}{230,170,120}
  \definecolor{green}{RGB}{120,200,120}
\def\addlegendimage{\csname pgfplots@addlegendimage\endcsname}
\newcounter{algsubstate}
\renewcommand{\thealgsubstate}{\alph{algsubstate}}
 \theoremstyle{plain}
 \newtheorem{thm}{Theorem}[section]
 \newtheorem{lem}[thm]{Lemma}
 \newtheorem{prop}[thm]{Proposition}
 \newtheorem{cor}[thm]{Corollary}
\def\mystrutwidth{2.5\baselineskip}
\def\mystrutdepth{1\baselineskip}
\def\mystrut{\rule[-\mystrutdepth]{0ex}{\mystrutwidth}}
\newcommand\bngo[1]{\kern-\fboxrule\fbox{\mystrut\makebox[\mystrutwidth]{#1}}}%
 \theoremstyle{definition}
 \newtheorem{exmp}{Example}[section]
  \newtheorem{ass}{Assumption}[section]
 \theoremstyle{definition}
 \newtheorem{rem}{Remark}
 \def\BState{\State\hskip-\ALG@thistlm}
\def\spacingset#1{\renewcommand{\baselinestretch}%
{#1}\small\normalsize} \spacingset{1}
\definecolor{coquelicot}{rgb}{1.0, 0.22, 0.0}
\title{ Experimental Design under Network Interference \footnote{First Version: March, 2020. Previous versions are available at https://arxiv.org/abs/2003.08421. I am grateful to the editor, associate editor and referees for constructive comments. I thank Jelena Bradic, Graham Elliott, James Fowler, Brian Karrer, Karthik Muralidharan, Yixiao Sun, and Kaspar W\"{u}thrich for helpful comments and discussion. I acknowledge support from the Harvard Griffin Fund in Economics and NSF Grant SES 2447088. All mistakes are mine. 
} }
\author{ Davide Viviano \footnote{Department of Economics, Harvard University. Correspondence: dviviano@fas.harvard.edu.}  }
\date{This version: \today }
\begin{document}
\maketitle

\begin{abstract}
%In ELI, we discuss the problem of the optimal design of experiments under network interference. We allow for a possibly fully connected network and a general class of estimands, which encompasses average treatment and average spillover effects, as well as estimands obtained from interactions of the two. We propose a design which minimizes the worst-case variance of the estimators over the treatment assignments and possibly also the identity of the participants. Asymptotic properties valid under local dependence of potential outcomes and arbitrary dependence of treatment assignments are derived for parametric and non-parametric estimators. Our method can be implemented for units connected in spatial dimensions, social networks or within clusters.
This paper studies how to design two-wave experiments in the presence of spillovers for precise inference on treatment effects. We consider units connected through a single network, local dependence among individuals, and a general class of estimands encompassing average treatment and average spillover effects. We introduce a statistical framework for designing two-wave experiments with networks, where the researcher optimizes over participants and treatment assignments to minimize the variance of the estimators of interest, using a first-wave (pilot) experiment to estimate the variance. We derive guarantees for inference on treatment effects and regret guarantees on the variance obtained from the proposed design mechanism. Our results illustrate the existence of a trade-off in the choice of the pilot study and provide sufficient conditions on the pilot's size to achieve near-optimal precision. 
 Simulations using simulated and real-world networks illustrate the advantages of the method. 
\end{abstract}

\noindent%
{\it Keywords:} Experimental Design, Spillovers, Two-wave experimentation, Causal Inference. \\
\vfill
\newpage 
%\printbibliography

\section{Introduction} 
 \onehalfspacing

%In recent years, experiments have become increasingly popular in many areas of social sciences as a methodological toolbox for detecting causal effects of treatments or policies. 
%In experimental studies, units are often connected on spatial or social dimensions. 

This paper studies the design of experiments for inference on treatment effects under network interference. Network interference induces (i) spillovers across units and (ii) statistical dependence. Our goal is to obtain precise estimates of common measures of treatment effects, such as direct, spillover, and overall effects. 

We consider a setting in which individuals are connected in a single network and interact locally (i.e., with neighbors).\footnote{This assumption is known as \emph{local interference} \citep{manski2013identification} and can be tested, for instance, using \citet{athey2018exact}. It is often assumed in practice \citep{egger2019general,dupas2014short,miguel2004worms,bhattacharya2013estimating,duflo2011peer} and studied in theoretical analyses \citep{forastiere2016identification,leung2019treatment,sinclair2012detecting}.}
Unlike typical clustered or saturation-design settings \citep[e.g.,][]{baird2018optimal}, independent clusters are not necessarily available. Instead, researchers observe a single network; they run a pilot (first-wave experiment) to estimate outcome variances and covariances and then optimally select participants and treatments in the main (second-wave) experiment.\footnote{Pilot studies are common practice; see, e.g., \citet{karlan2018failing,karlan2008credit,dellavigna2018predicting}.}
Relevant applications include online and field experiments in which interference naturally occurs \citep[e.g.,][]{karrer2021network,muralidharan2017general}.

The experiment, which we call ``Experiment under Local Interference'' (ELI), is designed to recover one or more estimands, including (i) the overall effect, (ii) the direct effect, and (iii) spillover effects. For example, in a cash transfer program \citep{barrera2011improving}, one may be interested in effects on recipients (direct effects), on non-recipients living near recipients (spillovers), and on their sum (overall effects). For each estimand, we consider a single user-specified estimator linear in observed outcomes; the choice of estimator may arise from a researcher’s spillover model \citep[e.g.,][]{muralidharan2017general,kreindler2023optimal,egger2019general}. 

We rely on two conditions common in economic applications: (i) interference and dependence are local within the network; and (ii) effects may be heterogeneous in observed and low-dimensional network summary statistics.\footnote{Examples consistent with these conditions include models of spillovers in public programs \citep{muralidharan2017general}, cash transfer programs \citep{egger2019general}, health interventions \citep{dupas2014short}, and educational programs \citep{duflo2011peer}.}

We propose the following protocol: (1) we select a small subsample and conduct a pilot study; (2) using pilot information, we then choose participants and treatment assignments in the main experiment; finally, (3) we collect outcomes for participating units. Under this protocol, we  develop a statistical framework for the design and inference of a two-wave experiment (pilot and main) under network interference. 

As a first step, we show that the main experiment is unconfounded if pilot units and their neighbors are excluded from the main experiment; otherwise, this does not need to hold. This restriction creates a design trade-off: a larger pilot yields more precise variance estimates (useful for the second wave) but also imposes stricter constraints on the main experiment. As a result, we should select a pilot that is sufficiently “well separated’’ from (i.e., shares few neighbors with) the rest of the network. We embed this problem in a min-cut optimization that can be solved using off-the-shelf algorithms. We then select participants and treatment assignments in the main experiment to minimize the estimated variance of the target estimator(s).

We derive theoretical guarantees that inform the selection of the pilot. In particular, we establish the rate at which the variance of the two-wave design approaches that of an “oracle’’ design (regret); the oracle selects participants and assignments to minimize the true variance without a pilot and without additional constraints on the main experiment. The regret converges to zero at a rate governed by the inverse of the pilot size and by the ratio of the pilot to the main-experiment size. This, in turn, yields a regret-minimizing pilot size as a function of the main-experiment size and provides intuitive rules of thumb for choosing the pilot. A key step in the proof is to derive lower bounds for the oracle objective under stricter constraints on the experimenter’s decision space.

The optimization problem naturally induces nontrivial dependence among treatment assignments. Motivated by this, we derive asymptotic properties of the estimator under the proposed design, conditioning on the realized assignments. This, in turn, motivates our approach of minimizing the variance of the estimator to achieve precise model-based inference. In a series of extensions, we broaden the framework to (i) incorporate randomization to enable design-based inference and (ii) accommodate partially observed networks.

We conclude with simulation results. The proposed method significantly outperforms state-of-the-art competitors for estimating overall, spillover, and direct effects, especially in the presence of heteroskedasticity and nonzero covariances.

This paper connects to a recent literature in statistics and econometrics that studies experimental design under interference without using pilot information for inference on treatment effects. We show that incorporating a pilot can substantially improve precision. Relevant references include clustered experiments \citep{eckles2017design, taylor2018randomized, ugander2013graph} and saturation designs \citep{baird2018optimal, basse2016analyzing, pouget2018dealing}, which typically assume clustered observations. Additional related work includes \cite{basse2018model}, who assume Gaussian outcomes and no spillover effects; \cite{wager2019experimenting}, who study sequential randomization for optimal pricing under global interference (without focusing on inference for treatment effects); and \cite{kang2016peer}, who analyze encouragement designs (without variance-optimal design). \citet{basse2018limitations} discuss limits of design-based causal inference under interference, and \citet{jagadeesan2017designs, sussman2017elements} design experiments for direct effects only, whereas we consider a broader class of estimands, including overall and spillover effects. \citet{viviano2020policy} studies policy inference and welfare under unobserved networks and clusters. To our knowledge, none of these papers study variance-optimal two-wave designs.

We also relate to a large literature on experimental design in the i.i.d.\ batch setting, including one-stage procedures \citep{harshaw2019balancing, kasy2016experimenters, kallus2018optimal, barrios2014optimal} and two-stage procedures \citep{bai2019optimality, tabord2018stratification}. However, these works do not address network interference. More broadly, we connect to the literature on treatment effects with interference \citep{aronow2017estimating, hudgens2008toward, forastiere2016identification, manski2013identification, leung2019treatment, vazquez2017identification, athey2018exact, goldsmith2013social, savje2017average, ogburn2017causal, kitagawa2020should, viviano2019policy}, which focuses on identification and inference rather than variance-optimal experiments.

%The remainder of the paper is organized as follows. Section~\ref{sec:framework} introduces the problem. Section~\ref{sec:two_wave} presents the design mechanism. Section~\ref{sec:asymptotic} develops the theoretical guarantees. Section \ref{sec:extension_app} Section~\ref{sec:numerics} reports numerical results. Section~\ref{sec:conclusions} concludes. The Appendix contains extensions and proofs.

\section{Setup} \label{sec:framework}

In this section, we discuss the setup, model, and estimands.

We first introduce necessary notation. We consider $N$ units connected by a binary, symmetric adjacency matrix $\mathbf{A}$, with $\mathbf{A}_{i,j} \in \{0,1\}$. We denote $\mathcal{N}_i = \{j: \mathbf{A}_{i,j} = 1\}$ by the set of neighbors of unit $i$. The adjacency matrix is observed by the researcher. The researcher conducts two experiments: a pilot and the main experiment.
For each unit $i \in \{1, \dots, N\}$, we denote
\[
R_i = 1\Big\{i \text{ is in the main experiment}\Big\}, \quad P_i = 1\Big\{i \text{ is in the pilot experiment}\Big\},
\]
as the participation indicators in the main and pilot experiments, respectively. We let $\sum_{i=1}^N R_i \le \bar{n}$ and $\sum_{i=1}^N P_i \le \bar{m}$ for some pre-specified $\bar{n}$ and $\bar{m}$, which encode constraints on the number of participants in the main experiment and the pilot. For expositional convenience, we assume that $\bar{n}$ is proportional to $N$, i.e., $\bar{n} \propto N$.\footnote{This condition can be relaxed, e.g., by assuming that $\bar{n} \propto N^{c}$ for some constant $c < 1$. In this case, our guarantees continue to hold provided that the conditions on the maximum degree in Assumption~\ref{ass:sparsity} below hold with $\bar{n}$ in lieu of $N$.} We define $n := \sum_{i=1}^N R_i$ as the number of individuals in the main experiment.

Each unit $i$ is associated with an outcome, pre-treatment observables, and a binary assignment, defined as $
(Y_i, T_i, D_i),$
respectively. Here, $T_i$ may depend on $\mathbf{A}$ and covariates (e.g., $T_i = |\mathcal{N}_i|$). Importantly, we assume that $T_i \in \mathcal{T}$, where $\mathcal{T}$ is a discrete set. The discrete support of $T_i$ will play a prominent role when we consider difference-in-means estimators in Example~\ref{exmp:diff_means} below.

We define $R_{\mathcal{N}_i} = \big(R_j\big)_{j \in \mathcal{N}_i}$ and $D_{\mathcal{N}_i} = \big(D_j\big)_{j \in \mathcal{N}_i}$ as the vectors of selection indicators and treatments of the neighbors of individual $i$, and
\[
\small
\begin{aligned}
\mathbf{R} = \big(R_1, \cdots, R_N\big), \quad
\mathbf{D}^R = \big\{ D_i : R_i = 1 \text{ or } R_{j} = 1 \text{ for some } j \in \mathcal{N}_i \big\},
\end{aligned}
\]
as the vector of selection indicators and the vector of treatment assignments for participants and their neighbors, respectively. Similarly, $\mathbf{P} = (P_1, \cdots, P_N)$, $\mathbf{T} = (T_1, \cdots, T_N)$, $\mathbf{D} = (D_1, \cdots, D_N)$, and $\mathbf{1} = (1, \cdots, 1) \in \mathbb{R}^N$.
Throughout our discussion, we fix $\mathbf{A}$ and $\mathbf{T}$ (i.e., $\mathbf{A}$ and $\mathbf{T}$ are non-random), unless otherwise specified. We postulate nonreversible treatments; i.e., for pilot units with $D_i = 1$, the treatment status cannot be changed.

%\footnote{Fixing $\mathbf{A}$ is equivalent to non-random exposure mappings as in \cite{aronow2017estimating}.} 

%Throughout our discussion we define $\{1, \cdots, n\}:= \{i: R_i = 1\}$ the set of all participants, and $[\tilde{n}]$ the set of all participants and their neighbors, formally $\{1, \cdots, n\} \cup \{\cup_j N_j: j \in \{1, \cdots, n\}\}$. We denote $\tilde{n}$ the size of such a set. Each individual is associated with individual observables $T_i \in \mathcal{T}$, which may also depend on network information. Here, $\mathbf{T}$ denotes the vector of such characteristics for all participants, and $\mathcal{T}$ is assumed to be a countable space. 

\subsection{Outcome model and dependence}

We let $Y_i(\mathbf{d})$ denote the potential outcome as a function of the treatment assignments $\mathbf{d} \in \{0,1\}^N$, with $Y_i = Y_i(\mathbf{D})$.

\begin{ass}[Potential outcomes]\label{ass:model}
Assume that for all $i \in \{1, \cdots, N\}$, for a known function $g_i: \{0,1\}^{|\mathcal{N}_i|} \to \mathcal{G}$ defined as the exposure mapping and measurable with respect to $\mathbf{A}$ and $\mathbf{T}$, we have
\begin{equation}\label{eqn:first}
\small
\begin{aligned}
Y_i(\mathbf{d}) \;=\; r\Big(\mathbf{d}_i,\, g_i(\mathbf{d}_{\mathcal{N}_i}),\, T_i,\, \varepsilon_i(\mathbf{d})\Big), 
\qquad 
\varepsilon_i(\mathbf{d}) \mid \mathbf{A}, \mathbf{T} \sim \mathcal{P}, 
\qquad \forall \mathbf{d} \in \{0,1\}^N,
\end{aligned}
\end{equation}
where $r(\cdot)$ and $\mathcal{P}$ are possibly unknown, and $\varepsilon_i(\mathbf{d}) = \varepsilon_i(\mathbf{d}')$ for all $\mathbf{d}, \mathbf{d}' \in \{0,1\}^N$. 

Assume in addition that $\mathcal{G}$ is a discrete set. 
\end{ass}

Assumption~\ref{ass:model} states that each individual’s outcome depends only on their neighbors’ treatment assignments through a known function $g_i$. Here, $g_i$ is the exposure mapping \citep{aronow2017estimating} and can be an arbitrary function (with discrete support) of $\mathbf{A}$, $\mathbf{T}$, and the neighbors’ treatments. In addition, once we condition on the exposure mapping, the network affects the outcome variable through arbitrary observables $T_i$. Finally, since $\varepsilon_i(\mathbf{d})$ is constant in $\mathbf{d}$, we write the unobservable simply as $\varepsilon_i$, omitting its argument.

Our framework encompasses several examples of interest, including exposure mappings that depend on the number or share of treated neighbors, whether at least one neighbor is treated, or interactions between the number of treated neighbors and observable characteristics of the neighbors. Assumption~\ref{ass:model} is consistent with local interference assumptions often documented in practice \citep[e.g.,][]{cai2015social} or studied in theoretical analyses \citep[e.g.,][]{leung2019treatment}. Local interference is testable \citep{athey2018exact}. Throughout the rest of our discussion, we denote
\begin{equation}
\mathbb{E}\big[r\big(d, s, l, \varepsilon_i\big)\big] \;=\; m(d,s,l),
\end{equation}
the expectation of the potential outcome evaluated at individual treatment $d$, exposure $s$, and individual-level covariate $l$.

\begin{exmp}\label{exmp:mainex}
\citet{sinclair2012detecting} study spillover effects for political decisions within households. The authors propose a model of the form
\begin{equation}
\small
\begin{aligned}
Y_i \;=\; \beta_0 + \beta_1 D_i 
+ \beta_2 \mathbf{1}\!\Big\{ \textstyle\sum_{j \in \mathcal{N}_i} D_j \ge 1 \Big\}
+ \beta_3 \mathbf{1}\!\Big\{ \textstyle\sum_{j \in \mathcal{N}_i} D_j \ge |\mathcal{N}_i|/2 \Big\}
+ \beta_4 \mathbf{1}\!\Big\{ \textstyle\sum_{j \in \mathcal{N}_i} D_j = |\mathcal{N}_i| \Big\}
+ \varepsilon_i,
\end{aligned}
\end{equation}
where $\mathcal{N}_i$ denotes the set of members in the same household as individual $i$. The model satisfies Assumption~\ref{ass:model} with $T_i = |\mathcal{N}_i|$ and $g_i(\mathbf{d}_{\mathcal{N}_i}) = \sum_{k \in \mathcal{N}_i} \mathbf{d}_k$. \qed
\end{exmp}

\begin{exmp}\label{exmp:2}
Consider the following equation \citep[as in][]{muralidharan2017general}:
\[
Y_i \;=\; \beta_0 + \beta_1 D_i + \beta_2 \frac{\sum_{k \in \mathcal{N}_i} D_k}{\max\{|\mathcal{N}_i|, 1\}} + \varepsilon_i.
\]
Then Assumption~\ref{ass:model} holds with $g_i(\mathbf{d}_{\mathcal{N}_i}) = \sum_{k \in \mathcal{N}_i} \mathbf{d}_k$ and $T_i = |\mathcal{N}_i|$. \qed
\end{exmp}

We allow $(T_i, D_i)$ to exhibit arbitrary dependence. Instead, we impose restrictions on the dependence structure of the unobservables $\varepsilon_i$.

\begin{ass}[One-degree dependence]\label{ass:modelb}
Assume that for all $i \in \{1, \ldots, N\}$,
\[
\small
\begin{aligned}
&\varepsilon_i \ \perp\!\!\!\perp\ \{\varepsilon_j\}_{j \notin \mathcal{N}_i \cup \{i\}} \,\big|\, \mathbf{A}, \mathbf{T}, \\[0.25em]
&(\varepsilon_i, \varepsilon_j) \ \stackrel{d}{=}\ (\varepsilon_{i'}, \varepsilon_{j'}) \,\big|\, \mathbf{A}, \mathbf{T}
\quad \text{for all } (i,j,i',j') \text{ such that } i \in \mathcal{N}_j,\ i' \in \mathcal{N}_{j'},\ T_i = T_{i'},\ T_j = T_{j'}.
\end{aligned}
\]
\end{ass}

The first condition in Assumption~\ref{ass:modelb} states that unobservables of non-adjacent units are independent, while $\varepsilon_i$ and $\varepsilon_{\mathcal{N}_i}$ may be statistically dependent. The second condition states that pairs of neighbors share the same joint distribution whenever their $(T_i, T_j)$ match. One-degree dependence is imposed for expository convenience and can be relaxed to higher-order dependence up to degree $M$, as discussed below.

\begin{rem}[Higher-order dependence]
Extensions to higher-order dependence of degree $M$, formally presented in Appendix~\ref{sec:higher_order_dependence}, read as follows:
\[
\varepsilon_i \ \perp\!\!\!\perp\ \{\varepsilon_j\}_{j \notin \cup_{u=1}^M \mathcal{N}_i^u \cup \{i\}} \,\big|\, \mathbf{A}, \mathbf{T},
\]
where $\mathcal{N}_i^u$ denotes the set of neighbors of degree $u$. In this case, unobservables associated with units separated by more than $M$ edges are independent. \qed
\end{rem}

\subsection{Network topology}

Without further restrictions on the network topology, dependence may be arbitrary, making inference difficult. In this paper, we consider sparse networks in which the maximum degree grows sufficiently slowly relative to $N$.

\begin{ass}\label{ass:sparsity}
Let $\mathcal{N}_{\max}^2 / N^{1/2} = o(1)$, where $\mathcal{N}_{\max} = \max_{i \in \{1, \ldots, N\}} |\mathcal{N}_i| + 1$.
\end{ass}

Assumption~\ref{ass:sparsity} is common in the dependency-graph literature (e.g., \cite{ross2011fundamentals}) and imposes restrictions on the network topology. It holds for economic models with bounded degree \citep[e.g.,][]{de2018identifying}. Economic applications where Assumption~\ref{ass:sparsity} holds include the Add Health Study \cite{jackson2012social} and, in development settings, \cite{cai2015social}, among others.\footnote{See, e.g., footnote~7 in \cite{de2018identifying} and footnote~37, p.~1879, in \cite{jackson2012social}.} Assumption~\ref{ass:sparsity} fails in the presence of a few units (hubs) with very large degree, such as in a star network. Dense networks, although interesting, are outside the scope of this paper.

\subsection{Problem description} \label{sec:discussion}

Our main goal is to minimize the variance on user-specific linear estimator of the form 
\begin{equation}\label{eqn:estimator}
\hat{\Gamma} \;=\; \frac{1}{n} \sum_{i=1}^N R_i\, w_{\mathbf{A},\mathbf{T}}\!\Big(i;\mathbf{R},\mathbf{D}^R\Big)\, Y_i,
\end{equation}
where $w_{\mathbf{A},\mathbf{T}}(\cdot)$ are user-specified (known) function of the main-experiment treatment assignments and the participant indicators $\mathbf{R}$. Examples include simple difference-in-means estimators, stratified weighted differences (e.g., \cite{tabord2018stratification}), and linear-regression estimators. Linearity in $Y_i$ rules out nonlinear outcome estimators such as feasible two-stage generalized least squares.

The ultimate goal is to conduct inference on the estimand
\begin{equation}\label{eqn:tau}
\tau_{\mathbf{A},\mathbf{T}}(\mathbf{R},\mathbf{D}^R)
\;:=\; \frac{1}{n} \sum_{i=1}^N R_i\, w_{\mathbf{A},\mathbf{T}}\!\Big(i;\mathbf{R},\mathbf{D}^R\Big)\,
m\!\Big(D_i, g_i(D_{\mathcal{N}_i}), T_i\Big).
\end{equation}
We write $\tau := \tau_{\mathbf{A},\mathbf{T}}(\mathbf{R},\mathbf{D}^R)$ when clear from context, leaving implicit its dependence on $\mathbf{R}$ and $\mathbf{D}^R$. Following \cite{abadie2017sampling}, we refer to $\tau$ as a model-based estimand since it is a function of $(\mathbf{R},\mathbf{D}^R,\mathbf{A},\mathbf{T})$ and its causal interpretation relies on the researcher’s maintained model that motivates the chosen estimator $\widehat{\Gamma}$.
%\footnote{Extensions are possible for feasible weights that (conditionally) converge to non-random limits independent of $Y$, yielding an asymptotic representation of the form \eqref{eqn:estimator2}.}

\begin{exmp}[Difference in means]\label{exmp:diff_means}
Under the assumption that $T_i$ is discrete, let
\[
w\!\Big(i,\mathbf{R},\mathbf{D}^R\Big)
=\begin{cases}
\displaystyle \sum_{l=0}^{\infty} v(l)\!\left[
\frac{I_i(d,s,l)}{\;\sum_{j:R_j=1} I_j(d,s,l)/n\;}
-\frac{I_i(d',s',l)}{\;\sum_{j:R_j=1} I_j(d',s',l)/n\;}
\right], & \text{if } R_i=1,\\[1.25em]
0, & \text{otherwise},
\end{cases}
\]
where $v(l)$ are user-specified weights over individuals with $T_i=l$, and
$I_i(d,s,l)=\mathbf{1}\{D_i=d,\ g_i(D_{\mathcal{N}_i})=s,\ T_i=l\}$. Then
\[
\tau \;=\; \sum_{l=0}^{\infty} v(l)\,\big(m(d,s,l)-m(d',s',l)\big),
\]
for given exposures $(d,s)$ and $(d',s')$. Therefore, in this example, $\tau$ is not a function of $(\mathbf{R},\mathbf{D}^R)$ provided \eqref{eqn:first} holds. Linearity in $Y_i$ follows by construction. \qed
\end{exmp}

\begin{exmp}[Linear regression model]\label{exmp:linear}
Consider the weights
\[
w(i,\cdot)=
\begin{cases}
\bigg[\Big(\frac{1}{n}\sum_{i:R_i=1}\mathbf{X}_i\mathbf{X}_i'\Big)^{-1}\mathbf{X}_i\bigg]^{(3)}, & \text{if } R_i=1,\\
0, & \text{otherwise},
\end{cases}
\]
where $V^{(3)}$ denotes the third entry of a vector $V$ and
$\mathbf{X}_i=\big(1,\ D_i,\ \sum_{k\in\mathcal{N}_i} D_k/|\mathcal{N}_i|\big)$. Suppose $g_i(D_{\mathcal{N}_i})=\sum_{k\in\mathcal{N}_i}D_k$, $T_i=|\mathcal{N}_i|$, and, for coefficients $(\beta_0,\beta_1,\beta_2)$,
\[
m(d,s,l)=\beta_0+\beta_1 d+\beta_2\, s/l.
\]
Then $\tau=\beta_2$, the spillover effect of treating all neighbors (relative to none). Under correct linear specification, $\tau$ equals the structural parameter $\beta_2$ and is independent of $(\mathbf{R},\mathbf{D}^R)$. \qed
\end{exmp}

The above examples underscore that $\tau$ admits a causal interpretation under the model posited by the researcher. Such models are common in experimental economics; see, e.g., \cite{muralidharan2017general,kreindler2023optimal,egger2019general}.
Therefore, whenever $\hat{\Gamma}$ is unbiased for $\tau$ conditional on $(\mathbf{R},\mathbf{D}^R)$, a natural objective is to minimize its conditional variance. Valid confidence intervals for $\tau$ can then use the the conditional variance (see Theorem~\ref{thm:asym_t} and \cite{abadie2017sampling}). 

Minimizing the conditional variance has a long tradition in experimental design, dating back to information-theoretic optimality for i.i.d.\ data and linear models \citep{john1975d}. Specifically, define
\begin{equation}\label{eqn:optimal_variance}
\mathbb{V}_{\bar{n}}^\star
\;=\;
\min_{\mathbf{R},\,\mathbf{D}^R:\ \mathbf{1}^\top \mathbf{R}=\bar{n}}
\ \mathbb{V}\!\Big(\hat{\Gamma}\ \big|\ \mathbf{R},\mathbf{D}^R,\mathbf{A},\mathbf{T}\Big),
\end{equation}
the smallest conditional variance of $\hat{\Gamma}$ (implicitly a function of $\mathbf{A},\mathbf{T}$) with $\bar n$ participants in the main experiment.

We seek an experiment with the following properties: select a pilot and main-experiment participants $(\mathbf{P},\mathbf{R})$, together with a distribution of treatments $\mathbf{D}$, such that
\begin{equation}\label{eqn:oracle}
\begin{aligned}
&\mathbb{E}\big[\hat{\Gamma}\ \big|\ \mathbf{R},\mathbf{D}^R,\mathbf{A},\mathbf{T},\mathbf{P}\big]
\;=\; \tau_{\mathbf{A},\mathbf{T}}(\mathbf{R},\mathbf{D}^R),\\
&\mathbb{V}\big(\hat{\Gamma}\ \big|\ \mathbf{R},\mathbf{D}^R,\mathbf{A},\mathbf{T},\mathbf{P}\big)
\;-\; \mathbb{V}_{\bar n}^{\star} \;\le\; \zeta/\bar n,
\end{aligned}
\end{equation}
while imposing that no more than $\bar n$ units are in the main experiment and no more than $\bar m$ units are in the pilot study. Here, $\zeta\ge 0$ is a user-chosen tuning parameter governing allowable optimization slack (we will take $\zeta=0$ unless otherwise specified).

There are two main considerations. First, researchers may not know the variance of the estimator and will need a pilot to estimate it, raising the question of how to choose the pilot while guaranteeing unbiasedness. Second, the design that minimizes the conditional variance does not necessarily randomize treatments, since its goal is precise inference on a model-based estimand. It is therefore natural to ask whether one can introduce randomization to enable design-based inference on hypotheses of independent interest (e.g., on sharp null hypotheses). For expositional convenience, we focus first on settings where the experiment may not necessarily allow for design-based inference. Section~\ref{sec:randomization} and Appendix \ref{app:randomization} extend our framework to allow for randomization and design-based inference by letting $\zeta > 0$. 

Finally, 
Section~\ref{sec:multiple_estimators} extends the framework to multiple estimands by minimizing the worst-case variance across their corresponding estimators.

\section{Two-wave experiment: formal description} \label{sec:two_wave}

This section presents the experimental protocol. We begin with a brief overview and then formalize the pilot (Algorithm~\ref{alg:pilot}) and the main experiment (Algorithm~\ref{alg:coefficients2}). We defer a complete discussion of the practical choice of tuning parameters to Section~\ref{sec:guide_practice}, which provides an explicit guide for practitioners.

\subsection{Overview of the algorithm}

The algorithm proceeds as follows:

\begin{enumerate}
\item Researchers observe the network $\mathbf A$ and unit types $\mathbf T$, typically from pre-experimental data (e.g., surveys or administrative records).
\item Researchers select a set of pilot participants $\{i: P_i=1\}$ and collect
\[
\{(Y_i, D_i, T_i, D_{\mathcal{N}_i}): P_i=1\}.
\]
Units outside the pilot have treatment fixed at zero. The pilot sample is chosen to have few edges to the non-pilot units, while including some neighbor pairs within the pilot to identify covariances.
\item Using the pilot, researchers estimate conditional outcome variances and covariances. They then choose the participation vector $\mathbf R$ in the main experiment and the treatment assignments $\mathbf D^R$ (for participants and their neighbors) to minimize the conditional variance of $\widehat\Gamma$. Pilot units and their neighbors are excluded from the main experiment. Treatments for all other nonparticipants (including non-pilot units) remain at zero, and pilot assignments remain unchanged.
\item Researchers run the main study and collect
$
\{(Y_i, D_i, T_i, D_{\mathcal{N}_i}, \mathcal{N}_i): R_i=1\}.
$
\item Researchers compute $\widehat{\Gamma}$ as in~\eqref{eqn:estimator} and estimate its variance for inference on $\tau$.
\end{enumerate}

We now provide details on each of these steps.

\subsection{Selection of the pilot: formal algorithm}   \label{sec:pilot_variance}

The first step is selecting the pilot study.
If pilot outcomes inform the main-experiment design, then selecting any pilot unit or any neighbor of a pilot unit for the main experiment makes $\mathbf R$ and $\mathbf D^R$ statistically dependent on those units’ unobservables. This violates unconfoundedness and the first condition in~\eqref{eqn:oracle}.

\begin{algorithm}[!ht]
\caption{Pilot study}\label{alg:pilot}
\begin{algorithmic}[1]
\Require adjacency matrix $\mathbf A$, type vector $\mathbf T$, pilot size $\bar m$, threshold $\delta>0$
\State \textit{Select pilot participants:}
\begin{equation} \label{eqn:opt_pilot}
\mathbf P \;\in\; \operatorname*{arg\,min}_{p\in\{0,1\}^N}
\sum_{i=1}^N \sum_{j\in\mathcal{N}_i} p_i(1-p_j)
\quad\text{s.t.}\quad
\sum_{i=1}^N p_i=\bar m,\;
\sum_{i=1}^N p_i \sum_{j\in\mathcal{N}_i} p_j \ge \delta .
\end{equation}
\State \textit{Assign pilot treatments:}
\[
D_i \sim \text{Bernoulli}(1/2)\ \text{if }P_i=1,
\qquad D_i=0\ \text{otherwise}.
\]
\State \textit{Collect pilot data:}\quad
$\{(Y_i, D_i, T_i, D_{\mathcal{N}_i}): P_i=1\}$.
\State \textit{Estimate variance and covariance functions:}\quad
construct estimators $(\widehat\sigma_p^{2}(\cdot),\widehat\eta_p(\cdot))$ of the functions
$\sigma^2(\cdot)$ and $\eta(\cdot)$ defined in Lemma~\ref{lem:vara}
(see Section~\ref{sec:guide_practice} for practical recommendations).
\State \textbf{Return:}\quad $\mathbf P$ and $(\widehat\sigma_p^{2}(\cdot),\widehat\eta_p(\cdot))$.
\end{algorithmic}
\end{algorithm}

Figure~\ref{fig:1} illustrates the issue. In the figure, the pilot set includes nodes N4, N5, and N6. Because their outcomes inform the main-experiment design, treatments depend on the unobservables of these pilot units. Since pilot units are statistically dependent on their neighbors (e.g., N7), selecting N7 would make selection depend on both treatments and unobservables in the main experiment, thereby confounding the design. First, define
\begin{equation}\label{eqn:pilot_identity}
\mathcal{J} \;=\; \bigcup_{i:P_i=1} \big(\mathcal{N}_i \cup \{i\}\big),
\end{equation}
the set of pilot units and their neighbors. The experiment is unconfounded if it satisfies the following restrictions.

\begin{prop}[Unconfounded main-experiment assignments]\label{ass:cov}
Let $\mathcal J$ be as in Equation~\eqref{eqn:pilot_identity} (the set of pilot units and their neighbors).
Suppose Assumptions~\ref{ass:model} and~\ref{ass:modelb} hold, and that:
\[
\text{(i) } \{\varepsilon_i\}_{i\notin\mathcal J}\ \perp\!\!\!\perp\ (\mathbf R,\mathbf D^R)\ \big|\ \mathbf A,\mathbf T,\mathbf P;\qquad
\text{(ii) } \{\varepsilon_i\}_{i=1}^N\ \perp\!\!\!\perp\ \mathbf P\ \big|\ \mathbf A,\mathbf T;\qquad
\text{(iii) } R_i=0 \ \forall\, i\in\mathcal J.
\]
Then
\[
\mathbb{E}\!\left[\widehat\Gamma \,\big|\, \mathbf R, \mathbf D^R, \mathbf A, \mathbf T, \mathbf P \right]
= \tau_{\mathbf A,\mathbf T}(\mathbf R,\mathbf D^R).
\]
\end{prop}

The proof is in Appendix~\ref{proof:ass:cov}.
Proposition~\ref{ass:cov} provides sufficient conditions for unbiasedness conditional on $\mathbf R$ and the treatment assignments.

The first condition states that unobservables for units outside $\mathcal J$ (i.e., all units except the pilot units and their neighbors) are independent of assignment and selection, conditional on $(\mathbf A,\mathbf T,\mathbf P)$. The second condition requires that pilot selection depend only on $(\mathbf A,\mathbf T)$. The third condition excludes the pilot units and their neighbors from the main experiment.

Proposition~\ref{ass:cov} yields two insights: (a) pilot participants can be selected using only network information (and types); (b) the larger the set $\mathcal J$ (pilot units plus their neighbors), the stricter the exclusion constraint on the second wave.

Accordingly, Algorithm~\ref{alg:pilot} chooses pilot units to minimize the number of units in the pilot with friends outside the pilot study. It also requires that some neighbor pairs are within the pilot to identify and estimate covariances; treatments in the pilot are then randomized. The optimization problem in~\eqref{eqn:opt_pilot} is min-cut optimization program: it finds a set of units that are well separated from the remainder, subject to constraints on the number of pilot units and on the number of within-pilot neighbor pairs.

The corollary illustrates that the proposed algorithm yields unbiased estimators of treatment effects.

\begin{cor}
Let Assumptions~\ref{ass:model}, \ref{ass:modelb} hold. Then the two-wave experiment constructed with Algorithm~\ref{alg:pilot} and Algorithm~\ref{alg:coefficients2} satisfies the conditions in Proposition~\ref{ass:cov}.
\end{cor}

\begin{figure}
 \centering
    \begin{tikzpicture}
    \node[draw, circle] (a) at (0,0) {N1};
    \node[draw, circle] (b) at (2,1) {N2};
    \node[draw, circle] (d) at (4,1.5) {N3};
    \node[draw,  circle] (h) at (-4,0) {N4};
    \node[draw, circle] (i) at (-2,0.3) {N5};
    \node[draw, circle] (u) at (-2,2) {N6};
    \node[draw, fill = green, circle] (ii) at (0.2,2) {N7};
    \node[ellipse, draw=gray, fit= (h) (i) (u), inner sep=-0.1mm] (all) {};
    \node[] at (-3,-0.5) {$Pilot$};

    \draw [-] (a) edge (b) ;
    \draw[-] (d) edge (b);
    \draw[-]  (h) edge (i) (h) edge (u) (u) edge (ii) (ii) edge (b);
    \end{tikzpicture}
\caption{Illustrative network. Under one-hop dependence, node N7 violates the unconfoundedness condition if included in the main experiment because it is adjacent to the pilot set (N4–N6), whose outcomes informed randomization.}
\label{fig:1}
\end{figure}

Finally, the main statistical goal of the pilot is to learn the variance and covariance functions:
$
\mathrm{Var}\Big(Y_i\Big|A, D_i, T_i, D_{\mathcal{N}_i}, \mathbf{P}\Big), \quad \mathrm{Cov}\Big(Y_i, Y_j\Big|A, D_i, D_j, D_{\mathcal{N}_i}, D_{\mathcal{N}_j}, T_i, T_j, \mathbf{P}\Big).
$

The following lemma provides conditions for identification.

\begin{lem}\label{lem:vara}
Suppose Assumptions~\ref{ass:model} and~\ref{ass:modelb} hold and the pilot is chosen as in Algorithm~\ref{alg:pilot}. Then, for all $i,j$ with $P_i=P_j=1$,
\[
\small
\begin{aligned}
\mathrm{Var}\big(Y_i \mid \mathbf A, D_i, T_i, D_{\mathcal{N}_i}, \mathbf P\big)
&= \sigma^2\!\big(T_i, D_i, g_i(D_{\mathcal{N}_i})\big),\\
\mathrm{Cov}\big(Y_i, Y_j \mid \mathbf A, D_i, D_j, D_{\mathcal{N}_i}, D_{\mathcal{N}_j}, T_i, T_j, \mathbf P\big)
&=
\begin{cases}
\eta\!\big(T_i, D_i, g_i(D_{\mathcal{N}_i}),\, T_j, D_j, g_j(D_{\mathcal{N}_j})\big), & \text{if } i\in\mathcal{N}_j,\\
0, & \text{otherwise}.
\end{cases}
\end{aligned}
\]
for some functions $\sigma^2(\cdot)$ and $\eta(\cdot)$.
\end{lem}

The proof is in Appendix~\ref{proof:lemma:vara}. Building on the lemma above, estimation of $\sigma^2$ and $\eta$ can be carried out by a variety of methods; the rate of convergence affects the precision of the estimator in the main experiment. We denote by
\[
\big(\widehat\sigma_p^{2},\, \widehat\eta_p\big)
\]
the variance and covariance functions estimated from the pilot study. Section~\ref{sec:guide_practice} and Algorithm~\ref{alg:pilot_varcov} provide concrete estimation examples and practical recommendations for implementing Algorithm~\ref{alg:pilot}, including the choice of~$\delta, \widehat\sigma_p^{2},\, \widehat\eta_p$.

 \subsection{Main experiment: formal algorithm}

We now discuss the main experiment (Algorithm~\ref{alg:coefficients2}). Throughout, we omit the explicit dependence of the weights on $(\mathbf A,\mathbf T)$ and write $w(i;\mathbf R,\mathbf D^R)$ for brevity. 

Given the pilot selection $\mathbf P$, the main experiment design minimizes a plug-in estimate of the variance. Formally, define
\begin{equation}\label{eqn:var_est_pilot}
\small
\begin{aligned}
\widehat V_{\widehat\sigma_p,\widehat\eta_p}(\mathbf R,\mathbf D^R)
&= \frac{1}{n^2}\sum_{i:R_i=1} w^2(i;\mathbf R,\mathbf D^R)\;
  \widehat\sigma_p^{2}\!\Big(T_i, D_i, g_i(D_{\mathcal{N}_i})\Big) \\
&\quad + \frac{1}{n^2}\sum_{i:R_i=1}\sum_{j\in\mathcal{N}_i} R_j\,
  w(i;\mathbf R,\mathbf D^R)\, w(j;\mathbf R,\mathbf D^R)\;
  \widehat\eta_p\!\Big(T_i, D_i, g_i(D_{\mathcal{N}_i}),\; T_j, D_j, g_j(D_{\mathcal{N}_j})\Big).
\end{aligned}
\end{equation}
The first term captures heteroskedastic variances; the second captures covariances between neighbors, both estimated from the pilot.\footnote{In the presence of higher-order interference, we also add additional components which depend on higher-degree neighbors. See Appendix \ref{sec:higher_order_dependence}. } 

\begin{algorithm}[!ht]
\caption{Main experiment}\label{alg:coefficients2}
\begin{algorithmic}[1]
\Require $\mathbf A$, $\mathbf T$, pilot output $\big(\mathbf P,\widehat\sigma_p^{2}(\cdot),\widehat\eta_p(\cdot)\big)$, bounds $\underline n,\bar n$, and $\mathcal J$ as in~\eqref{eqn:pilot_identity}.
\State \textit{Design optimization:} find $(\mathbf{D}^R, \mathbf{R}$) by optimizing 
\begin{equation}\label{eqn:design1}
\small
\begin{aligned}
(\mathbf D^R,\mathbf R)\;\in\;
&\operatorname*{arg\,min}_{\mathbf r\in\{0,1\}^N,\;\mathbf d^{\,r}\in\{0,1\}^N}
\ \widehat V_{\widehat\sigma_p,\widehat\eta_p}\big(\mathbf r,\mathbf d^{\,r}\big)\\
\text{s.t.}\quad
&\mathbf 1^\top \mathbf r \in [\underline n,\bar n],\qquad
\mathbf{r}_j=0\ \ \forall j\in\mathcal J.
\end{aligned}
\end{equation}
\State \textit{Run the main study and collect data:}\quad
$\{(Y_i, D_i, T_i, D_{\mathcal{N}_i}, \mathcal{N}_i): R_i=1\}$.
\State \textit{Estimate the outcome regression:} obtain a consistent (possibly non-parametric) estimator $\hat m$ of $m$ and define
$\hat m_i := \hat m\big(D_i,g_i(D_{\mathcal{N}_i}),T_i\big)$.
\State \textit{Compute the estimator and its variance:} construct $\widehat\Gamma$ as in~\eqref{eqn:estimator} and estimate its variance by
\begin{equation}\label{eqn:final_vv}
\widehat V
= \frac{1}{n^2}\sum_{i=1}^N R_i\, w(i;\mathbf R,\mathbf D^R)\,\big(Y_i-\hat m_i\big)
\sum_{j\in\mathcal{N}_i\cup\{i\}} R_j\, w(j;\mathbf R,\mathbf D^R)\,\big(Y_j-\hat m_j\big).
\end{equation}
\State \textbf{Return:}\quad $\widehat\Gamma$ and $\widehat V$.
\end{algorithmic}
\end{algorithm}

The optimization problem is in Equation \eqref{eqn:design1}. The minimization is with respect to the participation indicators and the treatment assignments. The optimization problem selects a number of participants in the interval $n \in \{\underline{n}, \underline{n}+ 1, \cdots,  \bar{n}\}$, with $\underline{n} < \bar{n}$, denoting a lower bound on the number of participants.  The upper bound $\bar{n}$ typically arises due to cost constraints for the researcher. The lower bound $\underline{n}$  guarantees that sufficiently many units are selected in the main experiment useful in our theoretical derivation. In particular, our theoretical guarantees (Section~\ref{sec:asymptotic}) will impose that $\underline{n}/\bar{n} \in (0,1)$, requiring researchers some slackness factors in the smallest and largest sample size (we recommend $\underline{n} \ge 2 \bar{n}/3$). In practice, such lower bound is non-binding when the estimator's standard error is decreasing in the same size, although it can improve optimization error in few instances. Additional constraints on $R_i$ or $D_i$, although omitted for brevity, may be included without affecting our theoretical guarantees.\footnote{For example, only some units can participate in the experiments, corresponding to constraints on $R_i = 0$ for some of the units. An alternative constraint is to impose $D_i \times R_i \ge D_i$. This constraint imposes that those units which are not selected as participants have treatment assignments equal to zero.}

The constraint in Equation \eqref{eqn:design1} illustrates the trade-off in the selection of the pilot study: the larger the pilot study, the more precise the estimator of the variance. However, the larger the pilot study, the larger the set $\mathcal{J}$ and therefore, the more stringent the constraint imposed in the above optimization procedure.

Finally, Algorithm~\ref{alg:coefficients2} returns a variance estimator for inference on the main effect $\tau$. Consistency of the variance estimator $\widehat{V}$ in Equation \eqref{eqn:final_vv} requires consistent estimation of $m(d,s,l)$ using data from the main experiment (at a possibly slow rate). As formalized in Theorem~\ref{thm:asym_t} below, $\hat m$ may be a parametric (as in Example \ref{exmp:linear}) or nonparametric estimator (e.g., \citealp{leung2019treatment}) depending on the researcher's modeling assumption. Precise conditions are collected in Assumption~\ref{ass:inference_conditions}.
 
%\begin{rem}[Randomization]\label{rem:randomization}
%Algorithm~\ref{alg:coefficients2} exactly minimizes the estimated variance, without necessarily randomizing treatments. In Section~\ref{sec:randomization} we show that, by allowing a small optimization slack, one can obtain designs that retain near-optimal model-based precision while enabling design-based inference through randomization. \qed
%\end{rem}

\begin{rem}[Temporal structure]
Pilot and main experiments are typically conducted sequentially. We assume treatments do not alter the network between the two waves. This is plausible in applications where the network is time-invariant \citep[e.g.,][]{egger2019general,muralidharan2017general}, the intervention does not affect link formation by design \citep{cai2015social}, or small pilot interventions do not meaningfully perturb large platforms \citep[e.g.,][]{karrer2021network}. The assumption may fail when interventions reshape the network (e.g., group-formation experiments; \citealp{basse2024randomization}), which we do not study. \qed
\end{rem}

\begin{rem}[Partial network information]
Appendix~\ref{sec:model_assisted} studies a design that uses only partial network information. The researcher observes a subset of entries of $\mathbf A$ and imputes the rest using a model, under the assumption that the pilot forms a cluster separated from the main experiment. \qed
\end{rem}

\section{Theoretical analysis and inference} \label{sec:asymptotic}

Next, we study how the variance of the estimator obtained from the two-wave experiment in Section~\ref{sec:two_wave} compares with the variance of the oracle study. Specifically, we study
\[
\mathbb{V}\!\Big(\widehat{\Gamma}\ \big|\ \mathbf{R}, \mathbf{D}^R, \mathbf{A}, \mathbf{T}\Big) \;-\; \mathbb{V}_{\bar{n}}^\star,
\]
where $(\mathbf{R},\mathbf{D}^R)$ solve the two-wave design problem in~\eqref{eqn:design1}. We refer to $\mathbb{V}_{\bar{n}}^\star$ as the variance attainable by an oracle experimentalist who optimally selects participants and assigns treatments with ex-ante perfect knowledge of the population variance and covariance functions (without a pilot). The oracle selects exactly $\bar{n}$ individuals for the experiment, corresponding to the upper bound in our feasible design.\footnote{We use this restriction in our theoretical results; it can be relaxed to an oracle that selects any number of participants between $\underline{n}+|\mathcal{J}|$ and $\bar{n}$. In the asymptotic regime where the pilot size grows more slowly than the main-experiment size, we have $|\mathcal{J}|/\underline{n} \lesssim \mathcal{N}_{\max}\,\bar{m}/\underline{n} = o(1)$ for $\underline{n}$ large relative to the maximum degree $\mathcal{N}_{\max}$ and the pilot size $\bar{m}$.}

Before stating our first theorem, we impose the following conditions.

\begin{ass}\label{ass:convergence}
Assume that for some $\xi>0$,
\[
\small
\begin{aligned}
\sup_{d,s,l}\ \Big|\widehat{\sigma}_p^{2}(s,d,l) - \sigma^{2}(s,d, l)\Big| &= \mathcal{O}_p(\bar{m}^{-\xi}),\\
\sup_{d,s,l,\; d',s',l'}\ \Big|\widehat{\eta}_p(s,d,l,s',d', l') - \eta(s,d,l, s',d',l')\Big| &= \mathcal{O}_p(\bar{m}^{-\xi}).
\end{aligned}
\]
\end{ass}

Assumption~\ref{ass:convergence} quantifies the pilot-based convergence of the variance and covariance functions. For parametric estimators, the rate is typically $\bar{m}^{-1/2}$. This assumption is not required for inference in the main experiment, but it is invoked to study regret properties: if the pilot estimates poorly approximate the variance and covariance functions, the design may be less precise.

\begin{ass}\label{ass:moment}
Let $Y_i \in [-M,M]$ for some $M<\infty$.
\end{ass}

Assumption~\ref{ass:moment} imposes bounded outcomes.\footnote{One can weaken it to sub-Gaussian tails at the cost of heavier notation.} The final assumption is a stability condition on the weights.

\begin{ass}[Weight stability]\label{ass:weights_stable}
(i) For any $\mathbf d\in\{0,1\}^N$ and any $i\in\{1,\ldots,N\}$,
$
\Big|\,w\big(i;\mathbf r,\mathbf d^{\,r}\big) - w\big(i;\mathbf r',\mathbf d^{\,r'}\big)\,\Big|
\;\le\; \bar C\,\frac{||\,\mathbf r -  \mathbf r'\,||_1}{\min\{\mathbf 1^\top \mathbf r,\ \mathbf 1^\top \mathbf r'\}},
$
for some finite constant $\bar C<\infty$, for any selection vectors $\mathbf r,\mathbf r'$ such that $r_i=r'_i$ and $r_{\mathcal N_i}=r'_{\mathcal N_i}$.\\
(ii) Moreover, $\max_{i:\,R_i=1}\big|\,w(i;\mathbf R,\mathbf D^R)\,\big| \le \bar C' < \infty$ almost surely, for some finite constant $\bar C'$.
\end{ass}

Assumption~\ref{ass:weights_stable}(i) states that, holding the inclusion of unit $i$ and its neighbors fixed, the weight for unit $i$ cannot change by more than a constant multiple of the relative change in sample size across two designs; this is a mild stability requirement. Part~(ii) requires that the resulting design produce bounded weights. This is typically satisfied by the algorithm, since unbounded weights would make the variance objective equal to infinity and can also be enforced directly as an additional constraint in Algorithm~\ref{alg:coefficients2}.

 \paragraph{Example \ref{exmp:diff_means} cont'd}
Consider the weights in Example~\ref{exmp:diff_means}. Then Assumption~\ref{ass:weights_stable} holds if the weights are bounded by a universal constant, i.e., for any $l$ in the support
\[
\, \min \left\{ \frac{\sum_{j} I_j(d,s,l)}{n},\ \frac{\sum_{j} I_j(d',s',l)}{n} \right\} \;>\; 1/\bar{C},
\]
with $I_j(\cdot)$ as defined in Example~\ref{exmp:diff_means}, for some finite $\bar{C}>0$. Therefore, Assumption~\ref{ass:weights_stable} holds for the difference-in-means estimator under a uniform boundedness restriction on the weights, which can be imposed by design in Algorithm~\ref{alg:coefficients2}. \qed

\begin{thm}\label{thm:regret}
Under Assumptions~\ref{ass:model}, \ref{ass:modelb}, \ref{ass:sparsity}, \ref{ass:convergence}, \ref{ass:moment}, and \ref{ass:weights_stable}, suppose $\bar{n}/\underline{n}=\alpha\in(1,C)$ for a universal constant $C<\infty$, and $\underline{n}\ge \mathcal{N}_{\max}\,\bar{m}/(\alpha-1)$. Then
\[
\bar{n}\!\left[\mathbb{V}\!\Big(\widehat{\Gamma}\ \big|\ \mathbf{R}, \mathbf{D}^R, \mathbf{A}, \mathbf{T}\Big) - \mathbb{V}_{\bar{n}}^\star \right]
\;\le\; \mathcal{O}_p\!\Big(\mathcal{N}_{\max}\,\bar{m}^{-\xi} \;+\; \frac{\mathcal{N}_{\max}^2\,\bar{m}}{\underline{n}}\Big).
\]
\end{thm}

The proof is in Appendix \ref{proof:thm:regretb}. Theorem~\ref{thm:regret} characterizes the difference between the variance of the experiment with a pilot and the variance of the oracle experiment with known variance and covariance functions. The result highlights a trade-off: (i) a larger pilot reduces estimation error, and (ii) a larger pilot also tightens the design constraints, potentially increasing regret relative to the oracle. A key challenge in the proof is comparing the constrained design with its oracle counterpart, which assigns treatments without restrictions induced by the pilot study.

An important assumption is that the network is sufficiently sparse. For instance, we require the minimum experiment size $\underline{n}$ to exceed (by a proportional factor) the product of the maximum degree and the pilot size. This condition is satisfied in large samples when $\mathcal{N}_{\max}$ is small relative to $N$ (and hence to $\underline{n}$), i.e., under Assumption~\ref{ass:sparsity}. It fails in dense networks such as a star, where $\mathcal{N}_{\max}=N$.

\begin{cor}\label{cor:pilot}
Suppose the conditions of Theorem~\ref{thm:regret} hold with $\xi=1/2$ (parametric rate). By choosing $\underline{n}/(\mathcal{N}_{\max}^2\log \underline{n}) \ge \bar{m} \ge (\underline{n}/\mathcal{N}_{\max})^{2/3}$ in Algorithm~\ref{alg:pilot}, and main study in Algorithm~\ref{alg:coefficients2}, we have
\[
\bar{n}\!\left[\mathbb{V}\!\Big(\widehat{\Gamma}\ \big|\ \mathbf{R}, \mathbf{D}^R, \mathbf{A}, \mathbf{T}\Big) - \mathbb{V}_{\bar{n}}^\star \right] \;\le\; o_p(1).
\]
\end{cor}

To our knowledge, this is the first result that formally provide sufficient conditions on the size of the pilot to achieve near-optimal precision. 

\subsection{Asymptotic inference} 

We conclude with asymptotic inference for the target estimand.

\begin{ass}[Conditions for inference]\label{ass:inference_conditions}
Suppose: (i) $n\,\mathbb{V}(\widehat{\Gamma}\mid \mathbf{R}, \mathbf{D}^R, \mathbf{A}, \mathbf{T})>\underline{v}$ for a constant $\underline{v} > 0$ almost surely; (ii) $\max_{d,s,t}|\hat{m}(d,s,t)-m(d,s,t)|=o_p(\underline{n}^{-1/4})$ and $\max_{d,s,t}|\hat{m}(d,s,t)|\le c_0$ for some constant $c_0<\infty$, where $\hat{m}$ is the conditional-mean estimator used for variance estimation in Step~4 of Algorithm~\ref{alg:coefficients2}.
\end{ass}

Condition~(i) rules out degeneracy of the variance. Condition~(ii) requires that the conditional-mean estimator $\hat{m}$ used in the variance formula be consistent at a rate as slow as $o_p(n^{-1/4})$. The condition can be met by semiparametric estimators that converge more slowly than $n^{-1/2}$.

\begin{thm}\label{thm:asym_t}
Suppose Assumptions~\ref{ass:model}, \ref{ass:modelb}, \ref{ass:sparsity}, \ref{ass:moment}, \ref{ass:weights_stable}\,(ii), and \ref{ass:inference_conditions} hold, and $\underline{n} \propto \bar{n} (\propto N)$. Then
\begin{equation}\label{eqn:thm1}
\frac{\sqrt{n}\,\big(\widehat{\Gamma} - \mathbb{E}[\widehat{\Gamma}\mid \mathbf{R}, \mathbf{D}^R, \mathbf{A}, \mathbf{T}]\big)}{\sqrt{n\,\widehat{V}}}
\ \rightarrow_d\ \mathcal{N}(0,1).
\end{equation}
\end{thm}

The proof is in Appendix \ref{sec:asymp_app}. Theorem~\ref{thm:asym_t} establishes asymptotic normality for a broad class of linear estimators. Notably, inference does not require rate conditions for the pilot estimators of the variance and covariance functions. Asymptotic properties of estimators for network data have been studied in various contexts \citep[e.g.,][]{ogburn2017causal, chin2018central, savje2017average}. Here, we derive asymptotics conditionally on the entire assignment mechanism.

%\section{Empirical Application} 

%Cai Paper. 

\section{Main extensions} 

In this section, we sketch main extensions and defer formal details to Appendix \ref{sec:extension_app}.  

\subsection{Randomization for design-based inference (Appendix \ref{app:randomization})}\label{sec:randomization}

Appendix Algorithm~\ref{alg:coefficients3b} modifies Algorithm~\ref{alg:coefficients2} to allow for randomization. Specifically, first, we compute the optimal selection of participants and assignments in the main experiment using pilot information as before. 
Second, we re-randomize treatments: for each experimental participant $i$ with $R_i=1$, we independently randomize the treatment assignment to be equal to the optimal assignment with probability $1-\iota$, and accept the proposed new assignment only if its plug-in variance does not exceed the plug-in variance at the optimum by more than a small slack $\zeta/\bar n$. If the proposal fails this variance cap, we re-randomize until it passes.

Algorithm \ref{alg:coefficients3b} is designed so that, in addition to the model-based inference of Theorem~\ref{thm:asym_t} (which remains valid), one may perform design-based inference on hypotheses of independent interest. This is possible using e.g., procedures in  \citealp{puelz2022graph} by simulating from the same randomization scheme in Step 2 of Algorithm \ref{alg:coefficients3b}, after conditioning on the optimal solutions in Equation \eqref{eqn:design1b}.

The parameters $\iota$ and $\zeta$ trade off additional randomization against precision for model-based inference: larger $\iota$ increases randomization away from the optimizer; larger $\zeta$ admits more variance inflation. We formalize this intuition in Appendix Theorem \ref{thm:regretb} where we show that the regret bound on the conditional variance now holds as in our main Theorem \ref{thm:regret} up to the slack parameter $\zeta$.

 \subsection{Multiple estimands (Appendix \ref{app:multiple})}\label{sec:multiple_estimators}

As a second extension, we extend the results to settings with a finite set of estimands, each paired with a (linear) estimator. 
Let $\mathcal W=\{w^1,\ldots,w^E\}$ with $E<\infty$. 
As in~\eqref{eqn:estimator}, and omitting the explicit $(\mathbf A,\mathbf T)$ dependence of the weights for brevity, consider
\begin{equation}\label{eqn:estimator2}
\widehat\Gamma(w)
\;=\;
\frac{1}{n}\sum_{i=1}^N R_i\, w\!\big(i;\mathbf R,\mathbf D^R\big)\, Y_i ,
\end{equation}
with corresponding model-based estimand
$
\tau(w)
\;=\;
\frac{1}{n}\sum_{i=1}^N R_i\, w\!\big(i;\mathbf R,\mathbf D^R\big)\,
m\!\big(D_i,\, g_i(D_{\mathcal{N}_i}),\, T_i\big),
$
which is implicitly a function of $(\mathbf A,\mathbf T,\mathbf R,\mathbf D^R)$. 
Returning to Example~\ref{exmp:linear}, one might be interested separately in direct and spillover effects; then $(w^1,w^2)$ pick out two coordinates of the least-squares weight vector.  In principle, it is possible that multiple weights $w$ may correspond to the same estimand $\tau(w)$ under certain modeling assumptions. We abstract from this complication and, motivated by empirical practice, we consider a scenario where, for each estimand, researchers consider a single estimator. In addition, we let $E$ to be finite.  

The core idea is to conduct the experiment as in Algorithm \ref{alg:coefficients2}, by minimizing the worst-case conditional variance of each of the estimators. Inference and regret bounds follow verbatim as in Section \ref{sec:asymptotic} and formalized in Appendix Theorem \ref{thm:regret2}.

\section{Implementation guide and numerical studies} 

In this section we first discuss the choice of the tuning parameters and estimators in the experiments. We then provide a set of numerical studies.

 \subsection{Guide to practice: implementation details}\label{sec:guide_practice}

Algorithms~\ref{alg:pilot} and~\ref{alg:coefficients2} require (i) tuning choices and (ii) pilot-based estimators of the outcome variance and covariances. Below we give practical defaults.

\begin{itemize}

\item \textbf{Optimization for the pilot units.}  
The first step is the selection of the pilot study. 
Algorithm~\ref{alg:pilot} is a mixed-integer quadratic program (MIQP). Although NP-hard, off-the-shelf solvers are efficient in the problem sizes we study.
In our numerical studies, selecting a pilot of size $\bar m=70$ from $N=800$ (by solving exactly Equation \eqref{eqn:opt_pilot}) takes only a few seconds on a laptop.\footnote{We use \url{https://www.gurobi.com}, which is free for academic use.}   For very large $N$, one can use a stopping rule by stopping when the solver’s gap bound is below a user threshold  \citep{huang2021branch}.\footnote{Our regret and inference results (under network sparsity in Assumption \ref{ass:sparsity}) continue to hold even if~\eqref{eqn:opt_pilot} is not solved to global optimality. The theory imposes conditions on the main-experiment selection $\mathbf R$; it imposes no optimality conditions on $\mathbf P$ beyond $\mathbf P$ being only a function of $(\mathbf A,\mathbf T)$; see Proposition~\ref{ass:cov}.}  

A key tuning parameter is $\delta$ in~\eqref{eqn:opt_pilot}. It governs the effective pilot sample size for covariance estimation. Larger $\delta$ improves covariance precision but tightens the main experiment constraints for the selection of the pilot units. As a rule of thumb, we recommend setting
$
\delta \approx \max\{\bar m/4,\;30\},
$
so that a nontrivial fraction of pilot units have at least one neighbor in the pilot.

\item \textbf{Variance estimation using the pilot.}   The second step is the estimation of the variance and covariance function using information from the pilot study.

Any pilot estimator of the variance function may be used; the validity of Theorem~\ref{thm:asym_t} does not hinge on the particular choice (or even consistency of variance estimators from the pilot study). However, better pilot estimates typically yield a more efficient estimator from the main experiment.  

A formal algorithm for the variance and covariance estimation from the pilot study is in the Appendix Algorithm \ref{alg:pilot_varcov}. We provide a brief description below.

Let $W_i:=f\!\big(T_i,D_i,g_i(D_{\mathcal{N}_i})\big)$ be a user-chosen transformation (e.g., a simple identity function or a more flexible polynomial transformation). Let $\hat m_p(d,s,l)$ be a pilot-based estimator of $m(d,s,l)$ aligned with the outcome model the researcher considers for estimation of $\widehat{\Gamma}$ (e.g., OLS for Example~\ref{exmp:linear}; differences-in-means for Example~\ref{exmp:diff_means}). Define pilot residuals
$
\hat\varepsilon_i:=Y_i-\hat m_p\!\big(D_i,g_i(D_{\mathcal{N}_i}),T_i\big)$ for $P_i=1.
$
A simple and flexible pilot variance estimator is the nonnegative least-squares fit
\begin{equation} \label{eqn:vv}
\small 
\begin{aligned} 
\widehat\sigma_p^2\!\big(T_i,D_i,g_i(D_{\mathcal{N}_i})\big)\;=\;\max\{W_i^\top\hat\beta,\,0\},
\qquad
\hat\beta\in\arg\min_{\beta:\,W_i^\top\beta\ge 0 \forall i: P_i = 1}\sum_{i:P_i=1}\big(\hat\varepsilon_i^2-W_i^\top\beta\big)^2,
\end{aligned} 
\end{equation} 
which regresses squared residuals on $W_i$ subject to positivity. This captures heteroskedasticity driven by $(T_i,D_i,g_i)$ while guaranteeing $\widehat\sigma_p^2\ge 0$.

\item \textbf{Covariance estimation using the pilot.}   
For neighbor pairs, a convenient parametric form for the covariance function is
$$ 
\small 
\begin{aligned} 
\widehat\eta_p\!\Big(T_i,D_i,g_i(D_{\mathcal{N}_i}),\,T_j,D_j,g_j(D_{\mathcal{N}_j})\Big)
\;=\;
\alpha\;\widehat\sigma_p\!\big(T_i,D_i,g_i(D_{\mathcal{N}_i})\big)\;
\widehat\sigma_p\!\big(T_j,D_j,g_j(D_{\mathcal{N}_j})\big), 
\end{aligned} 
$$ 
assuming a constant correlation $\alpha$.\footnote{This is the analogue of intra-cluster correlation in studies with cluster experiments often assumed to be homogeneous, e.g., \cite{baird2018optimal}} We estimate $\alpha$ by regressing $\hat\varepsilon_i\hat\varepsilon_j$ on the product
$\widehat\sigma_p(\cdot)\widehat\sigma_p(\cdot)$ for individuals $(i,j)$ with $P_i=P_j=1$ and $j\in\mathcal{N}_i$. 

Additional constraints on $\alpha$ may also be added in the estimation step based on prior knowledge. For example, in many applications researchers may expect positive but small correlation, in which case researchers may impose upper and lower bounds on $\alpha$. If the particular application specifics suggest heterogeneity in correlations, researchers may also allow $\alpha$ to vary by observable characteristics.

\item \textbf{Optimization in the main experiment.}  Given the variance and covariance estimator, the next step is to solve over treatments and selection of participants in the main experiment. That is, given $(\widehat\sigma_p^2,\widehat\eta_p)$, Algorithm~\ref{alg:coefficients2} solves a nonlinear mixed-integer program in $(\mathbf R,\mathbf D^R)$, which is typically NP-hard. In our numerical studies we use a solver based on Ant Colony Optimization \citep{dorigo2005ant}.\footnote{In our experiments we use as a software \url{https://www.midaco-solver.com}.} This meta-heuristic performs well on large combinatorial instances. Even with a hard time limit, we find sizable variance reductions relative to competitors, making this our recommended choice.

Importantly, an approximate (instead of exact) optimization routine does not invalidate inference in Theorem~\ref{thm:asym_t}; it only adds an additional term (equal to the approximation error) to the regret bound.\footnote{The regret bound with an approximation error follows verbatim the one formally derived in Appendix Theorem~\ref{thm:regret2} with $\zeta/\bar{n}$ characterizing the optimization error (for arbitrary $\zeta$).} 
 
\item \textbf{Inference using the main experiment.}  Once the main experiment is conducted, 
inference on $\tau$ follows Theorem~\ref{thm:asym_t}, using the variance estimator~\eqref{eqn:final_vv}. Estimation of the variance needs a consistent estimator $\hat m(d,s,l)$ of $m(d,s,l)$; it can be parametric or nonparametric (as in Example~\ref{exmp:diff_means}).\footnote{The required rate is $o_p(n^{-1/4})$ under sparsity of the network in Assumption \ref{ass:sparsity}.} %In practice, we recommend using an estimator consistent with the model assumption imposed by the researchers for choosing $\widehat{\Gamma}$. 

\item \textbf{Choosing sample sizes (pilot and main).} We conclude with guidance on selecting the pilot size $\bar m$ and the main-experiment bounds $(\underline n,\bar n)$. For sparse networks, a simple rule of thumb consistent with Corollary~\ref{cor:pilot} is
$
\bar m \;\gtrsim\; \bar n^{\,2/3}.
$
For example, in our numerical studies we set $\bar m=70$ when $\bar n=400$; for an experiment with about one thousand individuals, we recommend a pilot of roughly one hundred units.

On the other hand, the upper bound $\bar n$ is typically determined by budget constraints.\footnote{Alternatively, researchers may select $\bar n$ through a minimum detectable effect analysis which is standard in the analysis of experiments \citep{GerberGreen2012}.} The lower bound $\underline n$ is often nonbinding because larger samples generally reduce variance; nonetheless, we recommend verifying that the realized main-experiment size is sufficiently large, such as $n \ge 2\bar n/3$ as a simple check.
\end{itemize}

\subsection{Numerical Studies} \label{sec:numerics}

Lastly, we present our numerical studies. In simulations, we assume
$T_i = |\mathcal{N}_i|$ and $g_i\!\big(D_{\mathcal{N}_i}\big) = \sum_{k \in \mathcal{N}_i} D_k$.
Define $S_i := \sum_{k \in \mathcal{N}_i} D_k$ and $G_i := S_i/|\mathcal{N}_i|$.

\paragraph{Simulation model.}
We specify the conditional variance and covariance functions
\begin{equation} 
\small
\begin{aligned}
\sigma^2(d,s,l) \;=\; \mu + \beta_1 d + \frac{s\,\beta_2}{\max\{l,1\}}, 
\qquad
\eta(d,s,l,d',s',l') \;=\; \alpha \sqrt{\sigma^2(d,s,l)\,\sigma^2(d',s',l')},
\end{aligned}
\end{equation}
where $s$ denotes the number of treated neighbors and $l$ denotes the number of neighbors.
The covariance assumes a constant correlation $\alpha$ as in \citet{baird2018optimal}.
We set $\alpha=0.1$ and $\mu=0.5$, and consider $(\beta_1,\beta_2)\in\{(0,0),(0.5,0.5),(0.5,1)\}$, which we label, respectively, ``homoskedastic,'' ``small heteroskedasticity,'' and ``large heteroskedasticity.'' Results for more parameters are reported in Appendix \ref{sec:aa2}.

Following \citet{eckles2017design}, we consider outcome models of the form
\begin{equation}\label{eqn:linear1}
Y_i \;=\; \gamma_1 D_i \;+\; \gamma_2\, G_i \;+\; \varepsilon_i,
\qquad
\varepsilon_i \,\big|\, (D_i,D_{\mathcal{N}_i},T_i) \sim \mathcal N\!\big(0,\ \sigma^2(D_i,G_i,T_i)\big).
\end{equation}
We use $(\gamma_1,\gamma_2)=(0.5,1)$. These coefficients affect the mean but not the conditional variance of the estimators given $(\mathbf R,\mathbf D^R)$ and thus do not influence variance comparisons across designs.

\paragraph{Network design.}
Our main simulations use the village friendship networks of \citet{cai2015social}. We build two undirected adjacency matrices:
(i) a ``weak'' network with an edge if either member names the other as a friend, and
(ii) a ``strong'' network with an edge only if both name each other.
The weak network is denser, while the strong network is sparser. This allows us to study how network density affects the performance of the algorithm.
We consider as our population the first five villages, corresponding to $N=832$ nodes in total, and impose that no more than half of the individuals are in the main experiment ($\bar n=416$), with no binding constraints on $\underline{n}$. 

We also consider simulated graphs with $N=800$ nodes and a maximum number of participants $\bar n=400$. In particular, we study two models:
(i) \textit{Erd\H{o}s–R\'enyi (ER):} $A_{ij}\stackrel{\text{iid}}{\sim}\mathrm{Bernoulli}(p)$ for $i<j$, symmetrized with zero diagonal, using $p=2/N$;
(ii) \textit{Barab\'asi–Albert (BA):} start from an Erd\H{o}s–R\'enyi graph on $N_0=N/5$ nodes with $p=2/N$. Then, for $t=N_0+1,\ldots,N$, add one node and attach it to $m=2$ existing nodes with probability proportional to their degrees.

%We run $200$ replications per configuration.

\vspace{-3mm} 

\subsubsection{Estimation details}

We implement Algorithms~\ref{alg:pilot} and~\ref{alg:coefficients2} using the defaults in Section~\ref{sec:guide_practice}. We solve the MIQP in~\eqref{eqn:opt_pilot} exactly with $\bar m=70$ and $\delta=30$. We estimate the conditional mean as
\[
\small
\hat m_p(d,s,l) \;=\; \hat \gamma_0 + \hat \gamma_1 d + \hat \gamma_2 \,\frac{s}{\max\{l,1\}},
\]
where $(\hat{\gamma}_0,\hat{\gamma}_1,\hat{\gamma}_2)$ are the OLS coefficients from regressing $Y_i$ on $\big[1,\ D_i,\ G_i\big]$ using the pilot data $\{(Y_i,D_i,T_i,D_{\mathcal{N}_i}): P_i=1\}$. We then form residuals
$
\hat\varepsilon_i \;:=\; Y_i - \hat m_p(D_i, S_i, T_i),
S_i=\textstyle\sum_{k\in\mathcal N_i} D_k. 
$
For the variance, we fit the nonnegative least-squares model in Algorithm~\ref{alg:pilot_varcov} with
$
W_i \;=\; \big[\,1,\ D_i,\ G_i\,\big],
$
which is a correctly specified model for the variance. As a more flexible alternative, we also consider a fourth-degree polynomial in $(D_i,G_i)$, described below.

For the covariance, we impose a constant correlation $\alpha\in[0,0.3]$ and estimate it by regressing $\hat\varepsilon_i \hat\varepsilon_j$ on $\widehat\sigma_p(T_i,D_i,G_i)\,\widehat\sigma_p(T_j,D_j,G_j)$ over pilot units $(i,j)$ with $j\in\mathcal{N}_i$, as in Algorithm~\ref{alg:pilot_varcov}. This constraint reflects prior (approximate) knowledge of small, positive correlations between neighbors.

Given $(\widehat\sigma_p^2,\widehat\eta_p)$, we solve the design problem~\eqref{eqn:design1} over $(\mathbf R,\mathbf D^R)$ using a nonlinear mixed-integer solver (Ant Colony Optimization; \citealp{dorigo2005ant}), with a time limit of 9{,}000 seconds.

\paragraph{Variants of our method.}
In addition to our main procedure, we consider three variants.
The first variant (\emph{ELI with Randomization}) follows Algorithm~\ref{alg:coefficients3b}: we independently flip each optimal assignment with probability $\iota\in\{0.05,0.10\}$ for $R_i=1$; for simplicity, we perform a single randomization draw (no re-randomization).  
The second variant (\emph{ELI-Unobs}) is described in Appendix~\ref{sec:model_assisted} and allows for partial network information: it selects a pilot of 70 units from the sixth village only and assumes the pre-experiment network is observed for the first 200 units in the main village. For missing edges, we use a simple Erd\H{o}s–R\'enyi imputation: draw $p\sim\mathrm{Uniform}(0,1)$ once per replication and, conditional on $p$, draw missing $A_{ij}$ i.i.d.\ $\mathrm{Bernoulli}(p)$. We alternate (a) Monte Carlo evaluation of $\widehat V_{\widehat\sigma_p,\widehat\eta_p}$ over imputed edges and (b) the design optimization over $(\mathbf R,\mathbf D^R)$. Full details are in Appendix~\ref{sec:model_assisted}. The third variant is the same but instead of using a linear model for the variance (with positivity constraints) it uses a more flexible model with a polynomial transformation of $(D_i,G_i)$ of degree four and the same positivity constraint (defined as \emph{ELI flexible}).

 \subsubsection{Competing methods}\label{sec:competitors}

We benchmark our procedure against a set of alternatives with either $n=400$ or $n_+=470$ (the latter equals $400$ plus the $70$ units used in our pilot). Competitors under the augmented budget are labeled with a ``$+$'' suffix. We consider:
\begin{itemize}
\item \textbf{Graph clustering (3-$\epsilon$ net).}
We implement the 3-$\epsilon$ net clustering of \citet{ugander2013graph} as follows: start with all nodes uncovered; iteratively select an uncovered node uniformly at random as a center; mark first and second-degree neighbors of centers as ineligible to be centers. Next, assign each node to the cluster of its closest center, breaking ties at random. For each cluster, draw a single treatment from $\mathrm{Bernoulli}(1/2)$ and assign that treatment to all units in the cluster. (We let $n\in\{400,470\}$.)

\item \textbf{Saturation designs on clustered graphs.}
Because classical saturation designs require a partition, we first form clusters using the 3-$\epsilon$ net above, then apply the two-stage saturation framework of \citet{baird2018optimal}. We implement several versions using the authors’ software. Within a cluster assigned probability $p$, units are treated independently with probability $p$. We consider:
\begin{enumerate}
\item \textbf{Saturation1:} the cluster-level probability $p$ is drawn independently and uniformly from $[0,1]$ across clusters; $n\in\{400,470\}$.
\item \textbf{Saturation2:} choose the set of cluster-level probabilities to minimize the sum of asymptotic standard errors of the direct and spillover effects (ITT and SNDT in \citet{baird2018optimal}), using their formulas. For comparability with an oracle benchmark, we plug in the true intra-cluster correlation $\alpha$ and assume homoskedastic idiosyncratic variance in those formulas (as assumed in \citet{baird2018optimal}).
\item \textbf{Saturation3:} as in Saturation2, but the objective minimizes the sum of standard errors for the direct, spillover, and slope effects as defined in \citet{baird2018optimal}.
\end{enumerate}

\item \textbf{Random assignment.} Sample $n_+=470$ participants uniformly at random and assign treatment i.i.d.\ $\mathrm{Bernoulli}(1/2)$.
\end{itemize}

\subsubsection{Results}

Table~\ref{tab:realworld} summarizes our main results on the real-world network by reporting the sampling variance of the estimator(s) across columns indexed by $(\beta_1,\beta_2)$. The left panel uses the network with strong ties; the right panel uses weak ties. The top panel targets only the global (overall) treatment effect, whereas the bottom panel estimates both the direct and the spillover effects (two estimands/two estimators, as in Appendix~\ref{app:multiple}). Across all specifications, our proposed method (ELI) uniformly attains the lowest variance relative to all competitors. The gains are larger when $(\beta_1,\beta_2)$ increase—that is, under stronger heteroskedasticity. Consistent with design-based identification, Figure~\ref{fig:bias} in the Appendix shows bias is zero.

For observed networks, the two ELI variants—(i) adding randomization and (ii) using a more flexible estimator—perform similarly to the baseline ELI, suggesting these variants have comparable performance to ELI even under greater flexibility. With a partially observed network, the only valid benchmark we consider is random allocation; in this case, ELI again dominates uniformly.

Figure~\ref{fig:best_fig} complements these findings. In the left panel (real-world networks), we plot the percentage reduction in sample size that would be required for the best-performing alternative to match the variance of our main ELI specification for the overall effect. For the “unobserved network’’ case, the comparison is between ELI with a partially observed network and random allocation only. The implied savings are up to 40 percentage points.  

In the right panel (simulated networks), we plot the log-variance of ELI against the competitor with the lowest median variance, which randomizes using the same total number of units as ELI (participants plus pilot). Under heteroskedasticity, ELI uniformly outperforms, with larger advantages as heteroskedasticity intensifies. Under homoskedasticity ($( \beta_1,\beta_2 )=(0,0)$), ELI still dominates except in a single case where graph clustering with 70 additional participants  slightly outperforms ELI.\footnote{Because the simulated-network analysis does not include a separate cluster as in the real-world setting, we do not report partially observed–network results for simulations.} Additional results are reported in Appendix~\ref{sec:aa2}.

\begin{figure}[!ht]
\centering 
\includegraphics[scale=0.5]{./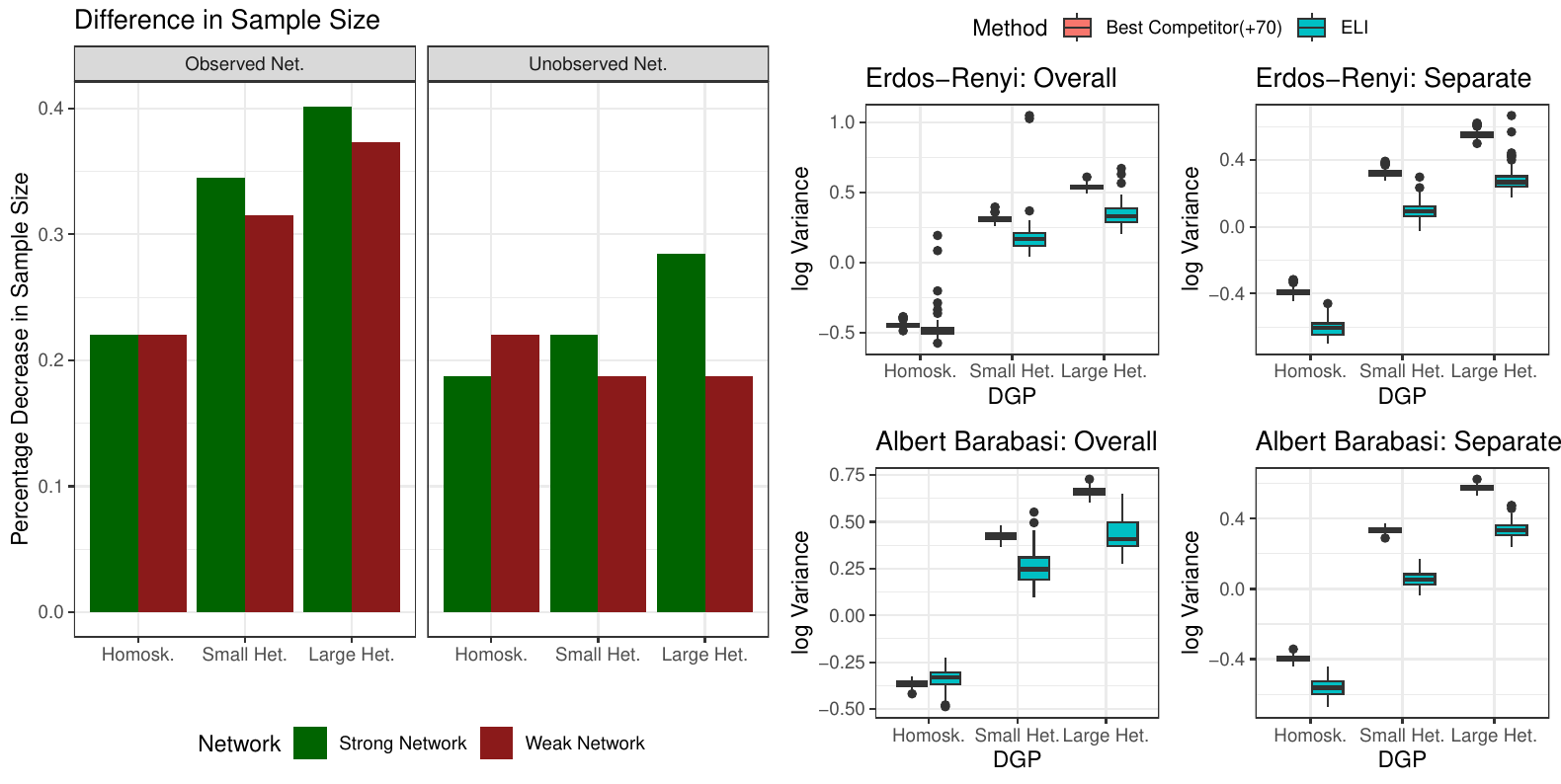}
\caption{Left panel: Percentage reduction in the total number of units required for the best-performing competitor to match the variance of ELI for the overall treatment effect in simulations using the real-world network. The “Unobserved network’’ case compares ELI with a partially observed network to random allocation only. “Total units’’ counts both participants and pilot units. Right panel: Log variance of ELI (blue) versus the competitor with the lowest median variance, which randomizes using the same total number of units as ELI (participants + pilot). %See Section~6 for design details.
}
\label{fig:best_fig} 
\end{figure}

\begin{table} [!htbp] \centering 
  \caption{
  The panels at the top reports the 
variance for estimating the overall effect on the network from \cite{cai2015social}, using the first five villages as the population of interest ($N=832$). The panel at the bottom reports the worst-case variance when estimating separately the direct and spillover effects. Columns correspond to alternative designs evaluated at different values of $(\beta_1,\beta_2)$. “ELI’’ denotes our baseline specification: $416$ participants in the main experiment and a pilot of $70$ units. We also report two observed-network variants (ELI with an added randomization layer and ELI with a more flexible variance estimator). The partially observed-network specification (“ELI, partially observed’’) uses only the first sub-block (the first $200$ observations) in the main experiment, with a pilot of $70$ units drawn from the sixth village. Methods marked with a “$+$’’ run the main experiment on $416\!+\!70$ units, whereas methods without “$+$’’ use $416$ participants. All competitors, except Random All$+$, exploit full knowledge of the network structure.  
 } \label{tab:realworld}
  
\scalebox{0.7}{\begin{tabular}{@{\extracolsep{5pt}} cccc|ccc} 
\\[-1.8ex]\hline 
\hline \\[-1.8ex] 
& Strong & &  & Weak &  &  \\ 
\textbf{Overall Effect} & (0,0) & (0.5,0.5)  & (0.5,1) & (0,0) & (0.5,0.5) & (0.5,1) \\ 
\hline \\[-1.8ex] 
ELI & $0.551$ & $1.134$ & $1.404$ & $0.769$ & $1.442$ & $1.665$ \\ 
ELI + Randomization 5\% & $0.601$ & $1.239$ & $1.538$ & $0.898$ & $1.710$ & $1.993$ \\ 
ELI + Randomization 10\% & $0.657$ & $1.355$ & $1.684$ & $1.057$ & $2.039$ & $2.397$ \\ 
ELI Flexible & $0.582$ & $1.082$ & $1.486$ & $0.769$ & $1.448$ & $1.966$ \\ 
ELI Flexible + Randomization 5\% & $0.626$ & $1.188$ & $1.787$ & $0.893$ & $1.730$ & $2.313$ \\ 
ELI Flexible + Randomization 10\% & $0.680$ & $1.304$ & $1.912$ & $1.043$ & $2.028$ & $2.713$ \\ 
ELI - Unobserved Net & $0.914$ & $1.829$ & $2.183$ & $2.018$ & $4.139$ & $ 5.067$ \\
ELI Unobs + Randomization 5\% & $0.951$ & $1.904$ & $2.286$ & $2.128$ & $4.362$ & $5.357$ \\ 
ELI Unobs + Randomization 10\% & $0.987$ & $1.981$ & $2.391$ & $2.238$ & $4.580$ & $5.636$ \\ 
%ELI Flexible   \\ 
Random All+ & $1.107$ & $2.249$ & $2.876$ & $2.430$ & $4.827$ & $6.127$ \\ 
Graph Clustering+ & $0.694$ & $1.591$ & $2.038$ & $0.874$ & $1.830$ & $2.345$ \\ 
Saturation1+ & $0.913$ & $1.985$ & $2.513$ & $1.523$ & $3.143$ & $3.866$ \\ 
Graph Clustering & $0.793$ & $1.847$ & $2.420$ & $0.989$ & $2.104$ & $2.623$ \\ 
Saturation1 & $1.059$ & $2.259$ & $2.940$ & $1.736$ & $3.603$ & $4.482$ \\ 
Saturation2+ & $0.719$ & $1.669$ & $2.104$ & $0.944$ & $1.973$ & $2.418$ \\ 
Saturation3+ & $0.931$ & $2.171$ & $2.772$ & $1.700$ & $3.844$ & $4.829$ \\ 
\hline \\[-1.8ex] 
\end{tabular} }
\scalebox{0.7}{\begin{tabular}{@{\extracolsep{5pt}} cccc|ccc} 
\\[-1.8ex]\hline 
\hline \\[-1.8ex]
& Strong & &  & Weak &  &  \\ 
\textbf{Treatment and Spillovers} & (0,0)  & (0.5,0.5) & (0.5,1) & (0,0) & (0.5,0.5) & (0.5,1)  \\ 
\hline \\[-1.8ex] 
ELI & $0.495$ & $1.028$ & $1.299$ & $0.790$ & $1.525$ & $1.825$ \\ 
ELI + Randomization 5\% & $0.510$ & $1.059$ & $1.350$ & $0.904$ & $1.769$ & $2.125$ \\ 
ELI + Randomization 10\% & $0.525$ & $1.099$ & $1.407$ & $1.035$ & $2.050$ & $2.472$ \\ 
ELI Flexible & $0.496$ & $1.002$ & $1.261$ & $0.712$ & $1.507$ & $2.263$ \\ 
ELI Flexible + Randomization 5\% & $0.511$ & $1.090$ & $1.400$ & $0.816$ & $1.805$ & $2.632$ \\ 
ELI Flexible + Randomization 10\% & $0.527$ & $1.124$ & $1.446$ & $0.937$ & $2.051$ & $2.956$ \\ 
ELI - Unobserved Net & $0.596$ & $1.286$ & $1.639$ & $1.613$ & $3.133$ & $4.165$ \\ 
ELI Unobs + Randomization 5\% & $0.599$ & $1.316$ & $1.676$ & $1.685$ & $3.281$ & $4.315$ \\
ELI Unobs + Randomization 10\% & $0.607$ & $1.348$ & $1.718$ & $1.755$ & $3.426$ & $4.464$ \\ 
%ELI Flexible   \\ 
Random All+ & $0.641$ & $1.431$ & $1.882$ & $1.813$ & $3.580$ & $4.477$ \\ 
Graph Clustering+ & $0.864$ & $2.147$ & $2.600$ & $1.838$ & $3.528$ & $4.431$ \\ 
Saturation1+ & $0.652$ & $1.500$ & $1.942$ & $1.403$ & $2.807$ & $3.569$ \\ 
Graph Clustering & $0.999$ & $2.491$ & $3.001$ & $2.165$ & $4.022$ & $5.302$ \\ 
Saturation1 & $0.760$ & $1.755$ & $2.286$ & $1.654$ & $3.283$ & $4.068$ \\ 
Saturation2+ & $0.773$ & $1.900$ & $2.371$ & $1.516$ & $2.986$ & $3.724$ \\ 
Saturation3+ & $0.801$ & $1.910$ & $2.449$ & $2.231$ & $4.202$ & $5.155$ \\  
\hline \\[-1.8ex] 

\end{tabular} }
\end{table}

\section{Conclusions} \label{sec:conclusions}

This paper introduced a method for designing experiments under network interference. We proposed a two-wave design that selects participation indicators and treatment assignments to minimize the variance of a pre-specified linear estimator, and we provided the first statistical framework for such two-wave experiments with interference, including regret guarantees.

Our main analysis considers settings in which the complete network is observed. In the Appendix and in simulations, we show how the framework extends to partially observed networks. Our numerical findings suggest that imputing missing edges can reduce the estimator's variance. A comprehensive theoretical analysis of partially observed networks, including model selection for the imputation step, remains an important direction for future work.

We focused on local interference. Future research should study designs under more general interaction structures in which interference propagates across larger portions of the network. Understanding how network topology and alternative exposure mappings affect the performance of the proposed design mechanisms is another open question.

\bibliography{my_bib2}
\bibliographystyle{chicago}

\appendix

\numberwithin{equation}{section}
\makeatletter 
% "activate" the preparatory code, but for section-level headers only
\newcommand{\section@cntformat}{Appendix \thesection:\ }
\makeatother
 
 \numberwithin{figure}{section}
\numberwithin{algorithm}{section}
 \numberwithin{table}{section}
\makeatletter 
 
 \onehalfspacing

\section{Extensions: detailed description}  \label{sec:extension_app}

\subsection{ Randomized treatments} \label{app:randomization}

Algorithm~\ref{alg:coefficients3b} formalizes a randomized variant of our design. As discussed in Section~\ref{sec:randomization}, the algorithm depends on two user-chosen parameters, $\iota$ and $\zeta$: larger $\iota$ increases randomization away from the optimizer, while larger $\zeta$ allows more variance inflation relative to the optimum. In Step~1, we compute the optimal participants and assignments using pilot information (as in Algorithm~\ref{alg:coefficients2}). In Step~2, we apply an accept/reject re-randomization: for each participant $i$ with $R_i=1$, we flip the optimal assignment $\check D_i^R$ with probability $\iota$, and accept the proposed assignment vector only if its plug-in variance exceeds the optimal plug-in variance by at most $\zeta/\bar n$; otherwise we re-randomize until acceptance.

\begin{thm}\label{thm:regretb}
Under Assumptions~\ref{ass:model}, \ref{ass:modelb}, \ref{ass:sparsity}, \ref{ass:convergence}, \ref{ass:moment}, and \ref{ass:weights_stable}, suppose $\bar n/\underline n=\alpha\in(1,C)$ for some universal constant $C<\infty$, and $\underline n \ge \mathcal{N}_{\max}\,\bar m/(\alpha-1)$. Then
\[
\bar n\!\left[\mathbb V\!\big(\widehat\Gamma \,\big|\, \mathbf R,\mathbf D^R,\mathbf A,\mathbf T\big) - \mathbb V_{\bar n}^\star\right] 
\;\le\; \mathcal O_p\!\Big(\mathcal{N}_{\max}\,\bar m^{-\xi} \;+\; \frac{\mathcal{N}_{\max}^2\,\bar m}{\underline n}\Big) \;+\; \zeta.
\]
\end{thm}

Relative to the non-randomized design, Theorem~\ref{thm:regretb} adds the $\zeta$ term reflecting the variance cap in Step~2: researchers deliberately trade a small amount of model-based precision for additional randomization.

Design-based inference for hypotheses of independent interest (e.g., sharp nulls) is obtained by simulating the same scheme in Step~2 conditional on the optimized pair $(\mathbf R,\check{\mathbf D}^R)$ from Step~1. This conditioning is valid because $(\mathbf R,\check{\mathbf D}^R)$ are functions only of pilot outcomes (and network information), not of main-experiment outcomes.

In practice, although $\zeta$ is central to the theoretical bound, we find that choosing $\iota\in\{0.10,0.05\}$ and performing a single randomization draw (i.e., without re-randomization) typically leads to only a small variance increase, while enabling design-based analyses.

\begin{algorithm}[!ht]
\caption{Main experiment with re-randomization}\label{alg:coefficients3b}
\begin{algorithmic}[1]
\Require $\mathbf A$, $\mathbf T$, pilot output $\big(\mathbf P,\widehat\sigma_p^{2}(\cdot),\widehat\eta_p(\cdot)\big)$, bounds $\underline n,\bar n$, tolerance $\zeta\ge0$, re-randomization probability $\iota\in(0,1/2)$. Let $\mathcal J$ be defined as in~\eqref{eqn:pilot_identity}.
\State \textit{Optimize (as in Algorithm~\ref{alg:coefficients2}):}
\begin{equation}\label{eqn:design1b}
\small
\begin{aligned}
(\check{\mathbf D}^R,\mathbf R)\in\;
&\operatorname*{arg\,min}_{\mathbf r\in\{0,1\}^N,\;\mathbf d^{\,r}\in\{0,1\}^N}
\ \widehat V_{\widehat\sigma_p,\widehat\eta_p}\big(\mathbf r,\mathbf d^{\,r}\big)\\
\text{s.t.}\quad
&\mathbf 1^\top \mathbf r\in[\underline n,\bar n],\qquad
r_j=0\ \ \forall j\in\mathcal J.
\end{aligned}
\end{equation}
\State \textit{Re-randomize (variance-capped):}
\begin{itemize}
\item If $\zeta=0$, set $\mathbf D^R=\check{\mathbf D}^R$.
\item If $\zeta>0$, repeat: for each $i$ with $R_i=1$, draw $B_i\stackrel{\mathrm{iid}}{\sim}\mathrm{Bernoulli}(\iota)$ and set
\[
D_i^R \;=\; \check D_i^R(1-B_i) + (1-\check D_i^R)B_i.
\]
Accept the proposed $\mathbf D^R$ if
\[
\widehat V_{\widehat\sigma_p,\widehat\eta_p}(\mathbf R,\mathbf D^R)
-\widehat V_{\widehat\sigma_p,\widehat\eta_p}(\mathbf R,\check{\mathbf D}^R) \;\le\; \zeta/\bar n;
\]
otherwise, repeat.
\end{itemize}
\State \textit{Collect data:}\quad $\{(Y_i,D_i,T_i,D_{\mathcal{N}_i},\mathcal{N}_i): R_i=1\}$.
\State \textit{Estimate the outcome regression:} obtain a consistent (possibly non-parametric) estimator $\hat m$ of $m$ and define
$\hat m_i := \hat m\big(D_i,g_i(D_{\mathcal{N}_i}),T_i\big)$.
\State \textit{Estimate $\widehat\Gamma$ and its variance:} construct $\widehat\Gamma$ as in~\eqref{eqn:estimator}, and compute
\begin{equation}\label{eqn:final_vvb}
\widehat V = \frac{1}{n^2}\sum_{i=1}^N R_i\,w(i;\mathbf R,\mathbf D^R)\big(Y_i-\hat m_i\big)
\sum_{j\in\mathcal{N}_i\cup\{i\}} R_j\,w(j;\mathbf R,\mathbf D^R)\big(Y_j-\hat m_j\big).
\end{equation}
\State \textbf{Return:}\quad $\widehat\Gamma,\ \widehat V$.
\end{algorithmic}
\end{algorithm}

 \subsection{Multiple estimands and estimators} \label{app:multiple}

 Here, following Section \ref{sec:multiple_estimators}, we provide details on the algorithm in the presence of multiple estimands and estimators (with a single estimator for each estimand). 
The pilot is constructed exactly as in Algorithm~\ref{alg:pilot}. 
Given the pilot, we form plug-in variance estimates for each $w\in\mathcal W$:
\begin{equation}\label{eqn:var2b}
\small
\begin{aligned}
\widehat V^{\,w}_{\widehat\sigma_p,\widehat\eta_p}(\mathbf R,\mathbf D^R)
&= \frac{1}{n^2}\sum_{i:R_i=1} w^2\!\big(i;\mathbf R,\mathbf D^R\big)\,
\widehat\sigma_p^{2}\!\Big(T_i, D_i, g_i(D_{\mathcal{N}_i})\Big) \\
&\quad + \frac{1}{n^2}\sum_{i:R_i=1}\sum_{j\in\mathcal{N}_i} R_j\,
w\!\big(i;\mathbf R,\mathbf D^R\big)\, w\!\big(j;\mathbf R,\mathbf D^R\big)\,
\widehat\eta_p\!\Big(T_i, D_i, g_i(D_{\mathcal{N}_i}),\; T_j, D_j, g_j(D_{\mathcal{N}_j})\Big).
\end{aligned}
\end{equation}

We then mirror Algorithm~\ref{alg:coefficients2} but minimize the worst-case variance across the finite collection $\mathcal W$:
\[
(\check{\mathbf D}^R,\mathbf R)\in 
\operatorname*{arg\,min}_{\mathbf r\in\{0,1\}^N,\;\mathbf d^{\,r}\in\{0,1\}^N}
\ \max_{w\in\mathcal W}\;
\widehat V^{\,w}_{\widehat\sigma_p,\widehat\eta_p}(\mathbf r,\mathbf d^{\,r})
\quad\text{s.t.}\quad
\mathbf 1^\top\mathbf r\in[\underline n,\bar n],\ \ r_j=0\ \forall j\in\mathcal J.
\]
(Here $\mathcal J$ is as in~\eqref{eqn:pilot_identity} and enforces unconfoundedness).

Finally, note that, as in Section~\ref{sec:randomization}, one may introduce a re-randomization step.

For finite $E$, standard central limit theorems yield the marginal asymptotic normality of each $\widehat\Gamma(w)$ (Theorem~\ref{thm:asym_t} applies componentwise). 
Therefore, it suffices to focus on the regret guarantees. Define the oracle minimax variance
\[
\mathbb V_{\bar n}^{\star}(E)
:= \min_{\mathbf R,\mathbf D^R:\,\mathbf 1^\top\mathbf R=\bar n}\ \max_{w\in\mathcal W}\ 
\mathbb V\!\big(\widehat\Gamma(w)\,\big|\,\mathbf R,\mathbf D^R,\mathbf A,\mathbf T\big).
\]

\begin{thm}\label{thm:regret2}
Let Assumptions~\ref{ass:model}, \ref{ass:modelb}, \ref{ass:sparsity}, \ref{ass:convergence}, and \ref{ass:moment} hold, and suppose Assumption~\ref{ass:weights_stable} holds for all $w\in\mathcal W$ with $E < \infty$. Consider a design as in Algorithm \ref{alg:3}. 
If $\bar n/\underline n=\alpha\in(1,C)$ for some universal $C<\infty$ and $\underline n\ge \mathcal{N}_{\max}\,\bar m/(\alpha-1)$, then
\[
\bar n\!\left[\max_{w\in\mathcal W}\mathbb V\!\big(\widehat\Gamma(w)\,\big|\,\mathbf R,\mathbf D^R,\mathbf A,\mathbf T\big)
-\mathbb V_{\bar n}^{\star}(E)\right]
\;\le\; \mathcal O_p\!\Big(\mathcal{N}_{\max}\,\bar m^{-\xi} \;+\; \frac{\mathcal{N}_{\max}^2\,\bar m}{\underline n}\Big) \;+\; \zeta.
\]
\end{thm}

\noindent The proof is in Appendix~\ref{proof:thm:regretb}.

\begin{algorithm}[!ht]
\caption{Main experiment with multiple estimands and estimators}\label{alg:3}
\begin{algorithmic}[1]
\Require $\mathbf A$, $\mathbf T$, pilot output $\big(\mathbf P,\widehat\sigma_p^{2}(\cdot),\widehat\eta_p(\cdot)\big)$, bounds $\underline n,\bar n$, weights $\mathcal W=\{w^1,\ldots,w^E\}$; optionally tolerance $\zeta\ge0$ and re-randomization probability $\iota\in(0,1/2)$. Let $\mathcal J$ be as in~\eqref{eqn:pilot_identity}.
\State \textit{Minimax design:}
\[
(\check{\mathbf D}^R,\mathbf R)\in 
\operatorname*{arg\,min}_{\mathbf r,\mathbf d^{\,r}}
\ \max_{w\in\mathcal W}\ \widehat V^{\,w}_{\widehat\sigma_p,\widehat\eta_p}(\mathbf r,\mathbf d^{\,r})
\ \ \text{s.t.}\ \
\mathbf 1^\top\mathbf r\in[\underline n,\bar n],\ \ r_j=0\ \forall j\in\mathcal J.
\]
\State \textit{(Optional) Re-randomization:}
\begin{itemize}
\item If $\zeta=0$, set $\mathbf D^R=\check{\mathbf D}^R$.
\item If $\zeta>0$, repeat: for each $i$ with $R_i=1$, draw $B_i\sim\mathrm{Bernoulli}(\iota)$ and set 
\[
D_i^R=\check D_i^R(1-B_i)+(1-\check D_i^R)B_i,\quad D_i^R=0\ \text{if }R_i=0,
\]
accepting the proposal only if 
\[
\max_{w\in\mathcal W}\!\left\{\widehat V^{\,w}_{\widehat\sigma_p,\widehat\eta_p}(\mathbf R,\mathbf D^R)
-\widehat V^{\,w}_{\widehat\sigma_p,\widehat\eta_p}(\mathbf R,\check{\mathbf D}^R)\right\}\le \zeta/\bar n .
\]
\end{itemize}
\State \textit{Collect data:}\quad $\{(Y_i,D_i,T_i,D_{\mathcal{N}_i},\mathcal{N}_i): R_i=1\}$.
\State \textit{Estimate effects and variances:} for each $w\in\mathcal W$, compute $\widehat\Gamma(w)$ via~\eqref{eqn:estimator2} and
\[
\widehat V^{\,w}
=
\frac{1}{n^2}\sum_{i=1}^N R_i\, w\!\big(i;\mathbf R,\mathbf D^R\big)\big(Y_i-\hat m_i\big)
\sum_{j\in\mathcal{N}_i\cup\{i\}} R_j\, w\!\big(j;\mathbf R,\mathbf D^R\big)\big(Y_j-\hat m_j\big),
\]
where $\hat m_i:=\hat m\big(D_i,g_i(D_{\mathcal{N}_i}),T_i\big)$ for $\hat m$ being a consistent estimator for $m$.
\State \textbf{Return:}\quad $\{\widehat\Gamma(w),\widehat V^{\,w}\}_{w\in\mathcal W}$.
\end{algorithmic}
\end{algorithm}

\subsection{Higher-Order Dependence}\label{sec:higher_order_dependence}

In this section, we relax the local-dependence assumption and consider the general case in which unobservables exhibit $M$-dependence. Let $\mathcal{N}_i^u$ denote the set of individuals at graph distance exactly $u$ from $i$ (i.e., connected to $i$ by $u$ edges), and write $\mathcal{N}_i^M$ for the $M$-th degree neighbors.

We generalize Assumption~\ref{ass:model} with the following conditions.

\begin{ass}[Model under higher-order dependence]\label{ass:higherorder}
Assume that for all $i \in \{1,\ldots,N\}$, for some finite $M<\infty$,
\[
\begin{aligned}
&\varepsilon_i \ \perp\!\!\!\perp\ \{\varepsilon_j\}_{\,j \notin \,\cup_{u=1}^M \mathcal{N}_i^u \,\cup\,\{i\}} \ \Big| \ \mathbf{A},\mathbf{T}, \\[0.25em]
&(\varepsilon_i,\varepsilon_j)\ \stackrel{d}{=}\ (\varepsilon_{i'},\varepsilon_{j'}) \ \Big| \ \mathbf{A},\mathbf{T}
\quad \text{for all } (i,j,i',j') \text{ such that } i\in\mathcal{N}_j^u,\ i'\in\mathcal{N}_{j'}^u,\ T_i=T_{i'},\ T_j=T_{j'}, \\[0.25em]
&\mathcal{N}_{\max} < c, \forall u \in \{1, \cdots, M\}
\end{aligned}
\]
for a universal constant $c<\infty$.
\end{ass}

Assumption~\ref{ass:higherorder} states: (i) unobservables are independent whenever units are separated by more than $M$ edges; (ii) the joint distribution of pairs of neighboring unobservables is the same whenever the corresponding covariates match; and (iii) the maximum degree is uniformly bounded (this can be relaxed).\footnote{Inspecting the derivations in Appendix~\ref{sec:asymp_app} shows that the bounded-degree condition can be replaced by requiring that the maximum degree among sampled units and their neighbors up to order $M$ grows more slowly than $N^{1/4}$.}

We next impose the experimental restriction. Define
\[
\tilde{\mathcal{H}} \;=\; \{1,\ldots,N\} \setminus \bigcup_{i:\,P_i=1}\bigg(\,\bigcup_{u=1}^M \mathcal{N}_i^u \ \cup\ \{i\}\bigg),
\]
the set of units remaining after excluding pilot units and all neighbors of pilot units up to degree $M$.

\begin{ass}[Experimental restriction]\label{ass:covb}
Let the following hold:
\begin{itemize}
\item[(A)] $\{\varepsilon_i\}_{i\in \tilde{\mathcal{H}}}\ \perp\!\!\!\perp\ (\mathbf{R},\mathbf{D}^R)\ \big|\ \mathbf{A},\mathbf{T},\mathbf{P}$;
\item[(B)] $\{\varepsilon_i\}_{i=1}^N\ \perp\!\!\!\perp\ \mathbf{P}\ \big|\ \mathbf{A},\mathbf{T}$;
\item[(C)] $R_j=0$ for all $j \in \bigcup_{i:\,P_i=1}\big(\,\cup_{u=1}^M \mathcal{N}_i^u \ \cup\ \{i\}\big)$.
\end{itemize}
\end{ass}

The following theorem guarantees unbiasedness of the estimator for any design satisfying these restrictions.

\begin{thm}\label{thm:1C}
Under Assumptions~\ref{ass:model}, \ref{ass:higherorder}, and~\ref{ass:covb},
\[
\mathbb{E}\!\Big[\widehat{\Gamma}\ \Big|\ \mathbf{P},\mathbf{A},\mathbf{R},\mathbf{D}^R,\mathbf{T}\Big]
\;=\; \tau_{\mathbf{A},\mathbf{T}}(\mathbf{R},\mathbf{D}^R).
\]
\end{thm}

See Appendix \ref{proof:thm:1C} for the proof. 
The design of the main experiment minimizes the pilot-based plug-in variance subject to Assumption~\ref{ass:covb}, which requires excluding from the main experiment all neighbors of pilot units up to degree $M$.

The pilot procedure follows Algorithm~\ref{alg:pilot} in spirit, with the additional requirement that some units have neighbors within the pilot at degrees $u=1,\ldots,M$. Concretely, add the constraints
\[
\sum_{i=1}^N p_i \sum_{j\in \mathcal{N}_i^u} p_j \ \ge\ \delta
\quad\text{for each } u\in\{1,\ldots,M\}
\]
to~\eqref{eqn:opt_pilot}, so that covariances at distances $1$ through $M$ are identified. With higher-order dependence, the variance decomposes as
\begin{equation}\label{eqn:var2}
\begin{aligned}
n\,\mathbb{V}\!\big(\widehat{\Gamma}\ \big|\ \mathbf{A},\mathbf{T},\mathbf{R},\mathbf{D}^R\big)
&= \frac{1}{n}\sum_{i:\,R_i=1} w(i;\mathbf{R},\mathbf{D}^R)^2\ \mathrm{Var}\!\big(Y_i \mid \mathbf{A},\mathbf{R},\mathbf{D}^R,\mathbf{T}\big)\\
&\quad + \sum_{u=1}^{M} \frac{1}{n}\sum_{i:\,R_i=1}\sum_{j\in \mathcal{N}_i^u} R_j\, w(i;\mathbf{R},\mathbf{D}^R)\, w(j;\mathbf{R},\mathbf{D}^R)\,
\mathrm{Cov}\!\big(Y_i,Y_j \mid \mathbf{A},\mathbf{R},\mathbf{D}^R,\mathbf{T}\big).
\end{aligned}
\end{equation}
Thus, the variance aggregates covariances between each participant and their neighbors up to degree $M$.

In particular,
\[
\mathrm{Var}\!\big(Y_i \mid \mathbf{A},\mathbf{R},\mathbf{D}^R,\mathbf{T}\big)
= \mathrm{Var}\!\Big(r\big(D_i,g_i(D_{\mathcal{N}_i}),T_i,\varepsilon_i\big) \,\Big|\, D_i, g_i(D_{\mathcal{N}_i}), T_i\Big)
= \sigma^2\!\big(T_i, D_i, g_i(D_{\mathcal{N}_i})\big),
\]
so the variance function is identifiable. Similarly, for $j\in\mathcal{N}_i^u$,
\[
\small
\begin{aligned}
\mathrm{Cov}\!\big(Y_i,Y_j \mid \mathbf{A},\mathbf{R},\mathbf{D}^R,\mathbf{T}\big)
&= \mathrm{Cov}\!\Big(r\big(D_i,g_i(D_{\mathcal{N}_i}),T_i,\varepsilon_i\big),\ r\big(D_j,g_j(D_{\mathcal{N}_j}),T_j,\varepsilon_j\big)\ \Big|\ j\in\mathcal{N}_i^u,\ \mathbf{D}^R\Big) \\
&=: \eta_u\!\big(T_i, D_i, g_i(D_{\mathcal{N}_i}),\ T_j, D_j, g_j(D_{\mathcal{N}_j})\big).
\end{aligned}
\]
That is, the covariance between two units at shortest-path distance $u$ depends only on (a) $u$; (b) their own assignments; (c) the assignments of their respective neighbors through $g_i,g_j$; and (d) the network statistics $(T_i,T_j)$.

The experimental design consists of minimizing the variance subject to Assumption~\ref{ass:covb}. Below we discuss inference. 
\begin{thm}\label{thm:multipledependence}
Suppose Assumptions~\ref{ass:model}, \ref{ass:higherorder}, and \ref{ass:covb} hold. If $n\,\mathbb{V}\!\big(\widehat{\Gamma}\ \big|\ \mathbf{A},\mathbf{T},\mathbf{R},\mathbf{D}^R\big) > 0$ almost surely, then
\[
\frac{\sqrt{n}\,\big(\widehat{\Gamma} - \mathbb{E}[\widehat{\Gamma}\mid \mathbf{A}, \mathbf{T}, \mathbf{R}, \mathbf{D}^R] \big)}
{\sqrt{\,n\,\mathbb{V}\!\big(\widehat{\Gamma}\ \big|\ \mathbf{A}, \mathbf{T}, \mathbf{R}, \mathbf{D}^R\big)}} \ \rightarrow_d\ \mathcal{N}(0,1).
\]
\end{thm}
The proof is in Appendix~\ref{sec:asymp_app}. A consistent variance estimator is
\begin{equation}\label{eqn:var2hat}
\begin{aligned}
n\,\widehat{V}
= &\frac{1}{n}\sum_{i:\,R_i=1} w(i;\mathbf{R},\mathbf{D}^R)\,\big(Y_i-\hat m_i\big)^2
\\ &+ \sum_{u=1}^{M} \frac{1}{n}\sum_{i:\,R_i=1}\sum_{j\in \mathcal{N}_i^u} R_j\, w(i;\mathbf{R},\mathbf{D}^R)\,\big(Y_i-\hat m_i\big)\, w(j;\mathbf{R},\mathbf{D}^R)\,\big(Y_j-\hat m_j\big).
\end{aligned}
\end{equation}
Regret guarantees are analogous to those in the main text (under uniform degree bounds) and are omitted for brevity.

 \subsection{Design with Partial Network Information}\label{sec:model_assisted}

In this section, we consider the case in which the researcher observes only partial network information. The main assumption is that the population contains at least two disconnected components (clusters), and the pilot is drawn from a component that is disconnected from the set of units eligible for the main experiment.

\noindent\textit{Experimental protocol.}

\noindent\textbf{1. Pilot study.}
Researchers select a random sample of individuals and set $P_i=1$ for those units. This pilot set is assumed to be disconnected from all other eligible units. Assign treatments $D_i \stackrel{\text{iid}}{\sim} \mathrm{Bernoulli}(0.5)$ for $P_i=1$. The pilot may be drawn from a disconnected component $\mathcal H$ (e.g., a village \citep{banerjee2013diffusion}, a school \citep{paluck2016changing}, or a region \citep{muralidharan2017general}). Researchers observe
\[
\big\{P_i\,(Y_i, D_i, T_i, D_{\mathcal N_i})\big\},\qquad i\in\mathcal H:\quad \big(\{i\}\cup\mathcal N_i\big)\cap\{1,\ldots,N\}=\emptyset,
\]
where $\{1,\ldots,N\}$ denotes the set of units eligible for the main experiment. From the pilot, estimate $(\widehat\sigma_p,\widehat\eta_p)$; note that $\widehat\eta_p$ requires that some pilot units have at least one pilot neighbor.

\noindent\textbf{2. Survey.}
Researchers collect partial network information for a random subset $\mathcal S\subseteq\{1,\ldots,N\}$ of eligible units:
\[
\tilde{\mathbf A}=\big(\mathcal N_i,\ i\in\mathcal S\big),\qquad \mathbf T=(T_1,\ldots,T_N).
\]

\noindent\textbf{3. Main experiment.}
Select participants and treatment assignments by minimizing the posterior (or prior-predictive) expected plug-in variance:
\begin{equation}\label{eqn:design_unobserved}
\min_{\mathbf r,\mathbf d^{\,r}}\ \mathbb E_{\mathbf A}\!\left[\ \widehat V_{N,\widehat\sigma_p,\widehat\eta_p}(\mathbf r,\mathbf d^{\,r})\ \Big|\ \tilde{\mathbf A},\mathbf T\right]
\quad \text{s.t.}\quad
\sum_{i=1}^N r_i \in [\underline n,\bar n],\ \ r_i=0\ \ \forall i\in\mathcal H,
\end{equation}
where the expectation is with respect to the distribution of $\mathbf A$ given $(\tilde{\mathbf A},\mathbf T)$ under a network prior $\mathcal P_{\mathbf T}$. Additional constraints (e.g., bounds on weights as in \eqref{eqn:design1}) and optional randomization (as in Algorithm~\ref{alg:coefficients3b}) can be incorporated. By construction, no eligible unit in the main experiment belongs to the pilot component.

\noindent\textbf{4. Second survey and analysis.}
For each participant, collect
\[
\big\{R_i\,(Y_i, D_i, D_{j\in\mathcal N_i}, \mathcal N_i)\big\},
\]
construct $\widehat{\Gamma}$ as in \eqref{eqn:estimator}, and estimate variance $\widehat V$ as in \eqref{eqn:final_vv}. Inference proceeds as in Theorem~\ref{thm:asym_t}, provided the exposure mapping $g_i(D_{\mathcal N_i})$ is observed for participants (e.g., $g_i(D_{\mathcal N_i})=\sum_{k\in\mathcal N_i} D_k$).

\medskip
\noindent The experiment therefore consists of four steps: (i) a pilot (with pilot neighbors observed), (ii) a first survey collecting partial network data, (iii) the design stage, and (iv) the analysis. Notably, $\widehat{\Gamma}$ and the pilot-based functions $(\widehat\sigma_p^2,\widehat\eta_p)$ can be formed without observing the \emph{entire} network; it suffices to observe outcomes, covariates, and treatment assignments for participants and their neighbors.

In the main experiment, we optimize the expected variance, holding $(\widehat\sigma_p^2,\widehat\eta_p)$ fixed from the pilot and averaging over the distribution of missing edges. The network prior can be obtained from a formation model \citep[e.g.,][]{breza2017using}. We solve the objective by alternating (a) Monte Carlo draws of the missing edges and (b) mixed-integer nonlinear optimization over $(\mathbf R,\mathbf D^R)$; Section~\ref{sec:numerics} shows this performs well in practice.\footnote{One could instead adopt a fully Bayesian approach by placing priors on potential outcomes and integrating over the joint posterior. Developing such a hierarchical model is beyond our scope and left for future work.}

\begin{prop}\label{prop:unobserved}
Let Assumptions~\ref{ass:model} and~\ref{ass:modelb} hold. Then the experimental design in this section satisfies the restrictions in Proposition~\ref{ass:cov}.
\end{prop}

\subsection{Minimax Design in the Absence of Pilots}

Whenever the variance and covariance functions are unavailable to the researcher, we devise an optimization algorithm, worst-case over variances and covariances.  

Suppose that the researcher has prior knowledge on  
\begin{equation} 
\sigma^2 \in \mathcal{S}, \quad \eta \in \mathcal{E},  
\end{equation} 
where for some bound $B \in (0, \infty)$, 
\begin{equation}
\begin{aligned} 
&\mathcal{S} = \{f: \{0,1\} \times \mathbb{Z}^2 \mapsto \mathbb{R}_+, \quad ||f||_{\infty} \le B^2 \} \\
&\mathcal{E} = \{g: \{0,1\} \times \mathbb{Z}^2 \times \{0,1\} \times \mathbb{Z}^2 \mapsto [-B^2, B^2] \} . 
\end{aligned}  
\end{equation} 
The function class encodes upper and lower bounds on the variance and covariance function.

Then in this case we can minimize the worst-case variance, taking form 
\begin{equation} \label{eqn:consaa}
\begin{aligned} 
& \quad  \sup_{\sigma^2 \in \mathcal{S}, \eta \in \mathcal{E}} \hat{V}_{\sigma,\eta}(\cdot)
. \end{aligned} 
\end{equation} 

The optimization can be written with respect to additional parameters $\sigma_i^2$ which denote the variance of each element $i$ and the parameters $\eta_{i,j}$ which denote the covariance between $i,j$. The supremum is taken over a finite set of such parameters, under the constraint that $\sigma_i^2 = \sigma_j^2$ whenever $i$ and $j$ have the same treatment status, number of treated neighbors and $T_i = T_j$. Similarly for any pair $(\eta_{i,j}, \eta_{u,v})$.

\subsection{Algorithmic description for variance and covariance estimation} 

In Algorithm  \ref{alg:pilot_varcov} we formalize estimation of the variance and covariance recommended in Section \ref{sec:guide_practice}.

\begin{algorithm}[!ht]
\caption{Pilot variance–covariance estimation}\label{alg:pilot_varcov}
\begin{algorithmic}[1]
\Require Network $\mathbf A$, types $\mathbf T$, pilot indicator $\mathbf P$; pilot data $\{(Y_i,D_i,T_i,D_{\mathcal{N}_i}): P_i=1\}$; exposure mapping $g_i(\cdot)$; feature map $f(\cdot)$ for $W_i$; outcome model class for $\hat m_p$; prior lower and upper bounds on the correlations between two individuals $(\underline{\alpha}, \bar{\alpha})$ 
\Ensure $(\widehat\sigma_p^{2}(\cdot),\,\widehat\eta_p(\cdot))$
\State \textit{Outcome regression (pilot).} Fit $\hat m_p$ to $\{(Y_i,D_i,g_i(D_{\mathcal{N}_i}),T_i): P_i=1\}$ and compute residuals
\[
\hat\varepsilon_i \;:=\; Y_i - \hat m_p\!\big(D_i, g_i(D_{\mathcal{N}_i}), T_i\big),\qquad (P_i=1).
\]
\State \textit{Polynomial transformation.} For each pilot unit, set
\[
W_i \;:=\; f\!\big(T_i, D_i, g_i(D_{\mathcal{N}_i})\big).
\]
\State \textit{Nonnegative least squares variance fit.} Solve
\[
\hat\beta \in \arg\min_{\beta}\; \sum_{i:P_i=1}\big(\hat\varepsilon_i^2 - W_i^\top \beta\big)^2
\quad \text{s.t.}\quad W_i^\top \beta \ge 0\ \ \forall\, i\ (P_i=1).
\]
Define
\[
\widehat\sigma_p^{2}\!\big(T_i, D_i, g_i(D_{\mathcal{N}_i})\big) \;:=\; \max\{W_i^\top \hat\beta,\,0\},
\qquad
\widehat\sigma_i \;:=\; \sqrt{\widehat\sigma_p^{2}\!\big(T_i, D_i, g_i(D_{\mathcal{N}_i})\big)}.
\]
\State \textit{Neighbor edge set (pilot).} Let $\mathcal E_p := \{(i,j): P_i=P_j=1,\ A_{ij}=1,\ i<j\}$.
\State \textit{Constant correlation fit for covariances.} For each $(i,j)\in\mathcal E_p$, set $Z_{ij} := \widehat\sigma_i\,\widehat\sigma_j$. Compute
\[
\widehat\alpha \;:=\; 
\frac{\sum_{(i,j)\in \mathcal E_p} Z_{ij}\,\hat\varepsilon_i \hat\varepsilon_j}
     {\sum_{(i,j)\in \mathcal E_p} Z_{ij}^2},
\qquad \text{(optionally truncate $\widehat\alpha$ to $[\underline{\alpha},\bar{\alpha}]$).}
\]
\State \textit{Covariance function.} For neighbors $i\in\mathcal{N}_j$,
\[
\widehat\eta_p\!\big(T_i, D_i, g_i(D_{\mathcal{N}_i}),\, T_j, D_j, g_j(D_{\mathcal{N}_j})\big)
\;:=\; \widehat\alpha\;\widehat\sigma_i\,\widehat\sigma_j,
\]
and set $\widehat\eta_p(\cdot)=0$ otherwise.
\State \textbf{Return} $(\widehat\sigma_p^{2}(\cdot),\,\widehat\eta_p(\cdot))$.
\end{algorithmic}
\end{algorithm}

\section{Additional Tables and Figures} \label{sec:aa2}

We collect results of the simulated network in Table \ref{tab:1}, and Table \ref{tab:3}. Each table reports the variance averaged over two-hundred replications. Each design corresponds to a different set of parameters $(\beta_1, \beta_2)$, which can be found at the top of the table. In Figure \ref{fig:bias} we report the box plot for the bias.

\begin{table}[!htbp] \centering 
  \caption{Simulated network. Variance of the overall effect (sum of spillover and treatment effects). $200$ replications. Each column corresponds to different values of the coefficient. Panel at the top collects results for the Erd\H{o}s-Rényi graph and at the bottom for the Albert-Barabasi graph.} 
  \label{tab:1} 
\begin{tabular}{@{\extracolsep{-1.7pt}} ccccccccc} 
\\[-1.8ex]\hline 
\hline \\[-1.8ex] 
ER & (0,0) & (0, 0.5) & (0,1) & (0, 1.5) & (0.5,0.5) & (0.5,1) & (0.5,1.5) & (1,1.5) \\ 
\hline \\[-1.8ex] 
ELI & $0.624$ & $0.927$ & $1.194$ & $1.415$ & $1.199$ & $1.414$ & $1.637$ & $1.853$ \\ 
Random All+ & $1.162$ & $1.739$ & $2.315$ & $2.891$ & $2.329$ & $2.905$ & $3.479$ & $4.068$ \\ 
Graph Clust+ & $0.640$ & $0.991$ & $1.343$ & $1.694$ & $1.361$ & $1.713$ & $2.063$ & $2.434$ \\ 
Saturation1+ & $0.908$ & $1.378$ & $1.849$ & $2.316$ & $1.859$ & $2.330$ & $2.801$ & $3.282$ \\ 
Graph Clust & $0.767$ & $1.188$ & $1.607$ & $2.029$ & $1.631$ & $2.051$ & $2.471$ & $2.916$ \\ 
Saturation1 & $1.090$ & $1.654$ & $2.217$ & $2.781$ & $2.231$ & $2.794$ & $3.358$ & $3.932$ \\ 
Saturation2+ & $0.679$ & $1.047$ & $1.416$ & $1.783$ & $1.430$ & $1.800$ & $2.169$ & $2.550$ \\ 
Saturation3+ & $0.993$ & $1.587$ & $2.177$ & $2.771$ & $2.178$ & $2.771$ & $3.364$ & $3.954$ \\ 
\hline \\[-1.8ex] 
\end{tabular} 
\begin{tabular}{@{\extracolsep{-1.7pt}} ccccccccc} 
\\[-1.8ex]\hline 
\hline \\[-1.8ex] 
AB & (0,0) & (0, 0.5) & (0,1) & (0, 1.5) & (0.5,0.5) & (0.5,1) & (0.5,1.5) & (1,1.5) \\ 
\hline \\[-1.8ex] 
ELI & $0.714$ & $0.909$ & $1.278$ & $1.566$ & $1.294$ & $1.548$ & $1.595$ & $2.031$ \\ 
Random All+ & $1.144$ & $1.714$ & $2.284$ & $2.851$ & $2.299$ & $2.874$ & $3.482$ & $4.028$ \\ 
Graph Clust+ & $0.693$ & $1.098$ & $1.503$ & $1.908$ & $1.531$ & $1.938$ & $2.060$ & $2.773$ \\ 
Saturation1+ & $0.936$ & $1.435$ & $1.936$ & $2.434$ & $1.950$ & $2.451$ & $2.800$ & $3.464$ \\ 
Graph Clust & $0.837$ & $1.325$ & $1.811$ & $2.299$ & $1.845$ & $2.333$ & $2.471$ & $3.338$ \\ 
Saturation1 & $1.132$ & $1.733$ & $2.336$ & $2.934$ & $2.354$ & $2.955$ & $3.358$ & $4.179$ \\ 
Saturation2+ & $0.732$ & $1.152$ & $1.572$ & $1.992$ & $1.594$ & $2.015$ & $2.169$ & $2.882$ \\ 
Saturation3+ & $1.091$ & $1.762$ & $2.433$ & $3.103$ & $2.425$ & $3.096$ & $3.364$ & $4.425$ \\ 
\hline \\[-1.8ex] 
\end{tabular} 
\end{table}

\begin{table}[!htbp] \centering 
  \caption{Simulated network. Maximum variance between estimator of the direct treatment and spillover effect. $200$ replications. Each column corresponds to different values of the coefficient. Panel at the top collects results for the Erd\H{o}s-Rényi graph and at the bottom for the Albert-Barabasi graph.} 
  \label{tab:3} 
\begin{tabular}{@{\extracolsep{-1.7pt}} ccccccccc} 
\\[-1.8ex]\hline 
\hline \\[-1.8ex] 
ER & (0,0) & (0, 0.5) & (0,1) & (0, 1.5) & (0.5,0.5) & (0.5,1) & (0.5,1.5) & (1,1.5) \\ 
\hline \\[-1.8ex] 
ELI & $0.545$ & $0.782$ & $1.091$ & $1.308$ & $1.102$ & $1.321$ & $1.580$ & $1.864$ \\ 
Random All+ & $0.676$ & $1.037$ & $1.400$ & $1.763$ & $1.379$ & $1.740$ & $2.101$ & $2.441$ \\ 
Graph Clust+ & $1.224$ & $1.674$ & $2.120$ & $2.568$ & $2.601$ & $3.046$ & $3.497$ & $4.424$ \\ 
Saturation1+ & $0.678$ & $1.036$ & $1.395$ & $1.756$ & $1.409$ & $1.769$ & $2.128$ & $2.501$ \\ 
Graph Clust & $1.496$ & $2.038$ & $2.585$ & $3.129$ & $3.173$ & $3.717$ & $4.259$ & $5.397$ \\ 
Saturation1 & $0.825$ & $1.262$ & $1.698$ & $2.136$ & $1.715$ & $2.150$ & $2.588$ & $3.037$ \\ 
Saturation2+ & $0.969$ & $1.393$ & $1.820$ & $2.247$ & $2.053$ & $2.478$ & $2.901$ & $3.564$ \\ 
Saturation3+ & $0.930$ & $1.474$ & $2.016$ & $2.562$ & $1.930$ & $2.473$ & $3.013$ & $3.474$ \\ 
\hline \\[-1.8ex] 
\end{tabular} 
\begin{tabular}{@{\extracolsep{-1.7pt}} ccccccccc} 
\\[-1.8ex]\hline 
\hline \\[-1.8ex] 
AB & (0,0) & (0, 0.5) & (0,1) & (0, 1.5) & (0.5,0.5) & (0.5,1) & (0.5,1.5) & (1,1.5) \\ 
\hline \\[-1.8ex] 
ELI & $0.571$ & $0.792$ & $1.081$ & $1.359$ & $1.059$ & $1.399$ & $1.574$ & $1.879$ \\ 
Random All+ & $0.672$ & $1.057$ & $1.443$ & $1.830$ & $1.398$ & $1.782$ & $2.101$ & $2.510$ \\ 
Graph Clust+ & $0.984$ & $1.383$ & $1.784$ & $2.184$ & $2.192$ & $2.594$ & $3.495$ & $3.809$ \\ 
Saturation1+ & $0.676$ & $1.060$ & $1.444$ & $1.829$ & $1.453$ & $1.837$ & $2.127$ & $2.613$ \\ 
Graph Clust & $1.204$ & $1.689$ & $2.175$ & $2.661$ & $2.678$ & $3.163$ & $4.261$ & $4.638$ \\ 
Saturation1 & $0.827$ & $1.294$ & $1.763$ & $2.233$ & $1.773$ & $2.239$ & $2.587$ & $3.189$ \\ 
Saturation2+ & $0.859$ & $1.262$ & $1.665$ & $2.066$ & $1.902$ & $2.307$ & $2.904$ & $3.350$ \\ 
Saturation3+ & $0.984$ & $1.590$ & $2.196$ & $2.805$ & $2.107$ & $2.713$ & $3.015$ & $3.834$ \\ 
\hline \\[-1.8ex] 
\end{tabular} 
\end{table} 

\begin{figure}
\centering 
\includegraphics[scale=0.6]{./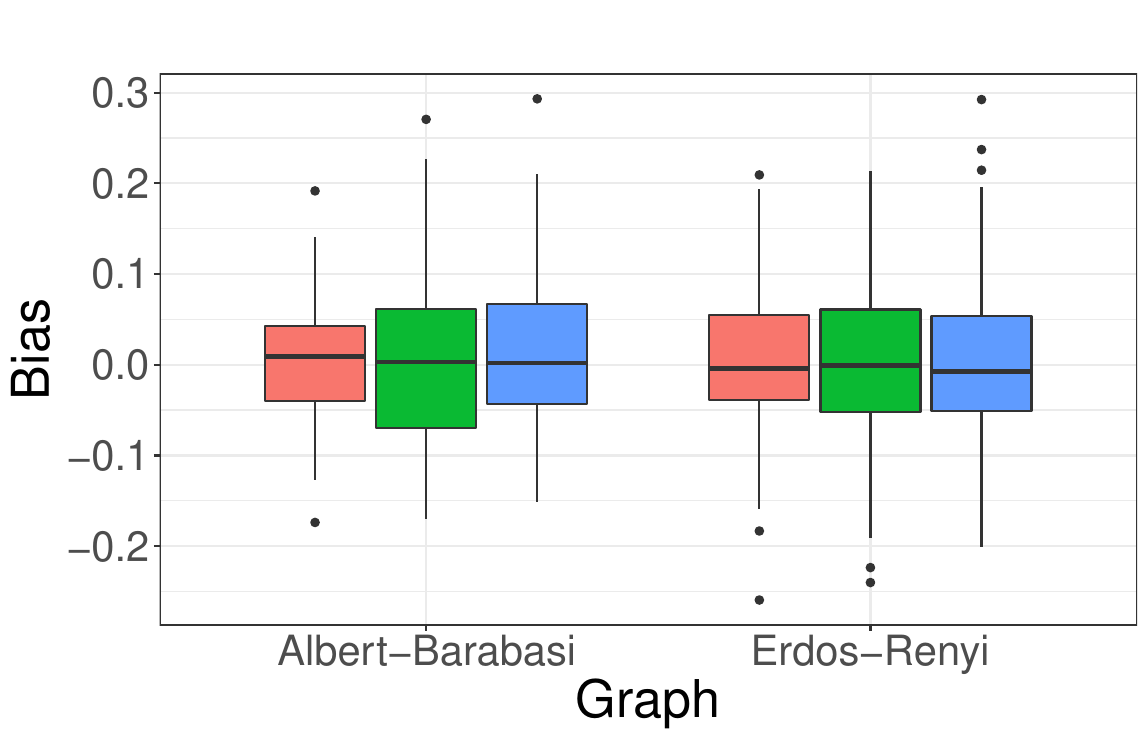}
\caption{Box-plot of the difference between point estimate and true value for Albert-Barabasi and Erdos-Renyi networks, with $(\beta_1, \beta_2) = (0.5, 0.5)$. The box plot reports the results for two-hundred replications. The red box corresponds to the bias for estimating the global effect (sum of direct and spillover effects), the green color corresponds to the direct effect and the blue color to the spillover effect.} \label{fig:bias} 
\end{figure}

\newpage

\section{Proofs} 

In the proofs, we write $\mathbf T_{-i}$ for the vector $\mathbf T$ with its $i$th entry removed; analogously, $\mathbf R_{-i}$ and $\mathbf D_{-i}$ denote $\mathbf R$ and $\mathbf D$ without their $i$th entries. 

When we condition on
\[
\big|\, g_i(D_{\mathcal N_i}) = s,\ \mathbf D_{-i}\,,
\]
we mean that we condition jointly on (i) the treatments of all units other than $i$ and (ii) the event that the exposure mapping for unit $i$ equals $s$ (i.e., $g_i(D_{\mathcal N_i})=s$).

\subsection{Identification}

\subsubsection{Proof of Proposition \ref{ass:cov}} \label{proof:ass:cov}

We want to show that 
\begin{equation}
\mathbb{E}\Big[Y_i \Big| D_i = d, g_i(D_{\mathcal{N}_i}) = s, T_i = l, R_i = 1, \mathbf{D}_{-i}, \mathbf{A}, \mathbf{R}_{-i}, \mathbf{T}_{-i}, \mathbf{P}\Big] = m(d, s, l). 
\end{equation}
 Under Assumption \ref{ass:model},
\begin{equation}
\begin{aligned} 
&\mathbb{E}\Big[Y_i \Big| D_i = d,  g_i(D_{\mathcal{N}_i}) = s, T_i = l, R_i = 1, \mathbf{D}_{-i}, \mathbf{A}, \mathbf{R}_{-i}, \mathbf{T}_{-i}, \mathbf{P}\Big]  = \\ &\mathbb{E}\Big[r\Big(d, s, l, \varepsilon_i \Big) \Big |D_i = d,  g_i(D_{\mathcal{N}_i}) = s, T_i = l, R_i = 1, \mathbf{D}_{-i}, \mathbf{A}, \mathbf{R}_{-i}, \mathbf{T}_{-i}, \mathbf{P} \Big].
\end{aligned} 
\end{equation} 
Observe now that under the conditions in Proposition \ref{ass:cov}, for all units not in the pilot study and not friends of individuals in the pilot study, and since $\varepsilon_i(\mathbf{d})$ is a constant function in $\mathbf{d}$, we have 
\begin{equation}
\small 
\begin{aligned} 
& \mathbb{E}\Big[r\Big(d, s, l, \varepsilon_i \Big) \Big | D_i = d,  g_i(D_{\mathcal{N}_i}) = s, T_i = l, R_i = 1, \mathbf{D}_{-i}, \mathbf{A}, \mathbf{R}_{-i}, \mathbf{T}_{-i}, \mathbf{P} \Big] = \\ & 
\mathbb{E}\Big[r\Big(d, s, l, \varepsilon_i \Big) \Big | T_i = l,\mathbf{A}, \mathbf{T}_{-i}, \mathbf{P} \Big].  
\end{aligned}  
\end{equation} 
Since $\mathbf{P}$ is exogenous conditional on $(\mathbf{A}, \mathbf{T})$, we have 
$$
\mathbb{E}\Big[r\Big(d, s, l, \varepsilon_i \Big) \Big | T_i = l,\mathbf{A}, \mathbf{T}_{-i}, \mathbf{P} \Big] =  \mathbb{E}\Big[r\Big(d, s, l, \varepsilon_i \Big) \Big | T_i = l,\mathbf{A}, \mathbf{T}_{-i} \Big]. 
$$ 
Under Assumption \ref{ass:model}, since $\varepsilon_i \perp (\mathbf{A}, \mathbf{T})$, the proof completes. 

\subsubsection{Proof of Theorem \ref{thm:1C}} \label{proof:thm:1C}

The proof follows similarly to the previous proof. Under Equation \eqref{eqn:first},
\begin{equation}
\begin{aligned} 
&\mathbb{E}\Big[Y_i \Big | D_i = d, g_i(D_{\mathcal{N}_i}) = s, T_i = l, R_i = 1, \mathbf{D}_{-i}, \mathbf{A}, \mathbf{R}_{-i}, \mathbf{T}_{-i}, \mathbf{P}\Big] = \\ &\mathbb{E}\Big[r\Big(d, s, l, \varepsilon_i \Big) \Big | D_i = d,  g_i(D_{\mathcal{N}_i}) = s, T_i = l, R_i = 1, \mathbf{D}_{-i}, \mathbf{A}, \mathbf{R}_{-i}, \mathbf{T}_{-i}, \mathbf{P} \Big].
\end{aligned} 
\end{equation} 
Observe now that under Assumption \ref{ass:covb}, since participants are not units in the pilot study and their neighbors up to the M$^{th}$ degree, we have 
\begin{equation}
\small 
\begin{aligned} 
& \mathbb{E}\Big[r\Big(d, s, l, \varepsilon_i \Big) \Big |D_i = d,  g_i(D_{\mathcal{N}_i}) = s, T_i = l, R_i = 1, \mathbf{D}_{-i}, \mathbf{A}, \mathbf{R}_{-i}, \mathbf{T}_{-i}, \mathbf{P} \Big] \\ 
&= 
\mathbb{E}\Big[r\Big(d, s, l, \varepsilon_i \Big) \Big | \mathbf{A}, \mathbf{T}_{-i}, T_i = l \Big].  
\end{aligned}  
\end{equation} 
Under Equation \eqref{eqn:first}, since $\varepsilon_i \perp (\mathbf{A}, \mathbf{T})$, the proof completes.

\subsubsection{Proof of Lemma \ref{lem:vara}} \label{proof:lemma:vara}

For all individuals $i$ selected in the pilot, we can write for $\mathbf{T}$ such that $T_i = l$,  
$$
\small 
\begin{aligned} 
 \mathrm{Var}\Big(Y_i \Big| \mathbf{A}, \mathbf{T}_{-i}, D_i = d, T_i = l,D_{\mathcal{N}_i} = \mathbf{s}, \mathbf{P}\Big) & = \mathrm{Var}\Big(r(D_i, g_i(D_{\mathcal{N}_i}), T_i, \varepsilon_i) \Big| \mathbf{A}, \mathbf{T}_{-i}, D_i = d, T_i = l,D_{\mathcal{N}_i} = \mathbf{s}, \mathbf{P}\Big) \\ 
 &= \mathrm{Var}\Big(r(d, g_i(\mathbf{s}), l, \varepsilon_i) \Big| \mathbf{A}, T_i = l,  \mathbf{T}_{-i}, \mathbf{P}\Big) \\ 
 &= \mathrm{Var}\Big(r(d, g_i(\mathbf{s}), l, \varepsilon_i) \Big| \mathbf{A}, \mathbf{T}_{-i},  T_i = l\Big) \\ 
 & = \mathrm{Var}\Big(r(d, g_i(\mathbf{s}), l, \varepsilon_i) | \mathbf{A}, \mathbf{T} \Big) \\ 
 &= \sigma^2(d,g_i(\mathbf{s}), l),   
\end{aligned} 
$$ 
for some function $\sigma^2$. 
The first equation follows from Assumption \ref{ass:model}. The second equation follows from the fact that treatments are randomized exogenously and $\varepsilon_i(\mathbf{D})$ is constant in $\mathbf{D}$. The third equation follows from the fact that $\mathbf{P}$ is independent of $\varepsilon_i$ conditional on $\mathbf{A}, \mathbf{T}$. The last equation follows from the fact that $\varepsilon_i \perp (\mathbf{T}, \mathbf{A})$. The analysis of the covariances follows similarly. 
Namely, following the same steps as before, for $\mathbf{T}$ such that $T_i = l, T_j = l'$
\begin{equation} \label{eqn:helpera}
\begin{aligned}
& \mathrm{Cov}\Big(Y_i, Y_j\Big|\mathbf{A}, \mathbf{T}_{-(i,j)}, D_i = d, D_j = d', D_{\mathcal{N}_i} = \mathbf{s}, D_{\mathcal{N}_j} = \mathbf{s}', T_i = l, T_j = l', \mathbf{P}\Big) =  \\ &
\mathrm{Cov}\Big(r(d, g_i(\mathbf{s}), l, \varepsilon_i), r(d', g_j(\mathbf{s}'), l', \varepsilon_j)\Big|\mathbf{A}, \mathbf{T}_{-(i,j)}, T_i = l, T_j = l', \mathbf{P}\Big) 
\end{aligned} 
\end{equation}  
where we used the exogeneity of the treatment assignment and the fact that $\varepsilon_i(\mathbf{d})$ is a constant function in $\mathbf{d}$. Using Assumption \ref{ass:modelb}, 
$$
\eqref{eqn:helpera} = \begin{cases}  \mathrm{Cov}\Big(r(d, g_i(\mathbf{s}), l, \varepsilon_i), r(d', g_j(\mathbf{s}'), l', \varepsilon_j)\Big| j \in \mathcal{N}_i, \mathbf{P}, \mathbf{T}_{-(i,j)}, \mathbf{A},  T_i = l, T_j = l'\Big) & \text{ if } j \in \mathcal{N}_i \\
0  &\text{ otherwise}. 
\end{cases} 
$$ 
Using Assumption \ref{ass:modelb}(ii), and the fact that $\mathbf{P}$ is independent of $(\varepsilon_i, \varepsilon_j)$ conditional on $\mathbf{A}, \mathbf{T}$, we can write 
$$
\small 
\begin{aligned} 
 & \mathrm{Cov}\Big(r(d, g_i(\mathbf{s}), l, \varepsilon_i), r(d', g_j(\mathbf{s}'), l', \varepsilon_j)\Big| j \in \mathcal{N}_i, \mathbf{A}, \mathbf{T}_{-(i,j)}, \mathbf{P}, T_i = l, T_j = l', j \in \mathcal{N}_i\Big) \\ &=  
 \mathrm{Cov}\Big(r(d, g_i(\mathbf{s}), l, \varepsilon_i), r(d', g_j(\mathbf{s}'), l', \varepsilon_j)\Big| j \in \mathcal{N}_i, T_i = l, T_j = l', \mathbf{A}, \mathbf{T}_{-(i,j)}\Big) \\ 
 &= \eta\Big(l, d, g_i(\mathbf{s}), l', d', g_j(\mathbf{s}')\Big), 
\end{aligned} 
$$ 
where the last equation follows from Assumption \ref{ass:modelb}.

 \subsubsection{Additional lemmas for identification of the variance and covariance in the main experiment}

The following lemma guarantees identification of the variance and covariance component in the main experiment. 
 
 \begin{lem} \label{lem:varab}
Suppose that Assumption \ref{ass:model}, \ref{ass:modelb} hold. Consider an experimental design in Algorithm \ref{alg:coefficients2} or Algorithm \ref{alg:3}, with pilot chosen as in Algorithm \ref{alg:pilot}. Then for all units $(i,j)$ such that $R_i = R_j = 1$, 
\begin{equation} 
\small 
\begin{aligned} 
\mathbb{V}(Y_i | \mathbf{A}, \mathbf{D}^R, \mathbf{R}, \mathbf{T}, \mathbf{P}) &=  \sigma^2\Big(T_i, D_i, g_i(D_{\mathcal{N}_i})\Big). \\
\mathrm{Cov}(Y_i, Y_j | \mathbf{A}, \mathbf{D}^R, \mathbf{R}, \mathbf{T}, \mathbf{P}) &= \begin{cases}  \eta\Big(T_i, D_i,  g_i(D_{\mathcal{N}_i}), T_j, D_j,  g_j(D_{\mathcal{N}_j})\Big) & \text{ if } i \in \mathcal{N}_j\\
0 &\text{ otherwise}  
\end{cases} 
\end{aligned} 
\end{equation} 
for the same functions $\sigma^2(.), \eta(.)$ as in Lemma \ref{lem:vara}.
\end{lem}

\begin{proof} Under Assumption \ref{ass:model}, 
 \begin{equation}
 \begin{aligned} 
 &\mathrm{Var}\Big(Y_i \Big| R_i = 1, D_i = d,  g_i(D_{\mathcal{N}_i}) = s, \mathbf{D}_{-i}, \mathbf{R}_{-i},\mathbf{A}, T_i = l, R_i = 1, \mathbf{T}_{-i}, \mathbf{P}\Big)   \\ 
 = &\mathrm{Var}\Big(r(d, s, l, \varepsilon_i ) \Big| D_i = d,  g_i(D_{\mathcal{N}_i}) = s, \mathbf{D}_{-i}, \mathbf{R}_{-i},\mathbf{A}, T_i = l, R_i = 1, \mathbf{T}_{-i}, \mathbf{P}\Big). 
\end{aligned}  
 \end{equation}
 Under Assumption \ref{ass:modelb}, since $R_i = 1$ only if $i$ is neither in the pilot study nor is a friend of individuals in the pilot study, it follows that 
 \begin{equation} 
 \begin{aligned}
 &\mathrm{Var}\Big(r(d, s, l, \varepsilon_i) \Big| D_i = d,  g_i(D_{\mathcal{N}_i}) = s, \mathbf{D}_{-i}, \mathbf{R}_{-i},\mathbf{A}, T_i = l, R_i = 1, \mathbf{P}, \mathbf{T}_{-i}\Big)  \\ &
=  \mathrm{Var}\Big(r(d, s, l, \varepsilon_i) \Big| \mathbf{T},\mathbf{A}, \mathbf{P} \Big) = \mathrm{Var}\Big(r(d, s, l, \varepsilon_i) \Big), 
 \end{aligned} 
 \end{equation} 
 where the first equality follows from the fact that $\mathbf{D}, \mathbf{R}$ are by design independent of $\varepsilon_i$ conditional on $\mathbf{P}$, for all units not in the pilot study and not friends of individuals in the pilot (by Assumption \ref{ass:modelb} and construction of Algorithms \ref{alg:coefficients2} and \ref{alg:3}). The second equality follows from Assumption \ref{ass:model}, since $\varepsilon_i \perp (\mathbf{A}, \mathbf{T})$, and the fact that $\mathbf{P}$ is a deterministic function of $\mathbf{A}, \mathbf{T}$.

For the covariance component, the same reasoning follows. Consider $(i,j): R_i = R_j = 1$, and $T_i = l, T_j = l'$. Then we write 
 \begin{equation} \label{eqn:helperb}
 \begin{aligned} 
 &\mathrm{Cov}\Big(Y_i, Y_j \Big| D_i = d, D_j = d',  g_i(D_{\mathcal{N}_i}) = s,  g_j(D_{\mathcal{N}_i}) = s', \mathbf{D}_{-(i,j)}, \mathbf{R}_{-(i,j)},\mathbf{A}, T_i = l, T_j = l', \\ &R_i = 1, R_j = 1, \mathbf{T}_{-(i,j)}, \mathbf{P}\Big) =  \\ 
 &\mathrm{Cov}\Big(r(d, s, l, \varepsilon_i), r(d', s', l', \varepsilon_j) \Big| D_i = d, D_j = d',  g_i(D_{\mathcal{N}_i}) = s,  g_j(D_{\mathcal{N}_i}) = s', \mathbf{D}_{-(i,j)}, \\ &\mathbf{R}_{-(i,j)},\mathbf{A}, T_i = l, T_j = l', R_i = 1, R_j = 1, \mathbf{T}_{-(i,j)}, \mathbf{P}\Big) .
\end{aligned}  
 \end{equation}
 By Assumption \ref{ass:modelb}, by design $\mathbf{D}, \mathbf{R}$ are independent of $(\varepsilon_i, \varepsilon_j)$ conditional on $(\mathbf{A}, \mathbf{T}, \mathbf{P})$ for all units $(i,j)$ that are neither in the pilot study nor are friends of individuals in the pilot study. Because $\varepsilon_i(\mathbf{d})$ is a constant function in $\mathbf{d}$ we obtain that Equation \eqref{eqn:helperb} equals
\begin{equation} \label{eqn:hgf}
 \begin{aligned} 
\mathrm{Cov}\Big(r(d, s, l, \varepsilon_i), r(d', s', l', \varepsilon_j) \Big| \mathbf{A}, T_i = l, T_j = l', \mathbf{T}_{-(i,j)}, \mathbf{P}\Big) .
\end{aligned}  
 \end{equation}
The covariance is zero if two individuals are not neighbors. In such a case the lemma trivially holds. Therefore, consider the case where individuals are neighbors. Then the above equation equals (under Assumption \ref{ass:modelb}) 
$$
 \begin{aligned} 
\eqref{eqn:hgf} = \begin{cases} 
\mathrm{Cov}\Big(r(d, s, l, \varepsilon_i), r(d', s', l', \varepsilon_j)\Big| T_i = l, T_j = l', j \in \mathcal{N}_i, \mathbf{A}, \mathbf{T} \Big) , \quad & j \in \mathcal{N}_i \\
 0 & \text{ otherwise} 
\end{cases} 
\end{aligned}  
$$ 
which follows since $\mathbf{P}$ is a deterministic function of $(\mathbf{A}, \mathbf{T})$. By the second condition in Assumption \ref{ass:modelb} the proof completes, where   the equivalence of the functions $\sigma^2, \eta$ with those in Lemma \ref{lem:vara} follows directly from the same expression for such functions in the proof of Lemma \ref{lem:vara} (Appendix \ref{proof:lemma:vara}). 

 \end{proof}

\subsection{Proof of Theorems \ref{thm:regret}, \ref{thm:regretb} and \ref{thm:regret2}} \label{proof:thm:regretb}

Note that Theorems~\ref{thm:regret} and~\ref{thm:regretb} are special cases of Theorem~\ref{thm:regret2}. 
For Theorem~\ref{thm:regret}, take $E=1$ and $\zeta=0$ in Theorem~\ref{thm:regret2}; for Theorem~\ref{thm:regretb}, take $E=1$ (allowing $\zeta\ge 0$). 
Therefore, it suffices to prove Theorem~\ref{thm:regret2}.

We first analyze the case $\zeta=0$, in which Algorithm~\ref{alg:3} yields $\check{\mathbf D}^{\,R}=\mathbf D^{\,R}$ (the re-randomization step is inactive). 
We then handle the case $\zeta>0$ at the end of the proof.

\paragraph{Preliminaries (studying the case of $\zeta = 0$)} We say that $a \lesssim b$ if $a \le \bar{C} b$ for a finite constant $\bar{C}$. Recall that $\mathbf{D}$ denotes the vector of treatments assigned to the pilot and main experiment as in Algorithms \ref{alg:pilot} and \ref{alg:3}.  Denote $\mathbf{R}$ the participation indicators obtained after solving the experimenter problem in Equation \eqref{eqn:design1b}.  We denote $\mathbf{D}^R$ the subvector of assignments for those units with $R_i = 1$ and their friends.  For an arbitrary $(\mathbf{D}^*, \mathbf{R}^*)$,  we denote 
 $$
 \hat{V}_p(\mathbf{D}^*, \mathbf{R}^*)= \max_{ e \in \{1, \cdots, E\} } \hat{V}_{\hat{\sigma}_p, \hat{\eta}_p}^{w^e}(\mathbf{D}^{*R^*},\mathbf{R}^*, \mathbf{T}, \mathbf{A}),
 $$ the maximum conditional variance over  the set of weights, using the estimated variance and covariance function from the pilot study. 
 
Similarly, let 
$$
V_N(\mathbf{D}^*, \mathbf{R}^*):= \max_{e \in \{1, \cdots, E\}} \mathbb{V}(\hat{\Gamma}(w^e)|\mathbf{A}, \mathbf{T}, \mathbf{D} = \mathbf{D}^*, \mathbf{R} = \mathbf{R}^*)
$$  
be its population counterpart. 

For notational convenience we refer to $w(i,\mathbf{D}, \mathbf{R})$, as the weight for unit $i$ evaluated at the values $\mathbf{R}, \mathbf{D}^R$.  Let
\begin{equation} \label{eqn:kk}
(\tilde{\mathbf{D}}, \tilde{\mathbf{R}}) \in  \mathrm{arg} \min_{\mathbf{d}, \mathbf{r}, \underline{n} \le \sum_{i = 1}^N r_i \le \bar{n}, r_j = 0 \forall j \in \mathcal{J}} V_N(\mathbf{r}, \mathbf{d}),
\end{equation}
the optimal participants' selection for known variance and covariance function and constraint on the pilot units as in Algorithm \ref{alg:3}. We define 
\begin{equation}
\sigma(i, \mathbf{D}) = \sigma(T_i, \mathbf{D}_i, g_i(D_{\mathcal{N}_i})) 
\end{equation} 
and similarly for $\eta(i,j,\mathbf{D})$, where $\sigma^2, \eta$ are defined as in Lemma \ref{lem:varab}. Recall that the population variance and covariance functions in Lemma \ref{lem:varab} only depend on treatment assignments (and $\mathbf{A}, \mathbf{T}$), but not on the experimental selector indicator $\mathbf{R}$. We define $\sigma(i, \cdot), \eta(i, j, \cdot)$ as a function of the vector of treatments $\mathbf{D}$ in the population of $N$ units, leaving implicit its dependence on $\mathbf{A}, \mathbf{T}$. Note that under Assumption \ref{ass:moment}, since $Y$ is uniformly bounded, also $\sigma^2(\cdot), \eta(\cdot)$ are uniformly bounded. 

Finally, note that in Algorithm \ref{alg:3} we can assign treatments optimally to individuals who are not in the pilot, but are their friends. This implies that, because individuals in the main experiment are not friends of those in the pilot, the only binding constraint in Algorithm \ref{alg:3} is the constraint in Equation \eqref{eqn:kk} (since the choice of the treatment indicators of pilot units does not affect $V_N$ or $\widehat{V}_N$).

\paragraph{Preliminary upper bound} First observe that $|\mathcal{J}| \le  \mathcal{N}_{\max} m$. Since $\underline{n} > \mathcal{N}_{\max} m/(\bar{n}/\underline{n} - 1) \ge |\mathcal{J}|/(\bar{n}/\underline{n} - 1)$  we have that the constraint $\mathbf{1}^\top \mathbf{r} = \bar{n}$ is a stricter constraint than $\underline{n} + |\mathcal{J}| \le \mathbf{1}^\top \mathbf{r} \le \bar{n}$. We can therefore write  
\begin{equation} \label{eqn:iii}
\small 
\begin{aligned} 
& \max_{e \in \{1, \cdots, E\}} \mathbb{V}\Big(\hat{\Gamma}(w^e) | \mathbf{R}, \mathbf{D}^R, \mathbf{A}, \mathbf{T}\Big) - \mathbb{V}_{\bar{n}}^\star(E) \\ &= V_N(\mathbf{D}, \mathbf{R}) - \min_{\mathbf{d}, \mathbf{r}, \sum_{i=1}^N r_i = \bar{n}} V_N(\mathbf{d}, \mathbf{r}) \\ &\le V_N(\mathbf{D}, \mathbf{R}) - \min_{\mathbf{d}, \mathbf{r},  \underline{n} +|\mathcal{J}| \le \sum_{i=1}^N r_i \le \bar{n}} V_N(\mathbf{d}, \mathbf{r})  \\ &\le V_N(\mathbf{D}, \mathbf{R}) - \min_{\mathbf{d}, \mathbf{r},  \underline{n} +|\mathcal{J}| \le \sum_{i=1}^N r_i \le \bar{n}} V_N(\mathbf{d}, \mathbf{r}) +  V_N(\tilde{\mathbf{D}}, \tilde{\mathbf{R}}) - \hat{V}_{p}(\mathbf{D}, \mathbf{R})   \\ & +\hat{V}_{p}(\mathbf{D}, \mathbf{R})  - V_N(\tilde{\mathbf{D}}, \tilde{\mathbf{R}})  
\\  &\le \underbrace{\Big(V_N(\mathbf{D}, \mathbf{R}) - \hat{V}_{p}(\mathbf{D}, \mathbf{R})\Big)}_{(i)} + \underbrace{\Big(\hat{V}_{p}(\tilde{\mathbf{D}}, \tilde{\mathbf{R}}) - V_N(\tilde{\mathbf{D}}, \tilde{\mathbf{R}})\Big)}_{(ii)} \\ &+ \underbrace{V_N(\tilde{\mathbf{D}}, \tilde{\mathbf{R}}) - \min_{\mathbf{d}, \mathbf{r}, \underline{n} + |\mathcal{J}| \le \sum_{i=1}^N r_i \le \bar{n}} V_N(\mathbf{d}, \mathbf{r})}_{(iii)}. 
\end{aligned} 
\end{equation} 
The last bound follows from the fact that $\hat{V}_{p}(\tilde{\mathbf{D}}, \tilde{\mathbf{R}}) \ge \hat{V}_{p}(\mathbf{D}, \mathbf{R})$, since $\tilde{\mathbf{R}}$ and $\mathbf{R}$ satisfy the same set of constraints, and $(\mathbf{D}^{\mathbf{R}}, \mathbf{R})$ minimizes $\hat{V}_{p}(\mathbf{D}, \mathbf{R})$. 
We study each component separately. 

\paragraph{Component $(i)$ and $(ii)$} We can write 
\begin{equation}
\begin{aligned} 
(i) \le \max_{e \in \{1, \cdots, E\}}  &\Big|\frac{1}{(\mathbf{1}^\top \mathbf{R})^2} \sum_{i = 1}^N w^{e}(i, \mathbf{D}, \mathbf{R})^2 R_i \Big(\sigma^2(i, \mathbf{D}) - \hat{\sigma}_p^2(i, \mathbf{D})   \Big)  \\ &+ \frac{1}{(\mathbf{1}^\top \mathbf{R})^2} \sum_{i = 1}^N \sum_{j \in \mathcal{N}_i} w^e(i, \mathbf{D}, \mathbf{R})w^e(j, \mathbf{D}, \mathbf{R}) R_i R_j \Big(\eta(i, j, \mathbf{D}) - \hat{\eta}_p(i, j, \mathbf{D})    \Big) \Big|.
\end{aligned} 
\end{equation} 
 Therefore, we obtain
\begin{equation} \label{eqn:vf}
\begin{aligned} 
&(i) \le \max_{ e \in \{1, \cdots, E\} } \underbrace{\Big|\frac{1}{(\mathbf{1}^\top \mathbf{R})^2} \sum_{i = 1}^N w^e(i, \mathbf{D}, \mathbf{R})^2R_i \Big(\sigma^2(i, \mathbf{D}) - \hat{\sigma}_p^2(i, \mathbf{D})   \Big) \Big|}_{(I)} \\
&+ \max_{ e \in \{1, \cdots, E\} } \underbrace{\Big|\frac{1}{(\mathbf{1}^\top \mathbf{R})^2} \sum_{i = 1}^N \sum_{j \in \mathcal{N}_i} w^e(i, \mathbf{D}, \mathbf{R})w^e(j, \mathbf{D}, \mathbf{R}) R_i R_j (\eta(i,j, \mathbf{D}) - \hat{\eta}_p(i,j, \mathbf{D}))   \Big|}_{(II)} .
\end{aligned} 
\end{equation} 

The above term satisfies (since $w(i, \cdot)$ is uniformly bounded by Assumption \ref{ass:weights_stable}(ii))
\begin{equation}
\small  
\begin{aligned} 
\eqref{eqn:vf} \lesssim & \mathcal{N}_{\max} \sup_{d,s,l,d',s',l'} \Big|\eta(d,s,l,d',s',l') - \hat{\eta}_p(d,s,d',s',l', l')\Big|\Big/\underline{n}   \\ &+ \sup_{d,s,l} \Big|\sigma^2(d,s,l) - \hat{\sigma}_p^2(d,s,l)\Big| \Big/\underline{n}.
\end{aligned} 
\end{equation}

The same reasoning also applies to the term $(ii)$ in Equation \eqref{eqn:iii}. Therefore, we can write under Assumption \ref{ass:convergence}
\begin{equation} \label{eqn:i_ii}
(i) + (ii) =  \mathcal{O}_p(\mathcal{N}_{\max} \bar{m}^{-\xi}/\underline{n}). 
\end{equation} 

\paragraph{$(iii)$ Part 1: Lower bound for $\min V_N(\mathbf{d}, \mathbf{r})$} Finally, consider the term $(iii)$. As a first step, we provide a lower bound to $\min_{\mathbf{d}, \mathbf{r},  \underline{n} + |\mathcal{J}| \le \sum_{i=1}^N r_i \le \bar{n}} V_N(\mathbf{d}, \mathbf{r})$.
%Let 
%$$
%\hat{D}, \hat{R} \in \mathrm{arg}\min_{\mathbf{d}, \mathbf{r}, \underline{n} + |\mathcal{J}| \le \sum_{i=1}^N r_i^* \le n} V_N(\mathbf{d}, \mathbf{r}).
%$$  

We can write
\begin{equation} \label{eqn:hh}
\begin{aligned} 
&\min_{\mathbf{d}, \mathbf{r}, \underline{n} + |\mathcal{J}| \le \sum_{i=1}^N r_i \le \bar{n}} V_N(\mathbf{d}, \mathbf{r})  = \min_{\mathbf{d}, \mathbf{r}, \underline{n} + |\mathcal{J}| \le \sum_{i=1}^N r_i \le \bar{n}} \max_{e \in \{1, \cdots, E\}} \\ &\frac{1}{(\sum_{i=1}^N r_i)^2}\Big(  \sum_{i \in \mathcal{J}_c} r_i w^2(i, \mathbf{d}, \mathbf{r}) \sigma^2(i, \mathbf{d}) + \sum_{j \in \mathcal{N}_i  } r_i r_j w(i,  \mathbf{d},  \mathbf{r}) w(j, \mathbf{d}, \mathbf{r}) \eta(i,j, \mathbf{d})  \\ &+  
\sum_{i \in \mathcal{J}} r_i w^e(i,  \mathbf{R}, \mathbf{D}^R)^2 \sigma^2(i, \mathbf{d}) + \sum_{j \in \mathcal{N}_i} r_i r_j w^e(i,  \mathbf{d}, \mathbf{r}) w^e(j,  \mathbf{d}, \mathbf{r}) \eta(i,j, \mathbf{d}) \Big) \\
&\ge (A) + (B)
\end{aligned} 
\end{equation} 
where 
$$
\small 
\begin{aligned} 
(A) &= \min_{\mathbf{d}, \mathbf{r},  \underline{n} + |\mathcal{J}| \le \sum_{i=1}^N r_i \le \bar{n}} \max_{e \in \{1, \cdots, E\}} \frac{1}{(\sum_{i=1}^N r_i)^2}  \Big(\sum_{i \in \mathcal{J}_c} r_i w^e(i,  \mathbf{d}, \mathbf{r})^2 \sigma^2(i, \mathbf{d}) \\ &+ \sum_{j \in \mathcal{N}_i  } r_i r_j w^e(i,  \mathbf{d}, \mathbf{r}) w^e(j, \mathbf{d}, \mathbf{r}) \eta(i,j, \mathbf{d})\Big)  \\ (B) &=    \min_{\mathbf{d}, \mathbf{r},  \underline{n} + |\mathcal{J}| \le \sum_{i=1}^N r_i \le \bar{n}} \min_{e \in \{1, \cdots, E\}}
\frac{1}{(\sum_{i=1}^N r_i)^2}  \Big(\sum_{i \in \mathcal{J}} r_i w^e(i,  \mathbf{d}, \mathbf{r})^2 \sigma^2(i, \mathbf{d}) \\ &+ \sum_{j \in \mathcal{N}_i} r_i r_j w^e(i,   \mathbf{d}, \mathbf{r}) w^e(j,  \mathbf{d}, \mathbf{r}) \eta(i,j, \mathbf{d}) \Big)
\end{aligned} 
$$ 
where we decomposed the sum into the sum over two sets and flip the $\max_{w}$ with the $\min_{w}$ for one of the two sets to obtain a lower bound. Such sets are defined as 
$$
\mathcal{J}_c = \{1, \cdots, N\} \setminus \mathcal{J}, \quad \mathcal{J} = \bigcup_{i:P_i = 1} \{i\} \cup \mathcal{N}_i. 
$$
\paragraph{$(iii)$ Part two: lower bound decomposed into two groups $\mathcal{J}_c, \mathcal{J}$} We now analyze each component in the right hand side of Equation \eqref{eqn:hh}. 
Notice now that the following term
\begin{equation}
\begin{aligned}  
 (B) &\ge \min_{\mathbf{d}, \mathbf{r}, \underline{n} + |\mathcal{J}| \le \sum_{i=1}^N r_i \le \bar{n}} \min_{ e \in \{1, \cdots, E\} } \frac{1}{(\sum_{i=1}^N r_i)^2}  \sum_{i \in \mathcal{J}} \sum_{j \in \mathcal{N}_i} r_i r_j w^e(i,  \mathbf{d}, \mathbf{r}) w^e(j,  \mathbf{d}, \mathbf{r}) \eta(i,j, \mathbf{d}) \\ &\ge - \bar{C} |\mathcal{J}| \max_{i \in \mathcal{J}} |\mathcal{N}_i| /(\underline{n} + |\mathcal{J}|)^2
 \end{aligned} 
\end{equation} 
since the second moment and weights are bounded by Assumption \ref{ass:moment}, for a universal constant $\bar{C} < \infty$. 
 Therefore, the following holds:
\begin{equation} \label{eqn:hh2}
\begin{aligned} 
\eqref{eqn:hh} &\ge (A) - \bar{C} |\mathcal{J}| \max_{i \in \mathcal{J}} |\mathcal{N}_i| /(\underline{n} + |\mathcal{J}|)^2.
\end{aligned}   
\end{equation} 
\paragraph{$(iii)$ Part 3: Lower bound to $(A)$} Next, we provide a lower bound to $(A)$. 
%Observe that in $(A)$ neither the variance nor the covariance component of units which are not in $\mathcal{J}$ appears. Instead, the decision variables of all units in such set affects the objective function only through the weights, the constraint and the denominator (this is true because the variance and covariance functions for $i \in \mathcal{J}_c$ do not depend on assignments of individuals in $\mathcal{J}$ by construction). 
We have 
\begin{equation} \label{eqn:hh4}
\begin{aligned} 
(A) &\ge \min_{\mathbf{d}, \mathbf{r}, \underline{n} + |\mathcal{J}| \le \sum_{i} r_i \le \bar{n}} \max_{e \in \{1, \cdots, E\}} \Big(\frac{1}{(\sum_{i \in \mathcal{J}_c} r_i + |\mathcal{J}|)^2} \sum_{i \in \mathcal{J}_c} r_i w^e(i,  \mathbf{d}, \mathbf{r})^2 \sigma^2(i,\mathbf{d}) \\ &+ \sum_{j \in \mathcal{N}_i} r_i r_j w^e(i,  \mathbf{d},  \mathbf{r}) w^e(j,  \mathbf{d}, \mathbf{r}) \eta(i,j, \mathbf{d}) \Big):= (J), 
\end{aligned}   
\end{equation} 
where we replaced in the denominator $\sum_{i \in \mathcal{J}} r_i$ with $|\mathcal{J}|$. 
Therefore, we can write 
\begin{equation} \label{eqn:helper5}
\begin{aligned} 
(iii) \le &V_N(\tilde{\mathbf{D}}, \tilde{\mathbf{R}}) - \min_{\mathbf{d}, \mathbf{r}, \underline{n} + |\mathcal{J}| \le \sum_i r_i \le \bar{n}} \max_{e \in \{1, \cdots, E\}} \frac{1}{(\sum_{i \in \mathcal{J}_c} r_i + |\mathcal{J}|)^2} \Big(\sum_{i \in \mathcal{J}_c} r_i w^e(i, \mathbf{d}, \mathbf{r})^2 \sigma^2(i,\mathbf{d}) \\ &+ \sum_{j \in \mathcal{N}_i} r_i r_j w^e(i, \mathbf{d}, \mathbf{r}) w^e(j, \mathbf{d}, \mathbf{r}) \eta(i,j, \mathbf{d}) \Big) + \bar{C} |\mathcal{J}| \max_{i \in \mathcal{J}} |\mathcal{N}_i| /(\underline{n} + |\mathcal{J}|)^2. 
\end{aligned} 
\end{equation}  
\paragraph{$(iii)$ Part 4: Upper bound to $V_N(\tilde{\mathbf{D}}, \tilde{\mathbf{R}})$} 
Consider now the right-hand side in Equation \eqref{eqn:hh4}, defined as $(J)$. Observe, that we can write 
$$
\begin{aligned} 
(J) =  & (L) + (M)
\end{aligned} 
$$ 
where 
\begin{equation} \label{eqn:M} 
\small 
\begin{aligned} 
(L) = & \min_{\mathbf{d}, \mathbf{r},  \underline{n} \le \sum_{i \in \mathcal{J}_c} r_i \le \bar{n}, r_i = 0 \forall i \in \mathcal{J}} \max_{e \in \{1, \cdots, E\}} \frac{1}{(\sum_{i \in \mathcal{J}_c} r_i + |\mathcal{J}|)^2} \Big(\sum_{i \in \mathcal{J}_c} r_i w^e(i, \mathbf{d}, \mathbf{r})^2 \sigma^2(i, \mathbf{d}) \\ &+ \sum_{j \in \mathcal{N}_i} r_i r_j w^e(i, \mathbf{d},  \mathbf{r}) w^e(j, \mathbf{d}, \mathbf{r}) \eta(i,j, \mathbf{d}) \Big)  \\ 
(M) =  & 
\min_{\mathbf{r},  \underline{n} + |\mathcal{J}| \le \sum_ir_i \le \bar{n}} \max_{e \in \{1, \cdots, E\}} \frac{1}{(\sum_{i \in \mathcal{J}_c} r_i + |\mathcal{J}|)^2} \Big(\sum_{i \in \mathcal{J}_c} r_i w^e(i, \mathbf{d}, \mathbf{r})^2 \sigma^2(i, \mathbf{d}) \\ &+ \sum_{j \in \mathcal{N}_i} r_i r_j w^e(i, \mathbf{d}, \mathbf{r}) w^e(j, \mathbf{d}, \mathbf{r}) \eta(i,j, \mathbf{d}) \Big) - (L). 
\end{aligned} 
\end{equation}  
where we added and subctracted $(L)$ which contains the condition $r_i = 0 \forall i \in \mathcal{J}$, and $ \underline{n} \le \sum_{i \in \mathcal{J}_c} r_i \le \bar{n}$. We study $(L)$ first. 
% The reason is because the expression does not depend on the selection indicators $r_i$ of those individuals $i \in \mathcal{J}$, i.e., of the pilot units and their neighbors. Observe that $\mathbf{d}^r$ may still depend on the treatments assigned to the neighbors of individuals in the pilot but not themselves in the pilot. Instead, the treatment assigned to the pilot units does not affect the objective function, which only depends on variance and covariance of individuals which are far apart from the pilot by at least one neighbor. 
 Define 
$$
\small 
\begin{aligned} 
(\mathbf{D}^{**}, \mathbf{R}^{**}) \in \mathrm{arg} &\min_{\mathbf{d}, \mathbf{r}, \underline{n} \le \sum_{i \in \mathcal{J}_c} r_i \le \bar{n}, r_i = 0 \forall i \in \mathcal{J}} \max_{e \in \{1, \cdots, E\}} \frac{1}{(\sum_{i \in \mathcal{J}_c} r_i + |\mathcal{J}|)^2} \Big(\sum_{i \in \mathcal{J}_c} r_i w^e(i, \mathbf{d}, \mathbf{r})^2 \sigma^2(i, \mathbf{d}) \\ &+ \sum_{j \in \mathcal{N}_i} r_i r_j w^e(i,  \mathbf{d}, \mathbf{r}) w^e(j,\mathbf{d}, \mathbf{r}) \eta(i,j, \mathbf{d}) \Big).
\end{aligned} 
$$ 
Observe that by construction $V_N(\tilde{\mathbf{D}}, \tilde{\mathbf{R}}) \le V_N( \mathbf{D}^{**}, \mathbf{R}^{**})$, since $\tilde{\mathbf{D}}, \tilde{\mathbf{R}}$ minimize $V_N(\cdot)$, and since $\mathbf{R}^{**}$ satisfy the constraints in Equation \eqref{eqn:kk}. Therefore, we can write 
\begin{equation} \label{eqn:left2} 
\small
\begin{aligned} 
\eqref{eqn:helper5} \le &V_N(\mathbf{D}^{**}, \mathbf{R}^{**}) -  \max_{e \in \{1, \cdots, E\}} \frac{1}{(\sum_{i \in \mathcal{J}_c} R_i^{**} + |\mathcal{J}|)^2} \Big(\sum_{i \in \mathcal{J}_c} R_i^{**} w^e(i,\mathbf{D}^{**},  \mathbf{R}^{**})^2 \sigma^2(i,\mathbf{D}^{**}) \\ &+ \sum_{j \in \mathcal{N}_i } R_i^{**} R_j^{**} w^e(i, \mathbf{D}^{**}, \mathbf{R}^{**}) w^e(j, \mathbf{D}^{**}, \mathbf{R}^{**}) \eta(i,j, \mathbf{D}^{**}) \Big)  \\ & + \bar{C} |\mathcal{J}| \max_{i \in \mathcal{J}} |\mathcal{N}_i| /(\underline{n} + |\mathcal{J}|)^2 - (M). 
\end{aligned} 
\end{equation}   

\paragraph{$(iii)$ Part 4: first Term in the right-hand side of Equation \eqref{eqn:left2}} By simple algebra, and using the same argument for the weights used for $(i)$, we obtain, 
\begin{equation} \label{eqn:aajj} 
\begin{aligned} 
& V_N(\mathbf{D}^{**}, \mathbf{R}^{**}) -  \max_{e \in \{1, \cdots, E\}} \frac{1}{(\sum_{i \in \mathcal{J}_c} R_i^{**} + |\mathcal{J}|)^2} \Big(\sum_{i \in \mathcal{J}_c} R_i^{**} w^e(i, \mathbf{D}^{**}, \mathbf{R}^{**})^2 \sigma^2(i,\mathbf{D}^{**}) \\ &+ \sum_{j \in \mathcal{N}_i} R_i^{**} R_j^{**} w^e(i,\mathbf{D}^{**}, \mathbf{R}^{**}) w^e(j, \mathbf{D}^{**}, \mathbf{R}^{**}) \eta(i,j, \mathbf{D}^{**}) \Big) \\
&\le \max_{e \in \{1, \cdots, E\}} \Big| (\frac{1}{(\sum_{i \in \mathcal{J}_c} R_i^{**} + |\mathcal{J}|)^2} - \frac{1}{(\sum_{i \in \mathcal{J}_c} R_i^{**})^2}) \Big(\sum_{i \in \mathcal{J}_c} R_i^{**} w^e(i, \mathbf{D}^{**}, \mathbf{R}^{**})^2 \sigma^2(i,\mathbf{D}^{**}) \\ &+ \sum_{j \in \mathcal{N}_i } R_i^{**} R_j^{**} w^e(i, \mathbf{D}^{**}, \mathbf{R}^{**}) w^e(j, \mathbf{D}^{**}, \mathbf{R}^{**}) \eta(i,j, \mathbf{D}^{**}) \Big) \Big|. 
\end{aligned} 
\end{equation} 

\paragraph{$(iii)$ Part 5: bound on $(iii)$ which also depends on $(M)$} 
By Assumption \ref{ass:moment} (bounded outcome), for $(iii)$ as in Equation \eqref{eqn:iii}, using Equation \eqref{eqn:aajj} we can write
\begin{equation} 
\begin{aligned} 
(iii) &\le \bar{C} n \mathcal{N}_{\max} \frac{n|\mathcal{J}| + |\mathcal{J}|^2}{(\sum_{i=1}^N R_i^{**})^4} + \bar{C} |\mathcal{J}| \max_{i \in \mathcal{J}} |\mathcal{N}_i| /(\underline{n} + |\mathcal{J}|)^2 - (M) \\ &\le \bar{C}  \mathcal{N}_{\max} \frac{n^2|\mathcal{J}| + n|\mathcal{J}|^2}{\alpha^4 n^4} + \bar{C} |\mathcal{J}| \mathcal{N}_{\max} /(\underline{n} + |\mathcal{J}|)^2 - (M) 
\end{aligned} 
\end{equation} 
for a finite constant $\bar{C} < \infty$. Notice now that $|\mathcal{J}| \le  |\mathcal{N}_{\max}| \times m$ which implies that the above term is $\mathcal{O}(\mathcal{N}_{\max}^2 m/\underline{n}^2 + \mathcal{N}_{\max}^3 m^2/\underline{n}^3) + \mathcal{O}(|(M)|)$. Here $\mathcal{N}_{\max}^3 m^2/\underline{n}^3 = \mathcal{O}(\mathcal{N}_{\max}^2 m/\underline{n}^2)$ since under the assumptions $\underline{n} \ge \mathcal{N}_{\max} m/(\alpha - 1)$ for $\alpha > 1$. 

\paragraph{Bound on $(M)$} We are left to provide a bound for $(M)$. The bound on $(M)$ follows from the stability assumption (Assumption \ref{ass:weights_stable}). In particular, let 
$$
\begin{aligned} 
(\mathbf{d}^*, \mathbf{r}^*) \in \arg \min_{\mathbf{r},  \underline{n} + |\mathcal{J}| \le \sum_{i \in \mathcal{J}_c} r_i \le \bar{n}} \max_{e \in \{1, \cdots, E\}} & \frac{1}{(\sum_{i \in \mathcal{J}_c} r_i + |\mathcal{J}|)^2} \Big(\sum_{i \in \mathcal{J}_c} r_i w^e(i, \mathbf{d}, \mathbf{r})^2 \sigma^2(i, \mathbf{d}) \\ &+ \sum_{j \in \mathcal{N}_i} r_i r_j w^e(i, \mathbf{d}, \mathbf{r}) w^e(j, \mathbf{d}, \mathbf{r}) \eta(i,j, \mathbf{d}) \Big), 
\end{aligned} 
$$and $\tilde{\mathbf{r}}^*$ be such that $\mathbf{r}^*_j = \tilde{\mathbf{r}}_j^*, j \not \in \mathcal{J}$ and $\tilde{\mathbf{r}}_j^* = 0$ otherwise. 
We note that $(\mathbf{d}^*, \tilde{\mathbf{r}}_j^*)$ is a feasible solution to minimize $(L)$, since $(L)$ contains a slacker constraint $\underline{n} \le \sum_{j \in \mathcal{J}_c} r_i \le \bar{n}$, while $\underline{n}  + |\mathcal{J}| \le \sum_i r_j^* \le \bar{n}$.
 Define 
 $$
 \mathcal{J}_3 = \mathcal{J}_c \setminus \{j: \mathcal{N}_j \cap \mathcal{J} \neq \emptyset\}
 $$ 
 the set of individuals in $\mathcal{J}_c$ without friends in $\mathcal{J}$. We can then write 
\begin{equation} \label{eqn:bound_helper} 
\small 
\begin{aligned} 
|(M)| \le &  \max_{e \in \{1, \cdots, E\}} \Big|\frac{1}{(\sum_{i \in \mathcal{J}_c} r_i^* + |\mathcal{J}|)^2} \Big(\sum_{i \in \mathcal{J}_3} r_i^* (w^e(i,  \mathbf{d}^*, \mathbf{r}^*)^2 - w^e(i, \mathbf{d}^*,  \tilde{\mathbf{r}}^*)^2) \sigma^2(i, \mathbf{d}^*) \\ &+ \sum_{j \in \mathcal{N}_i} r_i^* r_j^* (w^e(i, \mathbf{d}^*,  \mathbf{r}^*) w^e(j, \mathbf{d}^*, \mathbf{r}^*) - w^e(i,  \mathbf{d}^*, \tilde{\mathbf{r}}^*) w^e(j,  \mathbf{d}^*, \tilde{\mathbf{r}}^*)) \eta(i,j, \mathbf{d}^* ) \Big) \Big| \\
&+ \max_{e \in \{1, \cdots, E\}} \Big| \frac{1}{(\sum_{i \in \mathcal{J}_c} r_i^* + |\mathcal{J}|)^2} \sum_{i \in \mathcal{J}_c} \sum_{j \in \mathcal{N}_i \cap \mathcal{J}} r_i^* r_j^* w^e(i, \mathbf{d}^*, \mathbf{r}^*) w^e(j, \mathbf{d}^* , \mathbf{r}^*) \eta(i,j, \mathbf{d}^*) \Big| \\ 
&+ \max_{e \in \{1, \cdots, E\}} \Big| \frac{1}{(\sum_{i \in \mathcal{J}_c} r_i^* + |\mathcal{J}|)^2} \sum_{i \in \mathcal{J}_c \setminus \mathcal{J}_3} r_i^* \Big(w^e(i, \mathbf{d}^*, \mathbf{r}^*)^2  - w^e(i, \mathbf{d}^* , \tilde{\mathbf{r}}^*)^2  \Big)  \sigma^2(i,\mathbf{d}^*)\Big|.
\end{aligned} 
\end{equation}  
The first term sums over variances and covariances of each individual in $\mathcal{J}_3$ and excludes the individuals in $\mathcal{J}_c$ which are friends with individuals in $\mathcal{J}$. The second term sums over the covariances of individuals in $\mathcal{J}_c$ and individuals in $\mathcal{J}$ friends with individuals in $\mathcal{J}_c$. Such covariances multiply by $r_i^* r_j^* w(i, \mathbf{d}^*, \mathbf{r}^*) w(j, \mathbf{d}^* , \mathbf{r}^*) \eta(i,j, \mathbf{d}^*)$ only (and not also $\tilde{r}_i^* \tilde{r}_j^* w(i, \mathbf{d}^*, \tilde{\mathbf{r}}^*) w(j, \mathbf{d}^* , \tilde{\mathbf{r}}^*) \eta(i,j, \mathbf{d}^*)$) since $\tilde{r}_j^*$ is zero for individuals in $\mathcal{J}$. The last term sums over the variances of individuals in $\mathcal{J}_c$ which are not in $\mathcal{J}_3$, for which $r_i^* = \tilde{r}_i^*$ by construction. 

 Using Assumption \ref{ass:weights_stable} (since $\mathbf{r}^*$ and $\tilde{\mathbf{r}}^*$ only differ by $\mathcal{N}_{\max} \bar{m}$ units at most) we obtain that the first component in the bound in Equation \eqref{eqn:bound_helper} is $\mathcal{O}(\mathcal{N}_{\max}^2 \bar{m}/\underline{n}^2)$. For the second component, note that individuals in $\mathcal{J}$ have at most $\mathcal{N}_{\max} |\mathcal{J}| \le \mathcal{N}_{\max}^2 \bar{m}$ many connections. Since $\mathcal{J}_c$ and $\mathcal{J}$ are disjoint sets (and by symmetry of $\mathbf{A}$), the second term is at most $\mathcal{O}(\mathcal{N}_{\max}^2 \bar{m}/\underline{n}^2)$.  Similarly, for the third term, there are at most $\mathcal{N}_{\max} \times |\mathcal{J}|$ many edges between $\mathcal{J}_c$ and $\mathcal{J}$, which implies that, by symmetry of $\mathbf{A}$, $|\mathcal{J}_c \setminus \mathcal{J}_3| \le |\mathcal{J}| \mathcal{N}_{\max} \le \bar{m} \mathcal{N}_{\max}^2$ which completes the proof, since weights and variances are uniformly bounded.  

\paragraph{Case with $\zeta > 0$} Consider now the case where $\zeta > 0$. Then returning to Equation \eqref{eqn:iii}, we can write 
$$
\begin{aligned} 
 & \max_{e \in \{1, \cdots, E\}} \mathbb{V}\Big(\hat{\Gamma}(w^e) | \mathbf{R}, \mathbf{D}^R, \mathbf{A}, \mathbf{T}\Big) - \mathbb{V}_{\bar{n}}^\star(E) \\ 
 &=  \underbrace{\max_{e \in \{1, \cdots, E\}} \mathbb{V}\Big(\hat{\Gamma}(w^e) | \mathbf{R}, \mathbf{D}^R, \mathbf{A}, \mathbf{T}\Big) - \max_{e \in \{1, \cdots, E\}} \mathbb{V}\Big(\hat{\Gamma}(w^e) | \mathbf{R}, \check{\mathbf{D}}^R, \mathbf{A}, \mathbf{T}\Big)}_{(I_1)} \\ &+ \underbrace{\max_{e \in \{1, \cdots, E\}} \mathbb{V}\Big(\hat{\Gamma}(w^e) | \mathbf{R}, \check{\mathbf{D}}^R, \mathbf{A}, \mathbf{T}\Big) -  \mathbb{V}_{\bar{n}}^\star(E)}_{(I_2)}
 \end{aligned} 
 $$ 
 where $\check{\mathbf{D}}^R$ is as defined in Equation \eqref{eqn:design1b} corresponding to the minimizer for $\zeta = 0$. Therefore, we have that  
$I_2$ is as bounded as for the case for $\zeta = 0$. To bound $I_1$ we canb write 
$$
\begin{aligned} 
(I_1) = & \underbrace{\max_{e \in \{1, \cdots, E\}} \mathbb{V}\Big(\hat{\Gamma}(w^e) | \mathbf{R}, \mathbf{D}^R, \mathbf{A}, \mathbf{T}\Big) - \hat{V}_p(\mathbf{D}, \mathbf{R})}_{(H_1)} \\ &+\underbrace{ \hat{V}_p(\mathbf{D}, \mathbf{R}) - \hat{V}_p(\check{\mathbf{D}},\mathbf{R})}_{(H_2)} \\ &+ \underbrace{\hat{V}_p(\check{\mathbf{D}},\mathbf{R}) - \max_{e \in \{1, \cdots, E\}} \mathbb{V}\Big(\hat{\Gamma}(w^e) | \mathbf{R}, \check{\mathbf{D}}^R, \mathbf{A}, \mathbf{T}\Big)}_{(H_3)}
\end{aligned} 
$$ 
By construction of Algorithm \ref{alg:3}, we have $(H_2) \le \zeta/\bar{n}$. Finally, we can bound $(H_1)$ and $(H_3)$ following verbatim the argument for $(i)$ and $(ii)$ in Equation \eqref{eqn:i_ii} completing the proof.

\subsection{Inference} \label{sec:asymp_app}

\begin{lem}\label{lem:locally_dep} \citep{ross2011fundamentals} Let $X_1, ..., X_n$ be random variables such that $\mathbb{E}[X_i^4] < \infty$, $\mathbb{E}[X_i] = 0$, $\sigma^2 = \mathrm{Var}(\sum_{i = 1}^n X_i)$ and define $W = \sum_{i = 1}^n X_i/\sigma$. Let the collection $(X_1, ..., X_n)$ have dependency neighborhoods $\mathcal{N}_i$, $i = 1, ..., n$ and also define $D = \mathrm{max}_{1 \le i \le n} |\mathcal{N}_i|$. Then for $Z$ a standard normal random variable, we obtain 
\begin{equation}
d_W(W, Z) \le \frac{D^2}{\sigma^3} \sum_{i = 1}^n \mathbb{E}| X_i|^3 + \frac{\sqrt{28} D^{3/2}}{\sqrt{\pi} \sigma^2} \sqrt{\sum_{i = 1}^n \mathbb{E}[X_i^4]},
\end{equation}  
where $d_W$ denotes the Wesserstein metric.
\end{lem}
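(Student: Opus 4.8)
The final statement is the standard Stein's-method bound for a normalized sum of locally dependent variables, so the plan is to follow the dependency-neighborhood argument and track the constants.

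First I would set up the Stein equation. For a $1$-Lipschitz test function $h$, let $f = f_h$ solve $f'(w) - w f(w) = h(w) - \mathbb{E}[h(Z)]$; the standard regularity estimates for the Wasserstein metric give $\|f'\|_\infty \le \sqrt{2/\pi}$ and $\|f''\|_\infty \le 2$. Since $d_W(W,Z) = \sup_h |\mathbb{E}[h(W)] - \mathbb{E}[h(Z)]|$ over $1$-Lipschitz $h$, it suffices to bound $|\mathbb{E}[f'(W) - W f(W)]|$ uniformly over all $f$ with $\|f'\|_\infty \le \sqrt{2/\pi}$ and $\|f''\|_\infty \le 2$.

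Next I would exploit the dependency neighborhoods. For each $i$ set $W_i = W - \sigma^{-1}\sum_{j \in N_i} X_j$, which by definition of $N_i$ is independent of $X_i$; writing $T_i = \sigma^{-1}\sum_{j \in N_i} X_j = W - W_i$ and using $\mathbb{E}[X_i] = 0$ together with the resulting $\mathbb{E}[X_i f(W_i)] = 0$, I obtain $\mathbb{E}[W f(W)] = \sigma^{-1}\sum_i \mathbb{E}[X_i(f(W) - f(W_i))]$. A second-order Taylor expansion $f(W_i) = f(W) - T_i f'(W) + \tfrac12 T_i^2 f''(\xi_i)$ splits the target into a first-order piece and a remainder. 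The key normalization is the identity $\sigma^{-1}\sum_i \mathbb{E}[X_i T_i] = 1$, which holds because $\sigma^2 = \sum_i \sum_{j \in N_i}\mathbb{E}[X_i X_j]$, the cross terms with $j \notin N_i$ vanishing by independence and mean-zero. This lets me rewrite the first-order piece as $\mathbb{E}\big[f'(W)(1 - \sigma^{-1}\sum_i X_i T_i)\big]$, whose multiplier has mean zero. The remainder is $\tfrac{1}{2\sigma}\sum_i \mathbb{E}[X_i T_i^2 f''(\xi_i)]$, bounded by $\tfrac12 \|f''\|_\infty \sigma^{-3}\sum_i \mathbb{E}[|X_i|(\sum_{j \in N_i}|X_j|)^2]$; applying $(\sum_{j \in N_i}|X_j|)^2 \le D \sum_{j \in N_i} X_j^2$ and the AM--GM bound $\mathbb{E}[|X_i| X_j^2] \le \tfrac13(\mathbb{E}|X_i|^3 + 2\mathbb{E}|X_j|^3)$, then inserting $\|f''\|_\infty \le 2$, yields the first term $\tfrac{D^2}{\sigma^3}\sum_i \mathbb{E}|X_i|^3$. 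The first-order piece is controlled by Cauchy--Schwarz as $\|f'\|_\infty \sqrt{\mathrm{Var}(\sigma^{-1}\sum_i X_i T_i)}$; here the dependency structure ensures that in $\mathrm{Var}(\sigma^{-2}\sum_i\sum_{j\in N_i}X_iX_j)$ only products of pairs whose neighborhoods overlap carry nonzero covariance, so at most $O(D^3)$ summands survive, each bounded by fourth moments. Inserting $\|f'\|_\infty \le \sqrt{2/\pi}$ produces the second term $\tfrac{\sqrt{28}D^{3/2}}{\sqrt{\pi}\sigma^2}\sqrt{\sum_i \mathbb{E}[X_i^4]}$, and summing the two bounds and taking the supremum over $h$ gives the claim.

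The main obstacle is the variance estimate in the first-order piece: one must carefully track which covariance terms survive under nested dependency neighborhoods (neighbors of neighbors of $i$), bound their number by $O(D^3)$, and control each by $\mathbb{E}[X_i^4]$ via repeated Cauchy--Schwarz so that the explicit constant $\sqrt{28}$ emerges. The second-order remainder, by contrast, is routine once the moment inequalities are in place.
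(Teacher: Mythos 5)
The paper does not prove this lemma: it is imported verbatim, with citation, from \citet{ross2011fundamentals} (Theorem~3.6 there), so there is no in-paper argument to compare against. Your proposal is a faithful reconstruction of Ross's own proof --- the Stein equation with $\|f'\|_\infty \le \sqrt{2/\pi}$, $\|f''\|_\infty \le 2$, the decomposition via $W_i = W - T_i$ and the normalization $\sigma^{-1}\sum_i \mathbb{E}[X_i T_i] = 1$, the AM--GM treatment of the remainder yielding $D^2\sigma^{-3}\sum_i\mathbb{E}|X_i|^3$, and the Cauchy--Schwarz bound on $\mathrm{Var}\bigl(\sigma^{-1}\sum_i X_i T_i\bigr)$ for the second term --- and the steps you spell out are correct. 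The only part you leave schematic is the one you flag yourself: the covariance-counting argument behind the constant $\sqrt{28}$, which requires the slightly stronger second-order dependency-neighborhood property (that $\{X_i,X_j\}$ is jointly independent of variables outside the union of neighborhoods) rather than the bare pairwise definition; Ross uses this implicitly, and a complete write-up should state it.
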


\begin{thm} \label{thm:asym}
Suppose that Assumption \ref{ass:model}, \ref{ass:modelb}, \ref{ass:sparsity}, \ref{ass:moment}, \ref{ass:weights_stable}(ii), \ref{ass:inference_conditions}(i) hold. Let $n = \mathbf{1}^\top \mathbf{R} \propto \bar{n} \propto N$. Let $n \mathbb{V}(\hat{\Gamma} | \mathbf{R}, \mathbf{D}^R, \mathbf{A}, \mathbf{T}) > 0$ almost surely. Then 
\begin{equation} \label{eqn:thm1}
\frac{\sqrt{n} (\widehat{\Gamma} - \mathbb{E}[\widehat{\Gamma} | \mathbf{R}, \mathbf{D}^R, \mathbf{A}, \mathbf{T}] )}{\sqrt{n \mathbb{V}(\hat{\Gamma} | \mathbf{R}, \mathbf{D}^R, \mathbf{A}, \mathbf{T})}} \rightarrow_d \mathcal{N}(0,1). 
\end{equation} 
\end{thm}

\begin{proof}[Proof of Theorem \ref{thm:asym}]
We prove asymptotic normality after conditioning on the sigma algebra $\sigma(\mathbf{D}^R,\mathbf{A}, \mathbf{R}, \mathbf{T}, \mathbf{P})$. Notice that $\mathbb{E}[\widehat{\Gamma} | \mathbf{R}, \mathbf{D}^R, \mathbf{A}, \mathbf{T}] = \mathbb{E}[\widehat{\Gamma} | \mathbf{R}, \mathbf{D}^R, \mathbf{A}, \mathbf{T}, \mathbf{P}]$ by Proposition \ref{ass:cov}. Next, we show that $Y_i$ for all $i: R_i = 1$ are locally dependent, given $\sigma(\mathbf{A}, \mathbf{R}, \mathbf{D}^R, \mathbf{T}, \mathbf{P})$. To show this, it suffices to show that $$\{\varepsilon_i\}_{i: R_i = 1} \Big| \sigma(\mathbf{A}, \mathbf{R}, \mathbf{D}^R, \mathbf{T}, \mathbf{P})$$ are locally dependent, i.e., form a local dependency graph as described in \cite{ross2011fundamentals}. Define 
$$
\mathcal{H} = \{1, \cdots, N\} \setminus \mathcal{J}. 
$$  

The argument is the following. Under Assumption \ref{ass:modelb}, unobservables are locally dependent given the adjacency matrix $\mathbf{A}$ and covariates $\mathbf{T}$. Since $\mathbf{P}$ is exogenous conditional on $\mathbf{A}$, it follows that unobservables are locally dependent given $(\mathbf{A}, \mathbf{T}, \mathbf{P})$ 
 That is, 
$$
\varepsilon_{1, \cdots, N} \Big| \sigma(\mathbf{A}, \mathbf{P}, \mathbf{T})
$$
are locally dependent.  
 Consider now the distribution of all unobservables in the set $\mathcal{H}$, given $\mathbf{A}, \mathbf{P}, \mathbf{T}$. Here, unobservables are mutually independent on $\mathbf{R}, \mathbf{D}^R$, given $\sigma(\mathbf{A}, \mathbf{P}, \mathbf{T})$ and $\mathcal{H}$ is measurable with respect to $\mathbf{P}$. Therefore, 
 $$
 \varepsilon_{i \in \mathcal{H}} \Big| \sigma(\mathbf{A}, \mathbf{P}, \mathbf{R}, \mathbf{D}^R, \mathbf{T})
 $$
 are locally dependent. 
 Since $\{i: R_i = 1\} \subseteq \mathcal{H}$ the local dependence assumption of unobservables in such a set holds conditional on $\mathbf{A}, \mathbf{P}, \mathbf{R}, \mathbf{D}^R, \mathbf{T}$ for such units. 
 
Recall that by Assumption \ref{ass:model}
\begin{equation}
Y_i = r\Big(D_i, g_i(D_{\mathcal{N}_i}), T_i, \varepsilon_i\Big),   
\end{equation} 
where $g_i(D_{\mathcal{N}_i})$ is a deterministic function of $\mathbf{A}, \mathbf{D}_{\mathcal{N}_i}$ and $\mathbf{T}$. 
Therefore, given $\sigma(\mathbf{A}, \mathbf{P}, \mathbf{R}, \mathbf{D}^R, \mathbf{T})$ outcomes $\{Y_i\}_{i:R_i = 1}$ are locally dependent. Let
\begin{equation} \label{eqn:x_i}
X_i:= \frac{1}{n \sqrt{ \mathbb{V}(\hat{\Gamma} | \mathbf{R}, \mathbf{D}^R, \mathbf{A}, \mathbf{T})}} w_{\mathbf{A}, \mathbf{T}}(i, \mathbf{R}, \mathbf{D}^R) \Big(Y_i - m(D_i, g_i(D_{\mathcal{N}_i}), T_i)\Big). 
\end{equation} 
By Proposition \ref{ass:cov},  we have 
\begin{equation}
\mathbb{E}[X_i | \sigma(\mathbf{D}^R,\mathbf{A}, \mathbf{R}, \mathbf{P}, \mathbf{T})] = 0.  
\end{equation} 
To prove the theorem we invoke Lemma \ref{lem:locally_dep}. In particular, we observe that for $Z \sim \mathcal{N}(0,1)$, we have 
\begin{equation} 
\sup_{x \in \mathbb{R}} \Big|P\Big(\sum_{i: R_i = 1} X_i \le x \Big | \sigma(\mathbf{D}^R,\mathbf{A}, \mathbf{R}, \mathbf{P}, \mathbf{T})\Big) - \Phi(x)\Big| \le c \sqrt{d_{W|\sigma(\mathbf{D}^R,\mathbf{A}, \mathbf{R}, \mathbf{P}, \mathbf{T})}(\sum_{i: R_i = 1} X_i,Z)}. 
\end{equation} 
where $d_{W|\sigma(\mathbf{D}^R,\mathbf{A}, \mathbf{R}, \mathbf{P}, \mathbf{T})}(\sum_{i: R_i = 1} X_i,Z)$ denotes the Wesserstein metric taken with respect to the conditional marginal distribution of $\sum_{i: R_i = 1} X_i$ given $\sigma(\mathbf{D}^R,\mathbf{A}, \mathbf{R}, \mathbf{P}, \mathbf{T})$ and $\Phi(x)$ is the CDF of a standard normal distribution, and $c <\infty$ is a universal constant. 
To apply Lemma \ref{lem:locally_dep} we take $\sigma^2 = 1$ since $X_i$ already contains the rescaling factor defined in Lemma \ref{lem:locally_dep}. In addition, since $n \mathbb{V}(\hat{\Gamma} | \mathbf{R}, \mathbf{D}^R, \mathbf{A}, \mathbf{T})$ is strictly bounded away from zero we obtain under Assumption \ref{ass:moment}
\begin{equation}
\mathbb{E}[X_i^4 | \sigma(\mathbf{D}^R,\mathbf{A}, \mathbf{R}, \mathbf{P}, \mathbf{T})] \le \bar{C} \frac{1}{n^2}, \quad 
\mathbb{E}[X_i^3 | \sigma(\mathbf{D}^R,\mathbf{A}, \mathbf{R}, \mathbf{P}, \mathbf{T})] \le \bar{C} \frac{1}{n^{3/2}}.     
\end{equation} 
Therefore, the condition in Lemma \ref{lem:locally_dep} are satisfied. Then we obtain 
\begin{equation} 
\begin{aligned} 
&d_{W|\sigma(\mathbf{D}^R,\mathbf{A}, \mathbf{R}, \mathbf{P}, \mathbf{T})}(\sum_{i: R_i = 1} X_i,Z) \le \mathcal{N}_{\max}^2 \sum_{i: R_i = 1} \mathbb{E}[| X_i|^3 | \mathbf{R}, \mathbf{D}^R,\mathbf{A}, \mathbf{P}, \mathbf{T}] \\ &+ \frac{\sqrt{28} \mathcal{N}_{\max}^{3/2}}{\sqrt{\pi}} \sqrt{\sum_{i: R_i = 1} \mathbb{E}[X_i^4| \mathbf{R},\mathbf{A}, \mathbf{D}^R, \mathbf{P}, \mathbf{T}]} \\ &\le \frac{\mathcal{N}_{\max}^2}{n^{1/2}}  \bar{C} + \frac{\sqrt{28} \mathcal{N}_{\max}^{3/2}}{\sqrt{\pi n}} \bar{C} 
\end{aligned} 
\end{equation} 
 for a universal constan $\bar{C} < \infty$.  Since by Assumption \ref{ass:sparsity}, $\mathcal{N}_{\max}^2/N^{1/2} = o(1)$, and $n \propto \bar{n} \propto N$, we obtain 
 \begin{equation}
 \sup_{x \in \mathbb{R}} \Big|P\Big(\sum_{i: R_i =1} X_i \le x \Big | \sigma(\mathbf{D}^R,\mathbf{A}, \mathbf{R}, \mathbf{P}, \mathbf{T})\Big) - \Phi(x)\Big| \le  \sqrt{\frac{\mathcal{N}_{\max}^2}{n^{1/2}}  \bar{C} + \frac{\sqrt{28} \mathcal{N}_{\max}^{3/2}}{\sqrt{\pi n}} \bar{C}}  = o(1).
 \end{equation} 
  To prove that the result also holds unconditionally, we may notice that for some arbitrary measure $\mu_N$,  
 \begin{equation}
 \begin{aligned} 
 &\sup_{x \in \mathbb{R}} \Big|\int P\Big(\sum_{i: R_i = 1} X_i \le x \Big | \sigma(\mathbf{D}^R,\mathbf{A}, \mathbf{R}, \mathbf{P}, \mathbf{T})\Big) d\mu_N - \Phi(x)\Big| \\ &\le   \sup_{x \in \mathbb{R}} \int \Big| P\Big(\sum_{i: R_i = 1} X_i \le x \Big | \sigma(\mathbf{D}^R,\mathbf{A}, \mathbf{R}, \mathbf{P}, \mathbf{T})\Big)  - \Phi(x)\Big| d\mu_N \\ &\le \int \sup_{x \in \mathbb{R}}  \Big| P\Big(\sum_{i: R_I = 1} X_i \le x \Big | \sigma(\mathbf{D}^R,\mathbf{A}, \mathbf{R}, \mathbf{P}, \mathbf{T})\Big)  - \Phi(x)\Big| d\mu_N = o(1).  
\end{aligned} 
 \end{equation} 
 This concludes the proof. 
\end{proof}

\begin{cor} Theorem \ref{thm:multipledependence} holds.
\end{cor}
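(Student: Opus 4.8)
The plan is to reproduce the argument in the proof of Theorem \ref{thm:asym} essentially verbatim, with the single structural change that the dependency neighborhoods are enlarged from first-order neighbors $N_i$ to neighbors up to distance $M$, namely $\cup_{u=1}^{M} N_i^u$. First I would condition on the sigma-algebra $\sigma(D_{[\tilde n]}, A, R_{[N]}, \theta_{[N]}, \mathcal{I})$ and observe, as before, that given $A$ this is equivalent to conditioning on $\tilde{\mathcal{H}}$, since $\tilde{\mathcal{J}}_1 = \{\mathcal{I}\cup_{j\in\mathcal{I}}\cup_{u=1}^M N_j^u\}$ is a deterministic function of $A$ and $\mathcal{I}$ and $[N]$ is fixed. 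Conditional unbiasedness of $\hat\tau(w_N)$ is supplied directly by Theorem \ref{thm:1C}, so the centered, rescaled summands $X_i$ of the form in Equation \eqref{eqn:x_i} are conditionally mean zero and the target self-normalized statistic can be written as a normalized sum $\sum_{i: R_i=1} X_i$ of such summands.

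The key structural step is to identify the correct dependency graph for $\{Y_i\}_{i: R_i = 1}$ after this conditioning. Under Assumption \ref{ass:higherorder}, the unobservables $\varepsilon_i$ and $\varepsilon_j$ are independent given $(A,\theta_{[N]})$ whenever $i$ and $j$ are separated by more than $M$ edges, so the natural dependency neighborhood of $i$ is $\cup_{u=1}^{M} N_i^u$. I would then argue, exactly in parallel with the proof of Theorem \ref{thm:asym}, that the experimental restriction in Assumption \ref{ass:covb} preserves this $M$-local dependence after conditioning on the design: because every participant lies in $\tilde{\mathcal{H}} = [N]\setminus \tilde{\mathcal{J}}_1$, condition (A) of Assumption \ref{ass:covb} makes the participants' unobservables independent of $(D_{[\tilde n]}, R_{[N]})$ given $(A,\theta_{[N]},\mathcal{I})$, while condition (B) guarantees the pilot selection does not perturb the law of $\varepsilon_{[N]}$. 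Hence $\{\varepsilon_i\}_{i: R_i=1}$ remains an $M$-local dependency graph conditionally, and since each $Y_i$ is a fixed function of $\varepsilon_i$ given the design (via Equation \eqref{eqn:first}), the outcomes $\{Y_i\}_{i: R_i=1}$ inherit this dependency graph, now with maximum neighborhood size $D := \max_i \big|\cup_{u=1}^M N_i^u\big| \lesssim \mathcal{N}_N^{M}$.

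With the dependency graph in hand, I would invoke Lemma \ref{lem:locally_dep} with $\sigma^2 = 1$, the rescaling being absorbed into $X_i$. The conditional moment bounds $\mathbb{E}[|X_i|^3\mid\cdot]\lesssim n^{-3/2}$ and $\mathbb{E}[X_i^4\mid\cdot]\lesssim n^{-2}$ carry over unchanged, using boundedness of $Y_i$ and the fact that $n V_N(w_N)$ is bounded away from zero. This yields a conditional Wasserstein bound of order $D^2 n^{-1/2} + D^{3/2} n^{-1/2}$, hence a conditional Kolmogorov bound of order $\sqrt{D^2 n^{-1/2} + D^{3/2} n^{-1/2}}$. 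Because Assumption \ref{ass:higherorder} imposes $\mathcal{N}_N < \bar C < \infty$, the effective degree $D\lesssim \mathcal{N}_N^{M}$ is uniformly bounded and this quantity vanishes; more generally, under the relaxation noted after Assumption \ref{ass:higherorder} it suffices that the $M$-th order degree of the sampled units satisfies $D = o(n^{1/4})$. The unconditional conclusion then follows by integrating the conditional Kolmogorov bound against the distribution of the conditioning variables, exactly as in the final display of the proof of Theorem \ref{thm:asym}.

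The main obstacle is conceptual rather than computational: one must verify that conditioning on the full design $\sigma(D_{[\tilde n]}, R_{[N]}, A, \theta_{[N]}, \mathcal{I})$ does not induce spurious long-range dependence among the participants' outcomes. This is precisely what the exclusion of the $M$-th order neighbors of the pilot encoded in $\tilde{\mathcal{J}}_1$ guarantees, so the delicate part is checking that Assumption \ref{ass:covb} is exactly strong enough to keep $\{Y_i\}_{i:R_i=1}$ an $M$-local dependency graph conditionally. Once that is established and $D\lesssim\mathcal{N}_N^{M}$ is recognized as the relevant neighborhood size, the remainder reduces to substituting $\mathcal{N}_N^{M}$ for $\mathcal{N}_N$ in the Stein bound of Theorem \ref{thm:asym}.
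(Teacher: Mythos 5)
Your proposal is correct and follows essentially the same route as the paper: both arguments rerun the proof of Theorem \ref{thm:asym} after replacing the first-order dependency neighborhoods by the $M$-th order neighborhoods $\cup_{u=1}^{M} N_i^u$, apply Lemma \ref{lem:locally_dep} to the enlarged dependency graph, and use the uniform bound on the maximum degree in Assumption \ref{ass:higherorder} to verify that the squared maximum degree of that graph is $o(n^{1/2})$. The extra detail you supply (invoking Theorem \ref{thm:1C} for conditional centering and checking that Assumption \ref{ass:covb} preserves $M$-local dependence after conditioning on the design) is exactly what the paper's phrase ``follows similarly'' leaves implicit.
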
 
\begin{proof} 
The proof follows similarly to the above theorem with an important modification. We observe that the variables $X_i$ in Equation \eqref{eqn:x_i} do not follow a dependence graph since they exhibit $M$ degree dependence. Instead, we construct a graph where two individuals are connected if they are connected by at most $M$ edges in the original graph. In such a graph, the variables $X_i$ as defined in Equation \eqref{eqn:x_i} satisfy the local dependence assumption in Lemma \ref{lem:locally_dep}. In order for the lemma to apply, we need to show that the maximum degree of such a graph, denoted as $\bar{\mathcal{N}}_M^2$ satisfies the condition $\bar{\mathcal{N}}_M^2/n^{1/2} = o(1)$. This follows under Assumption \ref{ass:higherorder}, since the maximum degree is uniformly bounded. This completes the proof. 
\end{proof}

 \begin{thm} \label{thm:variance} Let Assumptions  \ref{ass:model}, \ref{ass:modelb} \ref{ass:sparsity}, \ref{ass:moment}, \ref{ass:weights_stable}(ii), \ref{ass:inference_conditions} hold. Let $n = \mathbf{1}^\top \mathbf{R}$. Then  
 \begin{equation} 
 \frac{n \widehat{V} }{ n \mathbb{V}(\hat{\Gamma} | \mathbf{R}, \mathbf{D}^R, \mathbf{A}, \mathbf{T}) }   - 1 \rightarrow_p 0, 
 \end{equation} 
 with $\widehat{V}$ defined in Equation \eqref{eqn:final_vv}. 
 \end{thm} 
 
\begin{proof}[Proof of Theorem \ref{thm:variance}]

Let $\mu_i = m(D_i, g_i(D_{\mathcal{N}_i}), T_i)$. Similarly, denote $w_i = w_{\mathbf{A}, \mathbf{T}}(\mathbf{R}, \mathbf{D}^R)$. Because $n \mathbb{V}(\hat{\Gamma} | \mathbf{R}, \mathbf{D}^R, \mathbf{A}, \mathbf{T}) > 0$, it suffices to show that 
$$
\Big|n \widehat{V} - n \mathbb{V}(\hat{\Gamma} | \mathbf{R}, \mathbf{D}^R, \mathbf{A}, \mathbf{T})\Big| \rightarrow_p 0. 
$$ 
Note that under Assumption \ref{ass:modelb}, we can write 
$$
 n \mathbb{V}(\hat{\Gamma} | \mathbf{R}, \mathbf{D}^R, \mathbf{A}, \mathbf{T}) = \frac{1}{n} \sum_i  R_i w_i \sum_{j \in \mathcal{N}_i \cup \{i\}} R_j w_j   \mathbb{E}[(Y_i - \mu_i)(Y_j - \mu_j) ]. 
$$ 
 Consider now $n \widehat{V}$. We can write 
 $$
 n \widehat{V} = \frac{1}{n} \sum_{i=1}^N R_i w_i \sum_{j \in \mathcal{N}_i \cup \{i\}} R_j w_j (Y_i - \hat{m}_i)(Y_j - \hat{m}_j). 
 $$ 
 Also, define 
 $$
 n \tilde{V} = \frac{1}{n} \sum_{i=1}^N R_i w_i \sum_{j \in \mathcal{N}_i \cup \{i\}} R_j w_j (Y_i - \mu_i)(Y_j - \mu_j). 
 $$ 
 We can write 
 $$
 \footnotesize 
 \begin{aligned} 
 |n \widehat{V} - n \tilde{V}| & = |\frac{1}{n} \sum_{i=1}^N R_i w_i \sum_{j \in \mathcal{N}_i \cup \{i\}} R_j w_j (Y_i - \hat{m}_i + \mu_i - \mu_i)(Y_j - \hat{m}_j) - n \tilde{V}| \\ 
 &\le |\frac{1}{n} \sum_{i=1}^N R_i w_i \sum_{j \in \mathcal{N}_i \cup \{i\}} R_j w_j (\mu_i - \hat{m}_i)(Y_j - \hat{m}_j)| + |\frac{1}{n} \sum_{i=1}^N R_i w_i \sum_{j \in \mathcal{N}_i \cup \{i\}} R_j w_j (Y_i - \mu_i)(Y_j - \hat{m}_j) - n \tilde{V}| \\ 
 &\le |\frac{1}{n} \sum_{i=1}^N R_i w_i \sum_{j \in \mathcal{N}_i \cup \{i\}} R_j w_j (\mu_i - \hat{m}_i)(Y_j - \hat{m}_j)| +  |\frac{1}{n} \sum_{i=1}^N R_i w_i \sum_{j \in \mathcal{N}_i \cup \{i\}} R_j w_j (Y_i - \mu_i)(\mu_j - \hat{m}_j)| \\ 
 &\lesssim \mathcal{N}_{\max}\max_i |\hat{m}_i - \mu_i| \le \mathcal{N}_{\max} \max_{s,t,d} |\hat{m}(d,s,t) - m(d,s,t)|  = o_p(1)
 \end{aligned} 
 $$ 
 where for the second and third inequality we used the triangular inequality, for the fourth inequality we used a simple bound combined with the assumption that $Y_i, \mu_i$ and $\hat{m}_i$ (Assumptions \ref{ass:moment} and \ref{ass:inference_conditions}(iii)) are uniformly bounded and the last equality follows from Assumption \ref{ass:inference_conditions}(ii) since $\max_{s,t,d} |\hat{m}(d,s,t) - m(d,s,t)| = o_p(\underline{n}^{-1/4})$ and Assumption \ref{ass:sparsity}, since $\mathcal{N}_{\max}/N^{1/4} = o(1)$ and $\underline{n} \propto \bar{n} \propto N$. 
 
 To complete the proof it suffices to show that $\mathbb{V}(n \tilde{V} | \mathbf{R}, \mathbf{D}^R, \mathbf{A}, \mathbf{T}) = o(1)$ so that $n\tilde{V}$ is consistent for $n \mathbb{V}(\hat{\Gamma} | \mathbf{R}, \mathbf{D}^R, \mathbf{A}, \mathbf{T})$ conditional on $\mathbf{R}, \mathbf{D}^R, \mathbf{A}, \mathbf{T}$. To show this, define 
 $$
 L_i = R_i w_i \sum_{j \in \mathcal{N}_i \cup \{i\}} R_j w_j (Y_i - \mu_i)(Y_j - \mu_j). 
 $$  Note that under Assumption \ref{ass:modelb}, $L_i \perp L_k | \mathbf{R}, \mathbf{D}^R, \mathbf{A}, \mathbf{T}$ if $k$ is \textit{neither} a friend of $i$ nor $k$ is not a friend of a friend of $i$. It follows that $i$ has at most $\mathcal{N}_{\max}^2 + \mathcal{N}_{\max} \le 2\mathcal{N}_{\max}^2$ many second-degree friends. 
 
 Therefore, we can write 
 $$
 \mathbb{V}(n \tilde{V} | \mathbf{R}, \mathbf{D}^R, \mathbf{A}, \mathbf{T}) = \mathbb{V}\Big(\frac{1}{n} \sum_{i=1}^N R_i L_i| \mathbf{R}, \mathbf{D}^R, \mathbf{A}, \mathbf{T}\Big) = \frac{1}{n^2} \sum_{i=1}^N R_i \sum_{j \in \mathcal{N}_i \cup \mathcal{N}_i^2 \cup \{i\}} R_j \mathrm{Cov}(L_i, L_j)
 $$ 
 where $\mathcal{N}_i^2$ denotes the second order degree friends. From Assumption \ref{ass:moment} and Assumption \ref{ass:weights_stable}(ii), $L_i \lesssim \mathcal{N}_{\max}$ so their $\mathrm{Cov}(L_i, L_j) \le \mathcal{N}_{\max}^2$. Therefore, we can write 
 $$
  \mathbb{V}(n \tilde{V} | \mathbf{R}, \mathbf{D}^R, \mathbf{A}, \mathbf{T}) \lesssim \frac{1}{n^2} n \mathcal{N}_{\max}^4= o(1)
 $$ 
  where the last equality follows from Assumption \ref{ass:sparsity}. The proof completes.  
  \end{proof} 
  
\begin{cor} Theorem \ref{thm:asym_t} holds. 
\end{cor}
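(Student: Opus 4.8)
The plan is to deduce Theorem~\ref{thm:asym_t} from the two preceding appendix results by a Slutsky-type argument: Theorem~\ref{thm:asym} supplies asymptotic normality of the statistic studentized by the \emph{true} conditional variance, while Theorem~\ref{thm:variance} supplies ratio consistency of the plug-in variance estimator. Fix $w_N \in \mathcal{W}_N$; since $\mathcal{W}_N$ is finite dimensional by Assumption~\ref{ass:cons_var}(iii), establishing the limit for each such $w_N$ suffices.

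First I would factor the feasible studentized statistic as
\[
\frac{\sqrt{n}(\hat{\tau}(w_N) - \tau_n(w_N))}{\sqrt{n \hat{V}_n(w_N)}}
= \frac{\sqrt{n}(\hat{\tau}(w_N) - \tau_n(w_N))}{\sqrt{n V_N(w_N)}}\times\sqrt{\frac{V_N(w_N)}{\hat{V}_n(w_N)}}.
\]
This factorization is valid on the event $\{\hat{V}_n(w_N) > 0\}$, which has probability tending to one: Assumption~\ref{ass:moment}(B) gives $n V_N(w_N) > 0$ almost surely, and the ratio consistency of Theorem~\ref{thm:variance} forces $\hat{V}_n(w_N)$ to be bounded away from zero with probability approaching one. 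For the first factor, Theorem~\ref{thm:asym}---whose hypotheses (Assumptions~\ref{ass:model},~\ref{ass:cov},~\ref{ass:moment}) are maintained here---gives convergence in distribution to $\mathcal{N}(0,1)$. For the second factor, Theorem~\ref{thm:variance} gives $V_N(w_N)/\hat{V}_n(w_N) \rightarrow_p 1$, and applying the continuous mapping theorem to the square root yields $\sqrt{V_N(w_N)/\hat{V}_n(w_N)} \rightarrow_p 1$. Slutsky's theorem then delivers convergence in distribution of the product to $\mathcal{N}(0,1)$, which is exactly the claim of Theorem~\ref{thm:asym_t} for the fixed $w_N$.

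There is essentially no analytical obstacle, since all the genuine work is carried out in Theorems~\ref{thm:asym} and~\ref{thm:variance}; the only point demanding care is that the two results be invoked under a common probability measure, so that Slutsky's lemma applies verbatim. Theorem~\ref{thm:asym} is first proved conditionally on $\sigma(D_{[\tilde{n}]}, A, R_{[N]}, \theta_{[N]}, \mathcal{I})$ and then integrated against the marginal of the design to an \emph{unconditional} convergence-in-distribution statement, whereas Theorem~\ref{thm:variance} is stated as an unconditional convergence in probability; both therefore refer to the same limiting law, and the finite dimensionality of $\mathcal{W}_N$ promotes the pointwise conclusion to one holding simultaneously for every estimand of interest.
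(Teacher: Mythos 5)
Your proposal is correct and follows exactly the paper's own route: the paper also deduces Theorem \ref{thm:asym_t} by combining Theorem \ref{thm:asym} (asymptotic normality with the true variance) and Theorem \ref{thm:variance} (ratio consistency of the plug-in variance) via Slutsky's theorem. Your write-up simply spells out the factorization and the positivity/measure-compatibility details that the paper leaves implicit.
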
 
\begin{proof}
The proof follows from Theorem \ref{thm:asym} and Theorem \ref{thm:variance} by Slutsky theorem. 
\end{proof}

\section{Optimization: MILP for Difference in Means Estimators} \label{app:MILP}

In this sub-section we discuss the optimization algorithm for difference in means estimators. 

\begin{thm} Consider the difference in means estimator in Example \ref{exmp:diff_means} with $g_i(D_{\mathcal{N}_i}) = \sum_{k \in \mathcal{N}_i} D_k$. Then the optimization program in Equation \eqref{eqn:design1} can be written as a mixed-integer fractional linear program. 
\end{thm} 

Although NP-hard, mixed-integer fractional linear programs are typically solved by off-the-shelf optimization routines and admits convenient representations \citep{charnes1962programming}. Existing routines such as those implied by commercial software provide useful upper bounds on the error loss when considering stopping time. Therefore, researchers may solve the optimization program up-to an observed error loss reported by the optimization routine.\footnote{Note that to solve the optimization problem over multiple weights, we can add an auxiliary variables $\lambda$, and solve the following program 
\begin{equation} 
\min \lambda, \quad \lambda \ge f_{w} \forall w \in \{w^1, \cdots, w^E\}
\end{equation}
where $f_{w}$ denotes the linearized objective function for each weight $w$ described in the proof below.}

\begin{proof}

To show that the optimization problem admits a mixed-integer linear program formulation, we first introduce the following lemma, which follows similarly to what discussed in \cite{viviano2019policy}. 

\begin{lem} \label{prop:opt} Any function $f_i$ that depends on $D_i$ and $\sum_{k \in \mathcal{N}_i} D_k$ can be written as 
\begin{equation} 
f_i(D_i, \sum_{k \in  \mathcal{N}_i} D_k) = \sum_{h = 0}^{|\mathcal{N}_i|}(f_i(1, h) - f_i(0,h))u_{i,h}  +  (t_{i,h,1} + t_{i,h,2} - 1)f_i(0,h), 
\end{equation} 
where
$u_{i,h}, t_{i,h,1}, t_{i,h,2}$ are defined by the following linear inequalities. 
\begin{equation} \label{eqn:constraints} 
\small 
\begin{aligned}
&(A) \quad \frac{D_i + t_{i,h,1} + t_{i,h,2}}{3} - 1 < u_{i,h} \le  \frac{D_i + t_{i,h,1} + t_{i,h,2}}{3}, \quad u_{i,h} \in \{0,1\} \quad \forall h \in \{0, ..., |\mathcal{N}_i|\}, \\ 
&(B) \quad  \frac{(\sum_k A_{i,k} D_k - h)}{ |\mathcal{N}_i| + 1} < t_{i,h,1} \le  \frac{(\sum_k A_{i,k} D_k - h)}{ |\mathcal{N}_i|+1} + 1, \quad  t_{i,h,1} \in \{0,1\}, \quad \forall h \in \{0, ..., |\mathcal{N}_i|\} \\
&(C) \quad  \frac{( h -\sum_k A_{i,k} D_k)}{  |\mathcal{N}_i| + 1} < t_{i,h,2} \le \frac{(h - \sum_k A_{i,k} D_k)}{  |\mathcal{N}_i| + 1} + 1, \quad  t_{i,h,2} \in \{0,1\}, \quad \forall h \in  \{0, ....,|\mathcal{N}_i|\}.
\end{aligned} 
\end{equation} 

\end{lem}

\begin{proof}
We define the following variables: 
$$
t_{i,h,1} = 1\{\sum_{k \in \mathcal{N}_i} D_k \ge h\}, \quad  
t_{i,h,2} = 1\{\sum_{k \in \mathcal{N}_i} D_k \le h\}, \quad h \in  \{0, ....,|\mathcal{N}_i|\}.
$$

The first variable is one if at least $h$ neighbors are treated, and the second variable is one if at most $h$ neighbors are treated. 

 Since each unit has $|\mathcal{N}_i|$ neighbors and zero to $|\mathcal{N}_i|$ neighbors can be treated, there are in total $\sum_{i = 1}^n (2|\mathcal{N}_i|  + 2)$ of such variables.
 
  The variable $t_{i,h,1}$ can be equivalently be defined as 
\begin{equation}
\quad  \frac{(\sum_k A_{i,k} D_k - h)}{ |\mathcal{N}_i| + 1} < t_{i,h,1} \le  \frac{(\sum_k A_{i,k} D_k - h)}{ |\mathcal{N}_i|+1} + 1, \quad  t_{i,h,1} \in \{0,1\}.
\end{equation} 
The above equation holds for the following reason. Suppose that $h < \sum_k A_{i,k} D_k$. Since $\frac{(\sum_k A_{i,k} D_k - h)}{ |\mathcal{N}_i|+1} < 0$, the left-hand side of the inequality is negative and the right hand side is positive and strictly smaller than one. 
Since $t_{i,h,1}$ is constrained to be either zero or one, in such case, it is set to be zero. Suppose now that $h \ge \sum_k A_{i,k} D_k$. Then the left-hand side is bounded from below by zero, and the right-hand side is bounded from below by one. Therefore $t_{i,h,1}$ is set to be one. 
Similarly, we can write 
\begin{equation}
 \frac{( h -\sum_k A_{i,k} D_k)}{  |\mathcal{N}_i| + 1} < t_{i,h,2} \le \frac{(h - \sum_k A_{i,k} D_k)}{  |\mathcal{N}_i| + 1} + 1, \quad  t_{i,h,2} \in \{0,1\}.
\end{equation} 
By definition, 
\begin{equation} 
t_{i,h,1} + t_{i,h,2} = \begin{cases} &1 \text{ if and only if } \sum_{k \in \mathcal{N}_i} D_ k \neq h \\
&2 \text{ otherwise }.
\end{cases} 
\end{equation} 
 Therefore, we can write
\begin{equation} 
\frac{1}{n} \sum_{i = 1}^n \sum_{h = 0}^{|\mathcal{N}_i|}(f_i(1, h) - f_i(0,h))D_i (t_{i,h,1} + t_{i,h,2} - 1)  +  (t_{i,h,1} + t_{i,h,2} - 1)f_i(0,h).
\end{equation} 
Finally, we introduce the variable $u_{i,h} = D_i (t_{i,h,1} + t_{i,h,2} - 1)$. Since $D_i, t_{i,h,1}, t_{i,h,2} \in \{0,1\}$ it is easy to show that such variable is completely determined by the above constraint. This completes the proof. 
\end{proof} 
Next, define 
$$
\begin{aligned} 
&\tilde{\sigma}_i^2(D_i, \sum_{k \in \mathcal{N}_i} D_k) = \sigma(T_i, D_i, \sum_{k \in \mathcal{N}_i} D_k) \\
&\tilde{\eta}_{i, j}(D_i, \sum_{k \in \mathcal{N}_i} D_k, D_j, \sum_{k \in N_j} D_k) = \eta(T_i, D_i, \sum_{k \in \mathcal{N}_i} D_k, T_j, D_j, \sum_{k \in N_j} D_k)
\end{aligned} 
$$ 
the variance function and $\tilde{\eta}_{i,j}(.)$ the covariance for unit $i$ and $j$, given their number of neighbors and the observed treatment assignments. 

We define 
\begin{equation} 
\begin{aligned} 
&v_i^1(D_i, \sum_{k \in \mathcal{N}_i} D_k) = 1\{D_i = d_1, \sum_{k \in \mathcal{N}_i} D_k = s_1, T_i = l\}, \\ &v_i^0(D_i, \sum_{k \in \mathcal{N}_i} D_k) = 1\{D_i = d_0, \sum_{k \in \mathcal{N}_i} D_k = s_0, T_i = l\}.
\end{aligned} 
\end{equation} 
The objective function reads as follows. 
\begin{equation} \label{eqn:j}
\tiny
\begin{aligned} 
&\sum_{i: R_i = 1} R_i \Big(\frac{v_i^1(D_i, \sum_{k \in \mathcal{N}_i} D_k) \tilde{\sigma}_i(D_i, \sum_{k \in \mathcal{N}_i} D_k)}{\sum_{i: R_i = 1} R_i v_i^1(D_i, \sum_{k \in \mathcal{N}_i} D_k)/n}\Big)^2 + R_i \Big(\frac{v_i^0(D_i, \sum_{k \in \mathcal{N}_i} D_k) \tilde{\sigma}_i(D_i, \sum_{k \in \mathcal{N}_i} D_k)}{\sum_{i: R_i = 1} R_i v_i^0(D_i, \sum_{k \in \mathcal{N}_i} D_k)/n}\Big)^2   \\ &+  \frac{R_i v_i^1(D_i, \sum_{k \in \mathcal{N}_i} D_k)}{\sum_{i: R_i = 1} R_i v_i^1(D_i, \sum_{k \in \mathcal{N}_i} D_k)/n} \times \\ &\times 
\sum_{j \in \mathcal{N}_i} R_j \Big(\frac{v_j^1(D_j, \sum_{k \in N_j} D_k)}{\sum_{i: R_i = 1} R_i v_i^1(D_i, \sum_{k \in \mathcal{N}_i} D_k)/n} - \frac{v_j^0(D_j, \sum_{k \in N_j} D_k) }{\sum_{i: R_i = 1} R_i v_i^0(D_i, \sum_{k \in \mathcal{N}_i} D_k)/n}\Big) \tilde{\eta}_{i,j}\Big(D_i, \sum_{k \in \mathcal{N}_i} D_k, D_j, \sum_{k \in N_j} D_k\Big) \\ 
&- \frac{R_i v_i^0(D_i, \sum_{k \in \mathcal{N}_i} D_k)}{\sum_{i: R_i = 1} R_i v_i^0(D_i, \sum_{k \in \mathcal{N}_i} D_k)/n} \times \\ &\times 
\sum_{j \in \mathcal{N}_i}R_j  \Big(\frac{v_j^1(D_j, \sum_{k \in N_j} D_k)}{\sum_{i: R_i = 1} R_i v_i^1(D_i, \sum_{k \in \mathcal{N}_i} D_k)/n} - \frac{v_j^0(D_j, \sum_{k \in N_j} D_k)}{\sum_{i: R_i = 1} R_i v_i^0(D_i, \sum_{k \in \mathcal{N}_i} D_k)/n}\Big)  \tilde{\eta}_{i,j}\Big(D_i, \sum_{k \in \mathcal{N}_i} D_k, D_j, \sum_{k \in N_j} D_k\Big).
\end{aligned} 
\end{equation} 

We now introduce the following auxiliary variables: $n \times \sum_{i: R_i = 1} |\mathcal{N}_i|$   variables $t_{i,h,1 } = 1\{\sum_{k \in \mathcal{N}_i} D_k \ge h\}$ and  $n \times \sum_{i: R_i = 1} |\mathcal{N}_i|$   variables $t_{i,h,2 } = 1\{\sum_{k \in \mathcal{N}_i} D_k \le h\}$. We define $\tilde{t}_{i,h} = t_{i,h,1} + t_{i,h,2} - 1$ and we define $u_{i,h} = D_i \times \tilde{t}_{i,h}$. Such variables are fully characterize by the two linear constraints for each variable as discussed in Lemma \ref{prop:opt} and the 0-1 constraint for each variable. By Lemma \ref{prop:opt}, each function or product of functions of the variables $(D_i, \sum_{k \in \mathcal{N}_i} D_k)$ can now be described as a linear function of these new decision variables.  
Consider for example, $(v_i^1(D_i, \sum_{k \in \mathcal{N}_i} D_k) \tilde{\sigma}_i(D_i, \sum_{k \in \mathcal{N}_i} D_k))^2$ first.
Then such function is rewritten as 
\begin{equation} 
\small 
\begin{aligned} 
& (v_i^1(D_i, \sum_{k \in \mathcal{N}_i} D_k) \tilde{\sigma}_i(D_i, \sum_{k \in \mathcal{N}_i} D_k))^2 = \\ &\sum_{h=1}^{|\mathcal{N}_i|} (v_i^1(1, h)^2  \tilde{\sigma}_i(1, h)^2- v_i^1(0, h)^2  \tilde{\sigma}_i(0, h)^2)u_{i,h} + v_i^1(0, h)^2  \tilde{\sigma}_i(0, h)^2 \tilde{t}_{i,h}.
\end{aligned} 
\end{equation} 
 
Similarly, consider the following function
\begin{equation} \label{lem:function_K}
K(D_i, D_j, \sum_{k \in \mathcal{N}_i} D_k, \sum_{k \in N_j} D_k):= v_i^1(D_i, \sum_{k \in \mathcal{N}_i} D_k)v_j^1(D_j, \sum_{k \in N_j} D_k) \tilde{\eta}_{i,j}\Big(D_i, \sum_{k \in \mathcal{N}_i} D_k, D_j, \sum_{k \in N_j} D_k\Big).
\end{equation} 
By Lemma \ref{prop:opt}, the left-hand side of Equation \eqref{lem:function_K} can be written as 
\begin{equation} \label{eqn:aaa}
\sum_{h = 0}^{|\mathcal{N}_i|} \Big(K(1, D_j, h, \sum_{k \in N_j} D_k) - K(0, D_j, h, \sum_{k \in N_j} D_k)\Big)u_{i,h} + \tilde{t}_{i,h}K(0, D_j, h, \sum_{k \in N_j} D_k). 
\end{equation} 
We can now linearize the function and obtain the following equivalent formulation 

\begin{equation} \label{eqn:aaa}
\begin{aligned} 
\sum_{h' = 0}^{|N_j|} \Big(\sum_{h = 0}^{|\mathcal{N}_i|} \Big(&K(1, 1, h, h') - K(0, 1, h, h)\Big)u_{i,h} + \tilde{t}_{i,h}K(0, 1, h, h') \\ &- \Big(K(1, 0, h, h') - K(0, 0, h, h)\Big)u_{i,h} + \tilde{t}_{i,h}K(0, 0, h, h') \Big)u_{j,h'} \\ 
&+ \Big(K(1, 0, h, h') - K(0, 0, h, h)\Big)u_{i,h} \tilde{t}_{j,h'} + \tilde{t}_{i,h}K(0, 0, h, h')\tilde{t}_{j,h'}\Big). 
\end{aligned} 
\end{equation} 

which is quadratic in the decision variables, as defined in Lemma \ref{prop:opt}. 
Therefore, each function in the numerators and denominators of Equation \eqref{eqn:j} can be written as a linear or quadratic function in the decision variables $D_i, u_{i,h}, \tilde{t}_{i,h}$. 
We now linearize the quadratic expressions in the numerator and denominators, to show that also quadratic expression admits a mixed-integer linear representation. To do so we introduce a new set of variables that we denote as 
\begin{equation} 
A_{i,j,h', h'} = u_{i,h} u_{j,h'}, \quad B_{i,j,h', h'} = u_{i,h} \tilde{t}_{j,h'}, \quad C_{i,h,h',h} = \tilde{t}_{i,h} \tilde{t}_{j,h'}.
\end{equation} 

Since each of the above variable takes values in $\{0,1\}$, such variables can be expressed with linear constraints. For instance, $A_{i,j,h', h'}$ is defined as follows. 

\begin{equation} 
\frac{u_{i,h} + u_{j,h'}}{2} - 1 < A_{i,j,h', h'}   \le \frac{u_{i,h} + u_{j,h'}}{2}, \quad  A_{i,j,h', h'} \in \{0,1\}. 
\end{equation} 
This is because if both $u_{i,h}, u_{j,h'}$ are both equal to one, the left hand size is zero, and under the 0-1 constraint, the resulting variable is equal to one. 
This follows similarly also for the other variables. 
Finally, notice that since also $R_i \in \{0,1\}$, the product of $R_i$ for any other 0-1 variable can be similarly linearized.  
Therefore, the above problem reads as a mixed-integer fractional linear program as described for instance in \cite{charnes1962programming}. 
\end{proof}

\end{document}